
\documentclass[11pt]{article}
\usepackage[authoryear]{natbib}

\usepackage{amsmath}
\usepackage{amssymb}
\usepackage{bbm}
\usepackage{graphicx}
\usepackage[usenames,dvipsnames]{color}
\definecolor{burgundy}{rgb}{0.89, 0.11, 0.11}
\definecolor{airforceblue}{rgb}{0.36, 0.54, 0.66}
\definecolor{myblue}{rgb}{0.0,0.0, 0.5}
\usepackage[colorlinks=true,urlcolor=RoyalBlue,linkcolor=BrickRed,citecolor=RoyalBlue]{hyperref}

\usepackage[capitalise]{cleveref} 
\usepackage{enumitem}
\usepackage{amsthm}
\newtheorem{theorem}{Theorem}
\theoremstyle{remark}
\newtheorem{remark}{Remark}
\theoremstyle{proposition}

\newcommand{\change}[1]{{\color{black}#1}}

\newcommand{\changeB}[1]{{\color{black}#1}}

\textheight 24.0cm
\textwidth 16.2cm
\oddsidemargin 0cm
\evensidemargin 0cm
\topmargin  -0.75in
\renewcommand{\baselinestretch}{1}
\usepackage{caption}
\captionsetup{font={small,stretch=0.1}}
\interfootnotelinepenalty=10000
\renewcommand{\thesubsubsection}{\arabic{section}.\arabic{subsubsection}}

\newcommand{\captionfonts}{\normalsize}

\makeatletter  
\long\def\@makecaption#1#2{%
  \vskip\abovecaptionskip
  \sbox\@tempboxa{{\captionfonts #1: #2}}%
  \ifdim \wd\@tempboxa >\hsize
    {\captionfonts #1: #2\par}
  \else
    \hbox to\hsize{\hfil\box\@tempboxa\hfil}%
  \fi
  \vskip\belowcaptionskip}
\makeatother   

\begin{document}
\hspace{13.9cm}

\ \vspace{5mm}\\

{\Large Pulse shape and voltage-dependent synchronization}\\[0.15cm]
{\Large \indent in spiking neuron networks}

\ \vspace{5mm}\\
\indent{\bf Bastian Pietras$^{1,*}$}\\[0.15cm]
\indent{$^{1}$Department of Information and Communication Technologies,\\
\indent$^{ }$ Universitat Pompeu Fabra, T\`anger 122-140, 08018, Barcelona, Spain}\\
\indent{$^*$bastian.pietras@upf.edu}\\
%


\thispagestyle{empty}
\markboth{}{NC instructions}
\ \vspace{-10mm}\\

\begin{center} {\bf Abstract} \end{center}
Pulse-coupled spiking neural networks are a powerful tool to gain mechanistic insights into how neurons self-organize to produce coherent collective behavior.
These networks use simple spiking neuron models, such as the $\theta$-neuron or the quadratic integrate-and-fire (QIF) neuron, that replicate the essential features of real neural dynamics. 
Interactions between neurons are modeled with infinitely narrow pulses, or spikes, rather than the more complex dynamics of real synapses. 
To make these networks biologically more plausible, it has been proposed that they must also account for the finite width of the pulses, which can have a significant impact on the network dynamics. 
However, the derivation and interpretation of these pulses is contradictory and the impact of the pulse shape on the network dynamics is largely unexplored.
Here, I take a comprehensive approach to pulse-coupling in networks of QIF and $\theta$-neurons. 
I argue that narrow pulses activate voltage-dependent synaptic conductances and show how to implement them in QIF neurons such that their effect can last through the phase after the spike.
Using an exact low-dimensional description for networks of globally coupled spiking neurons, I prove for instantaneous interactions that collective oscillations emerge due to an effective coupling through the mean voltage. 
I analyze the impact of the pulse shape by means of a family of smooth pulse functions with arbitrary finite width and symmetric or asymmetric shapes. 
For symmetric pulses, the resulting voltage-coupling is \changeB{not very} effective in synchronizing neurons, but pulses that are slightly skewed to the phase after the spike readily generate collective oscillations. 
The results unveil a voltage-dependent spike synchronization mechanism at the heart of emergent collective behavior, which is facilitated by pulses of finite width and complementary to traditional synaptic transmission in spiking neuron networks.

\newpage
\section{Introduction}
Self-organization in large neural networks crucially relies on rapid and precise, in short, highly effective neuronal communication.
Brisk synaptic interactions allow for emergent collective behavior and can orchestrate neural synchronization, which is believed \change{to be}
fundamental to cognitive functions and consciousness.
A key player in the synaptic transmission process is the spike---as has been nicknamed the action potential of a neuron. 
As a central information unit of the brain it has risen to fame and fortune~\citep{wilson1999spikes,gerstner2002spiking,humphries2021spike}. 
Spikes are believed \change{to be} critical for information processing and coding, and more so as the basis of communication between neurons.
Once the membrane potential of a neuron exceeds some threshold, the soma quickly depolarizes and the neuron ``spikes''.
Straight off, a fast electrical impulse travels along the neuron's axon to the presynaptic knobs, where it triggers various biochemical processes to release neurotransmitters and eventually induce a postsynaptic current in the connected cell~\citep{Destexhe1994,sabatini1999timing,lavi2015shaping,wang2021theory}.

\changeB{In developing a mechanistic understanding of the collective behavior of large neural networks, incorporating a high degree of biological detail is challenging.
For computational and mathematical convenience, 
pulse-coupled spiking neuron networks have been proposed.
This approach proves instrumental in comprehending the information processing capabilities of neurons and enables efficient simulation of neural networks.
In these pulse-coupled spiking neuron networks, synaptic interactions are typically modeled using ``$\delta$-spikes'', defined here as infinitely narrow Dirac $\delta$-pulses emitted by presynaptic neurons at their spike times.
The term ``pulse'' specifically refers to the chemical synaptic transmission process, whereas ``spike'' signifies the firing of an action potential.
Biologically intricate synaptic transmission is reduced to the causal effect of a Dirac $\delta$-pulse---which in fact resembles a spike---initiating a response in the connected postsynaptic neuron.
A similar reductionist approach is integral to spiking neuron models, focusing on subthreshold membrane properties while excluding mechanisms responsible for generating action potentials (i.e., voltage-dependent sodium and potassium channels). 
The firing of an action potential, or a neuron's ``spike'', occurs instantaneously upon reaching a specific membrane potential threshold, yet its actual dynamics is left unspecified. 
The neuron's spike induces a presynaptic pulse that will be perceived by any connected postsynaptic neuron.
While often simplified to a $\delta$-spike, this spike-pulse interaction at the presynaptic site can assume various forms and shape the collective dynamics---this is a central theme of this work.}

\changeB{On an individual level, the assumption of instantaneous spikes is most successfully caricatured in integrate-and-fire neuron models~\citep{Burkitt_2006}. 
But also in biophysically realistic Hodgkin-Huxley-like conductance-based neuron models, the spike generation can be very rapid compared to relatively slow subthreshold integration.
The separation of time scales becomes extreme if neurons are Class 1 excitable and near the onset of firing. 
\cite{Ermentrout_Kopell_1986} proved that any such Class 1 excitable neuron can be transformed into a canonical, one-dimensional phase model---the $\theta$-neuron; see \cref{fig:cartoon}.
It is one of the simplest spiking neuron models and closely related to the quadratic integrate-and-fire (QIF) neuron~\citep{Ermentrout_1996}.
}
\begin{figure*}
    \centering
    \includegraphics[width=\textwidth]{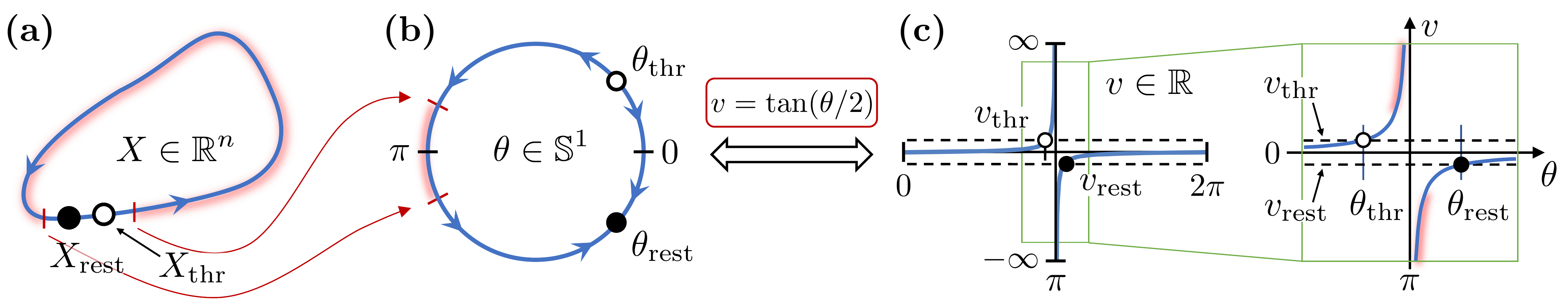}
    \caption{Reduction from (a) a high-dimensional, conductance-based neuron model close to a SNIC bifurcation to (b) the one-dimensional canonical $\theta$-neuron model with $\eta<0$. The trajectory describing the action potential along the limit cycle is shaded in red and compressed around $\pi$ in the $\theta$-neuron.
    Via the transform $v=\tan(\theta/2)$, the $\theta$-neuron is equivalent to (c) the QIF neuron when reset and peak values are taken at infinity.}
    \label{fig:cartoon}
\end{figure*}

In network models of these spiking neurons, the default description of synaptic interactions is with $\delta$-spikes, but a growing number of studies have proposed synaptic transmission via pulses of finite width. 
This raises the question how pulses of finite width should be interpreted.
They may represent the release of neurotransmitters at the presynaptic site, the conversion of neurotransmitter signals into postsynaptic currents at the postsynaptic site, or a combination of both. Despite widespread use, the nature of these pulses remains largely unclear. 
Meanwhile, the shape of the pulse is critical for neuronal synchronization and can either facilitate or impede collective oscillations. 
Currently, however, there is no definitive understanding how the pulse shape affects collective dynamics.

In this paper I revisit the use of pulse-coupling in networks of QIF and $\theta$-neurons to provide, first, a clear biological interpretation for pulses of finite width and, second, a comprehensive view on the impact of the pulse shape on the collective dynamics of globally coupled spiking neurons.
A better understanding of these fundamental aspects of synaptic transmission will greatly contribute to our understanding of neural network behavior.

\subsection{The ``pulse-problems'' of $\boldsymbol{\theta}$- and QIF neurons}
\label{subsec:1A}
QIF and $\theta$-neurons are two paradigmatic spiking neuron models. 
Each one is a canonical model for biologically detailed conductance-based neurons and can be derived from realistic, high-dimensional neuronal dynamics~\citep{Hoppensteadt_Izhikevich_1997}.
On the individual level, QIF and $\theta$-neurons are closely related; they become equivalent if threshold and reset values of the QIF neuron are taken at infinity (\cref{fig:cartoon}).
In network models, however, pulsatile synaptic interactions between QIF neurons are often modeled differently from those between $\theta$-neurons.
This dichotomy stems in part from the difficulty \changeB{of deriving} canonical network models from networks of synaptically coupled conductance-based neurons.
Such a network reduction is more challenging, if at all possible, than a single neuron reduction~\citep{Hoppensteadt_Izhikevich_1997,Pietras_Daffertshofer_2019}. 
Therefore, pulse-coupled spiking neural networks often lack mathematical rigor, especially when refraining from $\delta$-spike interactions (but see \cref{sec:2}).

Given the opposing approaches to pulse-coupling in networks of QIF or of $\theta$-neurons, it is inevitable to question the biological plausibility of pulses of finite width. 
Shall those pulses replicate the biochemical processes at the presynaptic site (converting the presynaptic action potential into the release of neurotransmitters), at the postsynaptic site (converting the neurotransmitter signal into a postsynaptic current), or at both sites?  
Moreover, and to retain the computational advantages of spiking networks, pulses are usually defined as functions of the state variable of the presynaptic neuron.
This can be limiting because chemical synaptic transmission is a dynamic process that, \change{in fact}, involves both the presynaptic and the postsynaptic neuron.
It would thus be consistent to restrict the interpretation of those pulses to the presynaptic site of a synapse, where the pulse reflects the conversion from an action potential into the release of neurotransmitter.
The state variables of the QIF and of the $\theta$-neuron, however, do not describe realistic voltage traces in the course of an action potential---in contrast to conductance-based neuron models (from which they may have been reduced). It is therefore not clear if the pulse function can actually relate the state variables to voltage-dependent mechanisms that eventually trigger neurotransmitter release.
In \cref{sec:2,sec:3}, I will revisit pulsatile interactions between $\theta$-neurons and between QIF neurons, and propose biologically plausible interpretations for pulses of finite width with symmetric and asymmetric shapes.



Next to a clear and biologically plausible interpretation, the other ``pulse-problem'' is the largely unexplored effect of the pulse shape on the collective dynamics of pulse-coupled QIF or $\theta$-neurons.
Pulses of finite width have frequently been used in the literature to describe synaptic interactions especially between $\theta$-neurons, but also between QIF neurons.
Most of the time, the pulses are assumed \change{to be} symmetric about the neuron's spike time
without varying the shape much. The mechanisms by which the pulse shape can affect the network dynamics, promote synchrony, and induce collective oscillations even for instantaneous coupling, remain largely unknown.

A few insights, though, can be borrowed from the vast results on coupled phase oscillators---the phase-representation of the $\theta$-model invites one to invoke phase oscillator theory; nonetheless, direct analogies should be regarded with care (see \cref{subsec:previous}).
Traditionally, the Winfree model~\citep{winfree1967,winfree1980geometry} has paved the way to study synchronization of periodically firing neurons.
It pinpoints a pulse-response coupling, where the pulse can be shown to result from the interplay between presynaptic action potential and a synaptic activation function (``voltage-dependent conductance'')~\citep{ermentrout_kopell_1990}. 
Furthermore, the Winfree model has allowed for exploring the effect of the response and pulse functions on synchronization properties of the network~\citep{ariaratnam2001phase,pazo_montbrio_2014,gallego_et_al_2017}.
Similarly, coupled active rotators~\citep{Shinomoto_Kuramoto_1986,Kuramoto1987} served as a basis to study how the width of (symmetric) pulses affects collective dynamics~\citep{okeeffe_strogatz_2016}.
In line with the results on the Winfree model, broad pulses were reported to entail collective dynamics that can be different from those generated by narrow pulses~\citep{pazo_montbrio_2014,gallego_et_al_2017,okeeffe_strogatz_2016};
it remains unclear, however, how these results carry over to networks of $\theta$-neurons. 
A crucial tool for distilling the effect of the pulse shape on the collective dynamics of Winfree oscillators and active rotators has been an exact dimensionality reduction first proposed by~\cite{ott_antonsen_2008}.
Although immensely powerful, the ``Ott-Antonsen ansatz'' requires the pulse function to be analytically tractable, which may come at the cost of biological realism.
The introduction of the Ott-Antonsen ansatz in networks of $\theta$-neurons~\citep{luke_barreto_so_2013} has inspired a plethora of $\theta$-neuron network studies using symmetric and broad pulses ever since~\citep{So-Luke-Barreto-14,luke2014macroscopic,Laing_2014,laing_2015,laing2016bumps,laing2016travelling,roulet2016average,laing2017,chandra2017modeling,laing2018dynamics,laing2018chaos,aguiar2019feedforward,lin_barreto_so_2020,laing2020effects,laing2020moving,means2020permutation,blasche2020degree, bick_goodfellow_laing_martens_2020,juettner_martens_2021,omel2022collective,birdac2022dynamics}.
Despite the prevailing uncertainty of their biological interpretation, these pulses nowadays seem well established in the community. \changeB{Still, a comprehensive picture how their width influences collective dynamics is missing.
On top of it,} they are not versatile enough, either, to study the effect of pulse asymmetry.
In a nutshell, a systematic investigation how the pulse shape---be it symmetric or asymmetric---affects the collective dynamics of $\theta$- or QIF neurons has remained elusive.

\subsection{Synopsis \& outline}
\change{I strive for resolving the ``pulse-problems'' of $\theta$- and QIF neurons:
first, by providing a biological interpretation for pulses of finite width; second, by analyzing the impact of the pulse shape on the collective dynamics. 

As to the biophysical interpretation of pulses of finite width, I present two alternative views in \cref{sec:2all} that are equally valid but depend on the particular modeling assumptions.
The connection between $\theta$- and QIF neurons confounds a clear separation of these alternative interpretations, as one can easily transform one model into the other. Yet, there is a subtle difference between the two spiking neuron models (\cref{sec:2intro}).
In the context of weakly coupled Class 1 excitable neurons close to the onset of firing, pulses in the canonical network model of $\theta$-neurons can describe instantaneous synaptic transmission at both the pre- and postsynaptic sites; these pulses can have arbitrary shapes as long as they are sufficiently narrow (\cref{sec:2}).
When one regards QIF neurons as ``the simplest model of a spiking neuron'' \citep{Izhikevich_2007} irrespective of the foregoing network setting, I show in \cref{sec:3} how to introduce voltage-dependent pulses $p(v)$ that replicate the transmission process of conductance-based neurons exclusively at the presynaptic site.
Crucially, I propose a modification of the synaptic activation function $p$ to account for the artificial shape of the QIF's action potential.

The correspondence between the QIF and the $\theta$-model allows one to use the two pulse-interpretations in $\theta$- and QIF networks interchangeably, nonetheless I advise caution not to mix up the respective underlying assumptions. 
Towards analytic tractability, I capitalize on the QIF-$\theta$-correspondence in \cref{subsec:accessible} and approximate the voltage-dependent pulses $p(v)$ by pulses $p_{r,\varphi,\psi}(\theta)$ formulated in terms of the $\theta$-phase via $\theta=2\arctan(v)$. 
The pulse parameters $r,\varphi,\psi$ allow for interpolating between discontinuous $\delta$-spikes and continuous pulses of finite width with symmetric or asymmetric shapes, enabling a systematic study how the pulse shape affects the collective dynamics of the network.

Conveniently,} the family of smooth pulse functions $p_{r,\varphi,\psi}(\theta)$ is admissible to an exact reduction of globally coupled spiking neurons in the thermodynamic limit~\citep{pcp_arxiv_2022}, \change{Here, I build on recent advances in coupled oscillator theory~\citep{ott_antonsen_2008,cestnik_pikovsky_2022,cestnik_pikovsky_2022chaos},
which} allows for a comprehensive analysis of the collective dynamics thanks to an exact mean-field description in terms of the firing rate $R$ and the mean voltage $V$ (\cref{sec:4}).
Taking the population average yields an expression of the mean pulse activity $P_{r,\varphi,\psi}=\langle p_{r,\varphi,\psi}\rangle$ that is fully determined by $R$ and $V$.
For instantaneous synaptic transmission, the exact mean-field dynamics converges towards an invariant two-dimensional manifold~\citep{pcp_arxiv_2022}, on which the ordinary differential equations for the firing rate and voltage (``RV dynamics'') are closed in $R$ and $V$~\citep{montbrio_pazo_roxin_2015}. 
\change{\cref{sec:5} is devoted to the mathematical analysis of the two-dimensional RV dynamics. I analyze how the different pulse parameters---width $r$, asymmetry $\varphi$, and shift $\psi$---affect the region of collective oscillations and 
prove} that collective oscillations emerge due to an effective coupling through the mean voltage~$V$ (\cref{subsec:5A}). 
This voltage-dependence readily arises for global pulse-coupling as $P_{r,\varphi,\psi}=P_{r,\varphi,\psi}(R,V)$ explicitly depends on $V$, except for the limit of $\delta$-spikes, $(r,\varphi,\psi)\to (1,0,\pi)$.
In other words, pulse-coupling generally facilitates collective oscillations through an effective voltage-coupling, 
but if neurons interact via $\delta$-spikes, the recurrent input no longer depends on the mean voltage and collective oscillations become impossible.
Moreover, the pulse shape determines the effectiveness of the pulse-mediated voltage-coupling and, thus, plays a crucial role for the emergence of collective oscillations. 
For symmetric pulses, collective oscillations are confined to a small parameter region and require unrealistically strong inhibition the narrower the pulse (\cref{subsec:3B}).
Additionally, broad symmetric pulses have a nongeneric effect on the collective dynamics that is not present in narrow pulses, see also \citep{pazo_montbrio_2014,okeeffe_strogatz_2016,gallego_et_al_2017}. 
In contrast to symmetric pulses, narrow pulses that are slightly skewed to the phase
after the spike readily generate collective oscillations in networks of inhibitory neurons (\cref{subsec:3C}), whereas pulses that are slightly skewed to the phase before the actual spike generate collective oscillations among \change{excitatory} neurons (\cref{subsec:3D}).
Together, the results shed new light on the voltage-dependent spike synchronization mechanism that is typically not captured in traditional mean-field, or firing rate, models~\citep{Wilson-Cowan_1972}, but which crucially underlies collective oscillations in neural networks.

In the Discussion in \cref{sec:6}, I first review previous approaches to pulse-coupling in networks of spiking neurons, which may have led to misconceptions about the interpretation and the effect of pulses of finite width in networks of $\theta$- and QIF neurons (\cref{subsec:previous}).
I then revisit the three pulse parameters in more detail and draw connections to delayed synaptic interactions and electrical coupling via gap junctions (\cref{subsec:D1}). 
Finally, I return to the question whether instantaneous pulses of finite width can replace more complex synaptic transmission in spiking neuron networks including synaptic kinetics and conductance-based synapses---I argue in the negative (\cref{subsec:D2}).
Conclusions, final remarks and an outlook will be given in \cref{sec:7}.
Mathematical details can be found in \cref{appsec:proof,appsec:WB,appsec:A,appsec:B,appsec:C,sec:app_pulses,sec:colvar_derivation,appsec:E}.

\section{Pulses in spiking neuron networks}
\label{sec:2all}

\change{
\subsection{Two canonical spiking neuron models}
\label{sec:2intro}
I focus on (1) the ``theta''-neuron as the canonical model for Class 1 excitable\footnote{Class 1 (or I) excitable neurons are also known as ``Type I membranes'' whose neuronal dynamics exhibit ``Type I excitability''.} neurons close to saddle-node bifurcation on an invariant circle (SNIC bifurcation) 
and on (2) the Quadratic Integrate-and-Fire (QIF) neuron as the canonical model for a biologically detailed conductance-based neuron close to a saddle-node bifurcation~\citep{Hoppensteadt_Izhikevich_1997}.
Canonical means that, in the first case, any Class 1 excitable neuron close to a SNIC bifurcation can be reduced to a one-dimensional neuron with a phase variable $\theta$, whose dynamics is governed by 
\begin{equation}
    \dot \theta = \tfrac{d}{dt} \theta(t) = (1-\cos\theta) + (1+\cos \theta) \eta
    \label{eq:EK_canonical}
\end{equation} 
with excitability parameter $\eta$.
Despite its formulation in terms of the phase variable $\theta$,
\cref{eq:EK_canonical} does not result from phase reduction (see also \cref{subsec:previous}). 
Instead, \cref{eq:EK_canonical} is the Ermentrout-Kopell canonical model~\citep{Ermentrout_Kopell_1986,Ermentrout_1996,Gutkin_Ermentrout_1998} for a SNIC bifurcation, which occurs when $\eta=0$. 
For $\eta<0$, the neuron is in an excitable regime as the dynamics \eqref{eq:EK_canonical} exhibits a pair of stable and unstable equilibria.
The stable equilibrium represents the neuron's resting state.
The unstable equilibrium represents a threshold: when an external input drives the neuron across this threshold, $\theta$ will move around the circle in the course of the neuron's action potential and approaches the resting state from below. 
The neuron is said to fire a spike when $\theta$ crosses $\pi$.
For $\eta>0$, the two equilibria have disappeared in a saddle-node bifurcation, leaving a limit cycle, and the neuron is in a tonically (periodically) spiking regime.
The reduction from a general conductance-based Class 1 excitable neuron close to a SNIC bifurcation to the $\theta$-neuron~\eqref{eq:EK_canonical} is sketched in \cref{fig:cartoon} for $\eta<0$: 
close to the saddle-node bifurcation, a small neighborhood of the resting potential is blown up and the trajectory describing the neuron's action potential along the limit cycle is compressed to an open set around $\pi$ (shaded in red).

The QIF neuron is characterized by a voltage variable $v$ that follows the subthreshold dynamics
\begin{equation}
\dot v = \tfrac{d}{dt} v(t) = v^2 + \eta
\label{eq:qif0}
\end{equation}
with a fire-and-reset rule: when $v$ exceeds a peak value $v_p$, the voltage is reset to a reset potential $v_r$ and the neuron is said to elicit a spike.
The parameter $\eta$ plays a similar role as in the $\theta$-neuron \eqref{eq:EK_canonical}: For $\eta<0$, there is a pair of stable and unstable equilibria and the QIF neuron will settle into its resting potential at $v = -\sqrt{-\eta}$.
For $\eta>0$ the two equilibria disappear in a saddle-node bifurcation and the voltage of the QIF neuron will diverge; due to the fire-and-reset mechanism, the neuron then enters into a periodically firing regime.

If peak and reset potentials are taken at infinity, $v_p=-v_r=\infty$, the QIF neuron becomes equivalent to the $\theta$-neuron via the variable transform $v=\tan(\theta/2)$, see \citep{Ermentrout_1996} and \cref{fig:cartoon}.
Independent from the equivalence with the $\theta$-model, however, the QIF model has its own right to exist.
For example, the QIF dynamics can be obtained from conductance-based neuron models with a parabolic-like voltage nullcline, which is a general feature of Hodgkin-Huxley-like neurons with positive feedback (so-called amplifying) ionic currents, through a quadratization procedure~\citep{Rotstein_2015,Turnquist_Rotstein_2018}.
Moreover, the subthreshold QIF dynamics \eqref{eq:qif0} is the topological normal form of a saddle-node (fold) bifurcation~\citep{Kuznetsov_1998}; the QIF model thus describes any neuronal model close to a saddle-node bifurcation~\citep{Hansel_Mato_2001,Hansel_Mato_2003}. 
Equipped with finite threshold and reset values, the QIF model is also the simplest spiking neuron model with a spike generation mechanism (i.e., a regenerative upstroke of the membrane potential), with a soft (dynamic) threshold and a spike latency~\citep{Izhikevich_2007}.
The approximation with infinite threshold and reset values (as considered throughout this work) then allows for valuable insights into the collective dynamics of QIF neurons as typical Class 1 excitable systems, though not necessarily claiming an explicit biophysical interpretation of the underlying neuronal dynamics.

Despite the close interconnection between $\theta$- and QIF neurons, the two canonical spiking neuron models have subtle differences even when the QIF neuron is equipped with finite threshold and reset values. 
These differences become clear when considering networks of them and when introducing pulse-coupling that is different from the commonly employed $\delta$-spikes.
}

\subsection{Pulses in the canonical model of weakly connected Class 1 excitable neurons}
\label{sec:2}

\changeB{
  Pulse-coupled neural networks are celebrated for their utility and practicability, yet often met with skepticism and dismissed as mere toy models. Nevertheless, a handful of broadly applicable conditions have been distilled to reduce a large class of realistic conductance-based neural network models to a canonical model of pulse-coupled $\theta$-neurons, exhibiting qualitatively identical dynamical properties.
  \cite{Izhikevich_1999} proposed a formal derivation from weakly connected networks of biologically plausible and biophysically detailed Class 1 excitable neurons. 
  This derivation yielded $\theta$-neurons coupled through $\delta$-spikes.
  However, synaptic interactions in the canonical model of pulse-coupled $\theta$-neurons occur not through (discontinuos) $\delta$-spikes, but via (smooth) localized pulses of finite width---a distinction I will rigorously establish below.
  These narrow pulses capture the swift impact of a presynaptic action potential, eliciting a postsynaptic response in the connected neuron.
  Such dynamics result directly from \cref{thm:1} (referenced below), contingent upon satisfying the following general conditions \citep{Izhikevich_1999}:
}
\begin{enumerate}[label=(\roman*)]
    \item \label{item1} Neurons are Class 1 excitable, i.e.\ action potentials can be generated with arbitrarily low frequency, depending on the strength of the applied current.
    \item \label{item2} Neurons are weakly connected, i.e.\ the amplitudes of postsynaptic potentials (PSP) are much smaller than the amplitude of an action potential or than the mean excitatory PSP size necessary to discharge a quiescent neuron.
    \item \label{item3} Synaptic transmission has an intermediate rate, which is slower than the duration of an action potential, but faster than the interspike period.
    \item \label{item4} Synaptic connections between neurons are of conventional type, i.e.\ axo-dendritic or axo-somatic.
    \item \label{item5} Synaptic transmission is negligible when presynaptic neurons are at rest, i.e.\ spontaneous release of neurotransmitters does not affect significantly spiking of postsynaptic neurons. 
\end{enumerate}
\changeB{These assumptions are generically met by a large class of neural networks, as extensively discussed by \cite{Izhikevich_1999} regarding their biological plausibility.
In brief, Assumption~\ref{item1} reflects the general belief that the majority of mammalian neurons are actually Class 1 excitable \citep{Izhikevich_1999,Pfeiffer_et_al_2023};
\ref{item2} reflects the \emph{in vitro}-observation that amplitudes of postsynaptic potentials (PSP’s) are much smaller than the mean excitatory PSP necessary to discharge a quiescent cell, and tiny compared to the amplitude of an action potential \citep{Hoppensteadt_Izhikevich_1997};
finally, \ref{item3}--\ref{item5} reflect the idea that conventional synaptic transmission occurs from one pre- to one postsynaptic neuron triggered by, and lasting slightly but not much longer than, the presynaptic action potential.
}


\changeB{Assumption~\ref{item1} ensures that an individual conductance-based neuron is close to a SNIC bifurcation, enabling its reduction to the canonical $\theta$-neuron~\citep{Ermentrout_Kopell_1986,Ermentrout_1996}.
When embedded in a network and interacting with each other, Assumptions \ref{item2}--\ref{item5} become necessary for the reduction of these neurons to a canonical network model of $\theta$-neurons. 
\emph{Network reduction}, however, is significantly more intricate than reducing a single neuron due to the interdependence introduced by coupling terms~\citep{Pietras_Daffertshofer_2019,Hoppensteadt_Izhikevich_1997}.
Any transformation simplifying the dynamics of one neuron immediately influences the dynamics of others; therefore, the reduction of a neural network must occur for all neurons simultaneously. 
I pursue this network reduction in \cref{appsec:proof}, where I prove that the canonical network model for weakly coupled Class 1 excitable conductance-based neurons is given by pulse-coupled $\theta$-neurons with smooth pulses of short but finite width, as follows:

  \begin{theorem}
  \label{thm:1}
  Consider an arbitrary weakly connected neural network of the form 
  \begin{equation}
  \dot X_j = F_j(X_j,\lambda) + \varepsilon G_j(X_1,\dots,X_N; \lambda,\varepsilon)
  \label{eq:thm1}
  \end{equation}
  satisfying that
  each (uncoupled) equation $\dot X_i = F_j(X_j,\lambda)$ undergoes a SNIC bifurcation for some $\lambda=\lambda_0$,
  that each function $G_j$ has the pair-wise connected form 
  $$G_j(X_1,\dots,X_N; \lambda_0,0)  = \sum_{k=1}^N G_{jk} (X_j,X_k)$$
  and each $G_{jk} (X_j,X_k)=0$ for $X_k$ from some open neighborhood of the saddle-node bifurcation point.
  Then, there is $\varepsilon_0 >0$ such that for all $\varepsilon < \varepsilon_0$ and all $\lambda = \lambda_0 + \mathcal O(\varepsilon^2)$, there is a piece-wise continuous transformation that maps solutions of \eqref{eq:thm1} to those of a canonical network model of pulse-coupled $\theta$-neurons, which can be approximated by
  \begin{equation}
    \theta'_j = (1-\cos \theta_j) + (1+\cos \theta_j) \Big[ \eta_j + \sum_{k=1}^N p_{jk} (\theta_k) \Big]
    \label{eq:thm3}
  \end{equation}
  with constants $\eta_j\in \mathbb{R}$.
  The functions $p_{jk}(\theta_k)$ describe smoothed $\delta$-pulses of strength $s_{jk}=\mathcal O(\|G_{jk}\|)$, where $\|.\|$ denotes the supremum norm; 
  i.e.~the pulse strength is proportional to the amplitude of the coupling in \eqref{eq:thm1}.
  The duration of every pulse in \eqref{eq:thm3} is short as $p_{jk}(\theta_k) = 0$ if $|\theta_k - \pi| > 2\sqrt\varepsilon$ for all $j,k=1,\dots,N$, see \cref{fig:cartoon_pulse}.
  \end{theorem}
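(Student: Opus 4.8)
The plan is to follow the weakly-connected-network reduction of \cite{Izhikevich_1999}, but to keep track of the coupling term carefully enough to see that it produces a \emph{smooth} pulse of width $\mathcal{O}(\sqrt\varepsilon)$ rather than an idealized Dirac spike. I would first dispose of the single-neuron problem. Since each uncoupled field $\dot X_j=F_j(X_j,\lambda)$ undergoes a SNIC bifurcation at $\lambda=\lambda_0$, a center-manifold reduction together with the topological normal form of a saddle-node on an invariant circle \citep{Kuznetsov_1998,Ermentrout_Kopell_1986} yields, for $\lambda$ in the prescribed neighborhood of $\lambda_0$, a near-identity and piece-wise continuous (because of the spike reinjection) change of variables $X_j\mapsto u_j$ reducing the uncoupled dynamics to $\dot u_j=c_j(\lambda)+u_j^2+\text{h.o.t.}$ on the circle, with $c_j(\lambda_0)=0$. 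Introducing the small parameter through the rescaling $u_j=\sqrt\varepsilon\,\xi_j$ and the slow time $\tau=\sqrt\varepsilon\,t$ turns this into the QIF normal form \eqref{eq:qif0}, $\xi_j'=\eta_j+\xi_j^2+(\text{input})$, with constants $\eta_j$ fixed by the rescaled distance $\lambda-\lambda_0$ to the bifurcation, and the substitution $\xi_j=\tan(\theta_j/2)$ recovers the uncoupled part of \eqref{eq:thm3}, i.e.\ \eqref{eq:EK_canonical}. The point of this scaling is that the weak coupling $\varepsilon G_j$ is a regular $\mathcal{O}(\varepsilon)$ perturbation---hence it leaves the center manifold and the reducing transformation unchanged to leading order---yet it is promoted to an $\mathcal{O}(1)$ term in the slow equation for $\xi_j$.

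The crux, and the step I expect to be the main obstacle, is the simultaneous network-wide transformation of the coupling. Because the reduction for neuron $j$ acts on $X_j$ alone, pushing the pairwise term $\varepsilon G_{jk}(X_j,X_k)$ through the block-diagonal change of variables a priori produces a function of \emph{both} $\theta_j$ and $\theta_k$, whereas \eqref{eq:thm3} demands the product form $(1+\cos\theta_j)\,p_{jk}(\theta_k)$. I would resolve this in two moves. First, the gain $(1+\cos\theta_j)$ is forced by the coordinate change alone: from $u=\tan(\theta/2)$ one has $\mathrm{d}\theta/\mathrm{d}u=1+\cos\theta$, so any additive input to $\dot u_j$ reappears multiplied by $(1+\cos\theta_j)$ in $\dot\theta_j$; in particular the input is switched off at $\theta_j=\pi$, encoding that the postsynaptic neuron is insensitive to coupling during its own spike. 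Second, the residual $\theta_j$-dependence of $G_{jk}$ collapses at leading order: on the slow time scale neuron $j$ spends all of its non-spiking time within an $\mathcal{O}(\sqrt\varepsilon)$ neighborhood of the saddle-node point, so $G_{jk}(X_j,X_k)=G_{jk}(X_{\mathrm{sn}},X_k)+\mathcal{O}(\sqrt\varepsilon)$ with the leading term a function of $X_k$ only, which defines $p_{jk}(\theta_k)$; the brief, fast excursion during $j$'s own spike contributes negligibly precisely because it is weighted by the vanishing factor $(1+\cos\theta_j)$. The genuine technical burden is to make these leading-order statements uniform over the whole circle---including the fast spike region, where the $\sqrt\varepsilon$-rescaling degenerates---and to control the accumulated $\mathcal{O}(\sqrt\varepsilon)$ error over one slow period by a Gronwall-type estimate; this, together with establishing the piece-wise continuity of the global map, is where the real work lies.

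Finally I would read off the pulse support and strength directly from the hypotheses. By assumption $G_{jk}(X_j,X_k)=0$ whenever $X_k$ lies in a fixed open neighborhood of the saddle-node point, i.e.\ whenever $|u_k|<\delta$ for some $\delta>0$; in rescaled coordinates this is $|\xi_k|<\delta/\sqrt\varepsilon$, and through $\xi_k=\tan(\theta_k/2)\sim 2/(\pi-\theta_k)$ near the spike it reads $|\theta_k-\pi|>2\sqrt\varepsilon/\delta$. Hence $p_{jk}(\theta_k)=0$ outside an $\mathcal{O}(\sqrt\varepsilon)$-neighborhood of $\pi$, which is the claimed width bound (the constant being absorbed into the choice of neighborhood). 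The pulse inherits the smoothness of $G_{jk}$ and of the coordinate change, and since the coupling enters the rescaled equation for $\xi_j$ at order $\|G_{jk}\|$, its amplitude is $s_{jk}=\mathcal{O}(\|G_{jk}\|)$, proportional to the coupling in \eqref{eq:thm1} as claimed. As $\varepsilon\to0$ the support contracts to the single firing phase $\theta_k=\pi$, so the smooth pulse localizes to the $\delta$-spike interaction of \cite{Izhikevich_1999} as a singular limit.
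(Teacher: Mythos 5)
Your outline follows the same route as the paper's proof in \cref{appsec:proof} (reduce each neuron to circle dynamics, blow up the saddle-node neighborhood, pass to slow time and the phase $\theta_j$, freeze the postsynaptic phase during an incoming spike, and read off the pulse support from the dead-zone hypothesis), but your scaling is wrong, and the error is not cosmetic. You set $u_j=\sqrt{\varepsilon}\,\xi_j$ and $\tau=\sqrt{\varepsilon}\,t$. Under the theorem's hypothesis $\lambda=\lambda_0+\mathcal{O}(\varepsilon^2)$, the detuning $c_j(\lambda)=\mathcal{O}(\varepsilon^2)$ then enters your slow equation as $c_j/\varepsilon=\mathcal{O}(\varepsilon)$, so your $\eta_j$ vanish as $\varepsilon\to 0$; your normal form can only produce the arbitrary constants $\eta_j\in\mathbb{R}$ of \eqref{eq:thm3} if you silently strengthen the hypothesis to $\lambda-\lambda_0=\mathcal{O}(\varepsilon)$. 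Worse, under your scaling the coupling term in the slow phase equation has pointwise size $\mathcal{O}(\|G_{jk}\|)$ on a support that the presynaptic phase traverses in slow time $\mathcal{O}(\sqrt{\varepsilon})$ (the phase speed near $\theta_k=\pi$ is $\approx 2$), so the total phase shift that one presynaptic spike imparts to neuron $j$ is $\mathcal{O}(\sqrt{\varepsilon}\,\|G_{jk}\|)\to 0$: the interaction disappears entirely in the limit. Note that the ``strength'' $s_{jk}$ in \cref{thm:1} is the weight of the smoothed $\delta$-pulse, i.e.\ its integral, not its pointwise amplitude---your last paragraph conflates the two. The consistent scaling, used in the paper following Hoppensteadt--Izhikevich, is $x_j=(\varepsilon/p_j)\tan(\theta_j/2)$ with slow time $\tau=\varepsilon t$: then the detuning $\varepsilon^2 p_j\eta_j$ survives at order one, the coupling appears as a pulse of amplitude $\mathcal{O}(\|G_{jk}\|/\varepsilon)$ over an $\mathcal{O}(\sqrt{\varepsilon})$ window, and its integral, $p_j\int g_{jk}(x_j^{\text{old}},x_k)/f_k(x_k,0)\,dx_k$, is $s_{jk}=\mathcal{O}(\|G_{jk}\|)$ precisely because the hypothesis provides a \emph{fixed} dead zone that keeps the integral of $1/f_k\sim 1/(p_kx_k^2)$ bounded. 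That balance---large amplitude, short support, order-one weight---is what makes the pulses ``smoothed $\delta$-pulses'' and is destroyed by your choice of scales.

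A second, structural gap: your claim that the $\mathcal{O}(\varepsilon)$ coupling ``leaves the center manifold and the reducing transformation unchanged to leading order'' is exactly the step that needs proof, and a local center-manifold argument cannot supply it. The invariant object relevant for a SNIC is the \emph{global} invariant circle, not the local center manifold, and the paper must assume in \cref{thm:app} that each uncoupled neuron possesses an attractive, normally hyperbolic, compact invariant manifold homeomorphic to $\mathbb{S}^1$, so that Fenichel's persistence theorem applies to the product manifold of the whole network simultaneously and yields a projection $h_\varepsilon$ mapping the coupled dynamics onto it while preserving both the pairwise form of the coupling and its dead zone. The paper's remarks after the proof single this assumption, and the construction of that structure-preserving projection, out as the genuine bottlenecks of the reduction; your Gronwall remark addresses error accumulation but not this prerequisite, without which the reduction to \eqref{eq:thm3} from \eqref{eq:thm1} need not hold at all.
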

}

\begin{figure*}
    \centering
    \includegraphics[width=\textwidth]{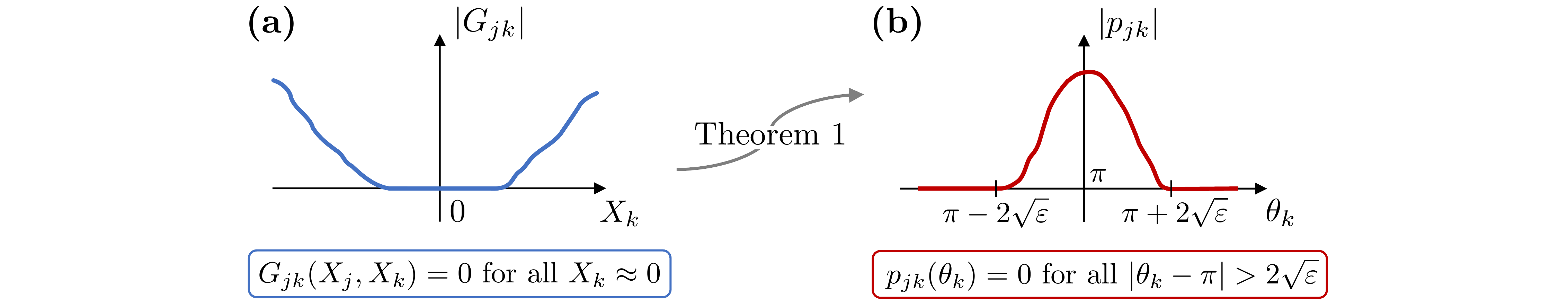}
    \caption{\change{(a) The coupling function $G_{jk}$ of the conductance-based neuron model \eqref{eq:thm1} with state variables $X_{j} \in\mathbb R^n$ satisfying Assumption \ref{item5} has a ``dead zone'' \citep{Ashwin_et_al_2021} around the resting potential $|X_k|=0$ of postsynaptic neuron $k$.
    (b) The smooth pulse function $p_{jk}$ in the pulse-coupled network \eqref{eq:thm3} of $\theta$-neurons is non-zero around the spike time $\theta_k = \pi$ of neuron $k$ and can be derived from weakly coupled Class 1 excitable neurons via \cref{thm:1}, see also \cref{appsec:proof}.}}
  \label{fig:cartoon_pulse}
\end{figure*}

\changeB{For the proof of the theorem including all mathematical details, see \cref{appsec:proof}.

In the canonical pulse-coupled network \cref{eq:thm3}, individual $\theta$-neurons are solely characterized by the parameter $\eta_j$, whereas the uncoupled dynamics of the underlying neuronal model \eqref{eq:thm1} are given by the function $F_j$ describing, e.g., a conductance-based Hodgkin-Huxley-like neuron. 
Interactions between two synaptically connected neurons $k$ and $j$ arise according to Assumption~\ref{item5} only when the presynaptic neuron, say neuron $k$, is firing a spike.
The event of such an action potential lasts $4\sqrt\varepsilon$ units of slow time---when $\theta_k$ crosses a $2\sqrt\varepsilon$-neighborhood of $\pi$.
As Assumption~\ref{item2} requires that $\varepsilon\ll 1$ is small,
it seems reasonable to approximate an emitted pulse by the Dirac $\delta$-function as $p_{jk}(\theta_k)\approx s_{jk}\delta(\theta_k-\pi)$ \cite[cf.~Proposition 8.12 in][]{Hoppensteadt_Izhikevich_1997}.
In general, however, the pulses $p_{jk}(\theta_k)$ are smooth and of finite width, which can have important consequences for the collective network dynamics depending on the shape of the pulse, see \cref{sec:5}. While an exact reduction of pulse functions $p_{jk}$ from weakly coupled, Class 1 excitable, conductance-based neurons is beyond the scope of this work due to technical difficulties detailed in \cref{appsec:proof}, exploring the impact of general pulse shapes on the collective dynamics is a valuable pursuit.
The $\theta$-neuron network exhibits rich collective behavior already with narrow pulse-coupling. 
For a more comprehensive understanding, I will study pulse-coupled networks of $\theta$-neurons with various pulse shapes in \cref{sec:5}, where I consider pulses also beyond the limits imposed by \cref{thm:1}, and discuss their biological interpretation in \cref{subsec:D2}.

\begin{remark}
  The shape of the pulses $p_{jk}(\theta_k)$ in \cref{eq:thm3} depends nonlinearly on the shape of the presynaptic action potential (governed by $F_j$) in interplay with the coupling function $G_{jk}$.
  For example, a pulse $p_{jk}$ may describe a conductance-based synapse in the underlying high-dimensional model \cref{eq:thm1} of the form 
  \begin{equation}
  G_{jk}(X_j,X_k) = - g_\text{syn}(X_k,t) \big[ v\big(X_{j}(t)\big) - E_\text{syn}\big] ;
  \label{eq:gjk_example}
  \end{equation} 
  here, $v(X_{j})$ projects the vector $X_j$ onto its corresponding voltage value, $E_\text{syn}$ is a reversal potential, and $g_\text{syn}(X_k,t)\ge 0$ denotes the synaptic conductance that is activated by the action potential of neuron $k$.
  As long as the temporal dynamics of $g_\text{syn}(t)$ is short and does not violate Assumption \ref{item3}, \cref{eq:thm3} is a valid description for the smooth pulsatile synaptic transmission resulting from \cref{eq:gjk_example}.
\end{remark}

\begin{remark}
  The coupling in the canonical $\theta$-neuron network of \cref{thm:1} is of pulse-response type $q(\theta_j)p(\theta_k)$ and reflects that weakly connected Class 1 excitable neurons \eqref{eq:thm1} have Type I-phase response curves (PRC); 
  for Type I-PRCs, positive (negative) inputs always lead to a positive (negative) phase shift of the postsynaptic neuron. 
  In \cref{eq:thm3}, the PRCs reduce to $q(\theta_j) = 1+\cos\theta_j$. 
  The Type I-property of the PRCs is inherited from the fact that each neuron is assumed to be $\varepsilon^2$-close to the SNIC bifurcation,
  and it is independent from the coupling function $G_{jk}$. Consequently, the pulses $p_{jk}(\theta_k)$ can indeed represent conductance-based synapses of the form \eqref{eq:gjk_example}.
  Relaxing the $\varepsilon^2$-closeness assumption will allow for obtaining also Type II-PRCs\footnote{Note that in the proof of \cref{thm:1}, I approximated the phase $\theta_j$ in the function $p_{jk}(\theta_j,\theta_k)\approx p_{jk}(\theta_k)$ to be constant during an incoming spike, see \cref{eq:app_pulse_def}. This assumption breaks down if neuron $j$ is no longer $\varepsilon^2$-close to the SNIC bifurcation and the PRC $\tilde q(\theta_j)$ has to be found from $\tilde q(\theta_j) \tilde p(\theta_k)=(1+\cos\theta_j)p_{jk}(\theta_j,\theta_k)$.}%
  , which may entail very distinct synchronization properties of the network~\citep[see, e.g.,][]{smeal2010phase,pazo_montbrio_2014}. 
\end{remark}

In this manuscript, I will focus on pulses $p_{jk}(\theta) = s_{jk}p(\theta)$ with $p(\theta)\geq 0$ non-negative. I will include no habituation, no delay nor synaptic kinetics, nor any synaptic fatigue. My point here is to illustrate some of the consequences of the pulse function $p(\theta)$ on the network's collective behavior.
}

\subsection{Voltage-dependent pulses of QIF neurons}
\label{sec:3}

\Cref{eq:thm3} describes the dynamics of a network of $\theta$-neurons with instantaneous pulse-coupling, which has been rigorously derived from a universal class of weakly connected Class 1 excitable neurons.
Assuming weak interactions and homogeneous stereotypical action potential shapes, I now set $s_{jk} = J/N$ for all $j,k=1,\dots,N$. Then, $p_{jk}(\theta_k) = J p(\theta_k)/N$ with a general pulse function $p$ and coupling strength $J \in \mathbb R$ re-scaled with respect to the network size $N$.
By identifying $v_j = \tan(\theta_j/2)$, the $\theta$-neuron becomes equivalent to the QIF neuron (\cref{fig:cartoon}b and c; \cite{Ermentrout_1996}) and \cref{eq:thm3} can be transformed into a network model of QIF neurons, whose voltage variables $v_j$ follow the subthreshold dynamics
\begin{equation}
\frac{d}{d\tau} v_j = v_j^2 + \eta_j + \frac{J}{N} \sum_{k=1}^N p_k(\tau) .
\label{eq:qif1}
\end{equation}
The pulses $p_k$ received by postsynaptic neuron $j$,
\begin{equation}
	p_k(\tau) = p\big(\theta_k(\tau)\big) = p\big(2\arctan v_k(\tau) \big),
	\label{eq:pulse1}
\end{equation}
are then implicitly defined in terms of the presynaptic voltage $v_k$ through the corresponding $\theta$-phase.
Because of the connection with the $\theta$-model~\eqref{eq:thm3}, the pulses $p_k$ may already represent complex synaptic transmission through, e.g., conductance-based synapses of the form $I_\text{syn} = g_\text{syn}(v_j,t) [E_\text{syn}-v_k(t)]$, cf.~\cref{eq:gjk_example}.
This pulse-interpretation becomes rigorous due to the exact  correspondence between QIF and $\theta$-neurons when the QIF dynamics \eqref{eq:qif1} is equipped with a fire-and-reset rule that takes peak and reset potentials at infinity (\cref{fig:cartoon}).
\changeB{The equivalence with the $\theta$-model is not the only raison d'\^etre of the QIF model (\cref{sec:2intro}), and there are alternative biologically plausible interpretations for pulses of finite width between interacting QIF neurons.
In particular, the definition \eqref{eq:qif1} in terms of voltage variables $v_j$ calls for an interpretation of \eqref{eq:pulse1} as voltage-dependent pulses independent from the derivation of the canonical pulse-coupled $\theta$-model~\eqref{eq:thm3}. 

\begin{remark}\label{remark2}
The QIF dynamics \eqref{eq:qif1} were obtained through the forward transformation $v_j = \tan(\theta_j/2)$ from \cref{eq:thm3}, where the $\theta$-neuron represents the canonical model for a Class 1 excitable neuron close to a SNIC bifurcation.
As a matter of course, one can start with interacting QIF neurons and use the inverse transformation $\theta_j = 2 \arctan(v_j)$ to obtain a network model of $\theta$-neurons as in \cref{eq:thm3}, which does not necessarily represent the canonical model for Class 1 neurons, as has been proposed in~\citep{borgers_kopell_2005,kotani2014population}.
\end{remark}
}

\begin{figure}[!t]
  \centering
  \includegraphics[width=0.6\columnwidth]{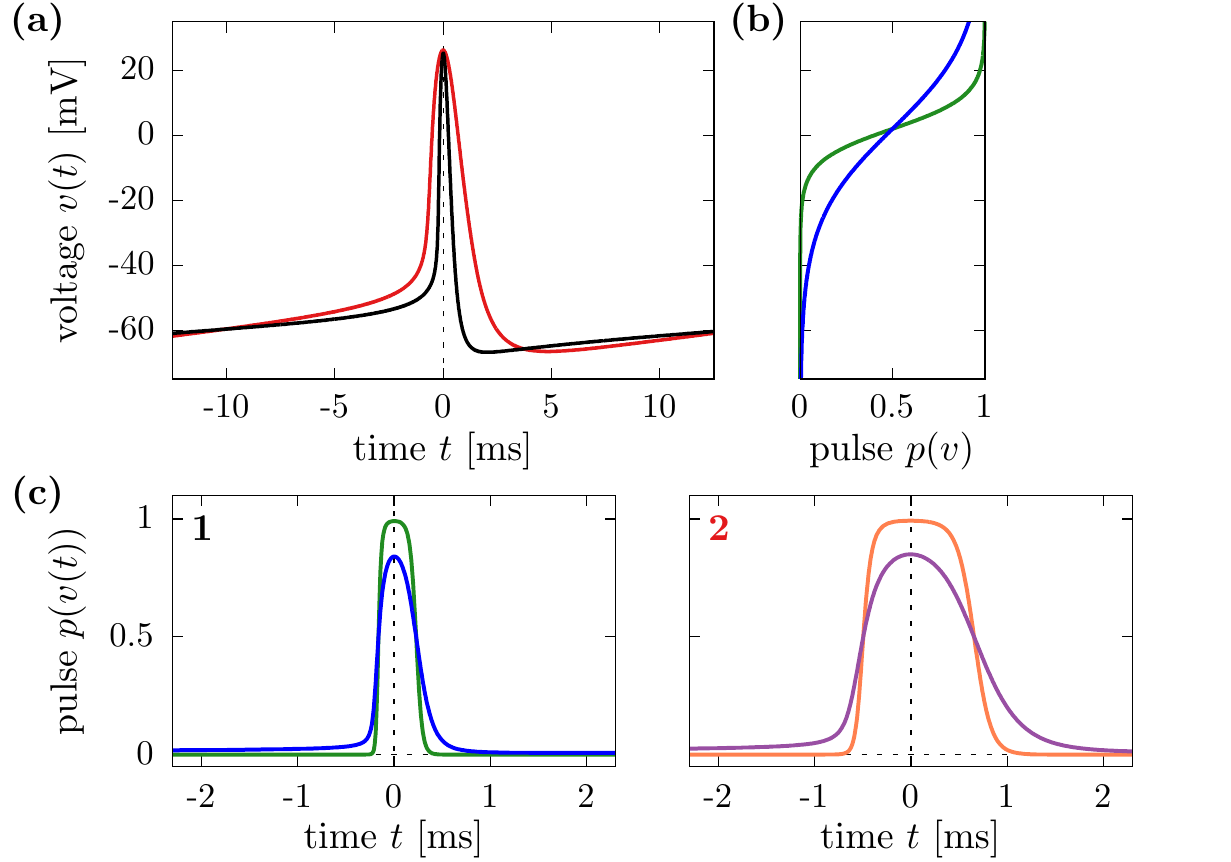}
  \caption{Voltage-dependent pulses of the conductance-based Wang-Buzs\'aki~(WB) neuron. (a) Voltage traces of a periodically spiking WB neuron with narrow (black) or broad (red) action potentials (APs), both have frequency $f\approx1/(24\text{ms})=41.7$Hz.
  (b) Sigmoidal pulse function \eqref{eq:sigmoid} with hard (green) and soft (blue) thresholds.
  The APs in (a) are transformed via (b) into pulses:
  (c1) a narrow AP and hard threshold result in a square-like pulse (green) that is symmetric about the spike time $t=0$, whereas a soft threshold smooths and skews the pulse (blue);
  (c2) similar to (c1) but for the wide AP.
  Parameters of pulses in (b): $v_s=2$mV, $k_s=5$ (green) as in \citep{Destexhe1994}; $v_s=2$mV, $k_s=14$ (blue).}
  \label{fig1}
\end{figure}

\begin{figure}[!ht]
\centering
\vspace{0.5cm}
\includegraphics[width=0.6\columnwidth]{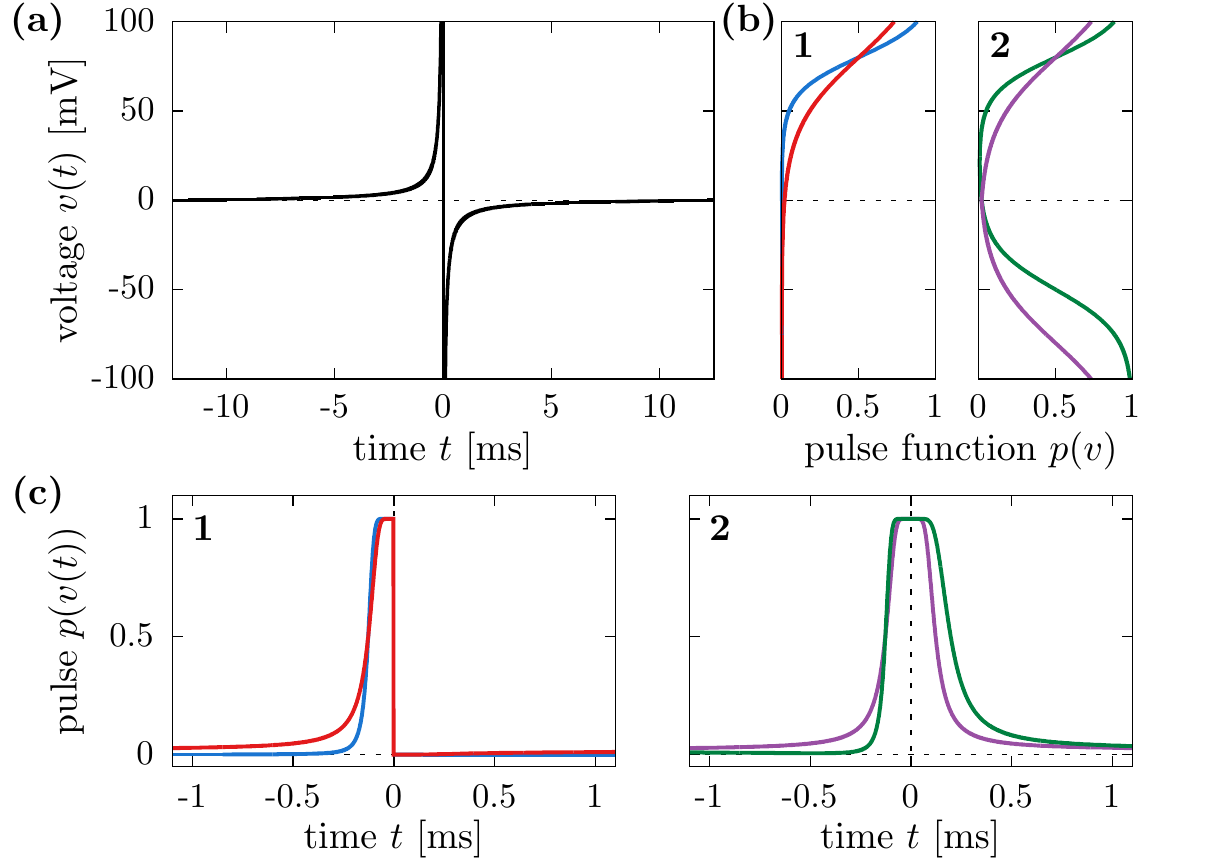}
\caption{Voltage-dependent pulses of the QIF neuron.
(a) Voltage trace of a periodically spiking QIF neuron with frequency $f\approx 41.7$Hz.
(b) Pulse functions consisting of (b1) one sigmoidal function as in \cref{fig1}(b) or of (b2) a combination of two sigmoidals as in \cref{eq:two_sigmoids}.
(c1/c2) The action potential of the QIF neuron is transformed into the pulses using the pulse functions in (b1/b2).
Pulse parameters:
(b1) \cref{eq:sigmoid} with $v_s=80$mV, $k_s=10$ (blue) and $k_s=20$ (red).
(b2) $p_\parallel(v)$ with $v_s=80$mV, $k_s=20$ (violet); 
$p_\nparallel(v)$ with $v_{s-}=-50$mV, $k_{s-}=12.5$, $v_{s+}=80$mV, $k_{s+}=10$ (green).}
\label{fig1b}
\end{figure}

How can one introduce biologically plausible pulses of finite width between QIF neurons solely taking the QIF dynamics \eqref{eq:qif1} into account?
A natural candidate are voltage-dependent pulses $p_k = p(v_k)$ that can be identified with the synaptic conductance $g_\text{syn}(v_k,t)$, where the presynaptic voltage acts instantaneously on $g_\text{syn}$, i.e.~$g_\text{syn}(v_k,t)=g_\text{syn}(v_k(t))$. 
Such an instantaneous relation has, e.g., been manifested through the voltage-gated calcium influx at the presynaptic axon during an action potential~\citep{catterall2011voltage} or through the sigmoidal relationship between neurotransmitter concentration and presynaptic voltage~\citep{Destexhe1994}. 
One may additionally consider first- or higher-order synaptic kinetics of $g_\text{syn}(t)$ in response to a presynaptic pulse $p(v_k)$; for simplicity, however, I will focus on instantaneous interactions (but see \cref{subsec:D2}).
In conductance-based neuron models, instantaneous voltage-dependent pulses during synaptic transmission have frequently been described by sigmoidal functions of logistic\footnote{An alternative formulation of \cref{eq:sigmoid} uses the hyperbolic tangent, $t_\infty(v) = \lbrace 1+\tanh[(v-v_t)/k_t]\rbrace/2$, which coincides with $s_\infty(v)$ when choosing $v_s=v_t$ and $k_s=k_t/2$.} form
\begin{equation}
    s_\infty(v) = 1 / \lbrace 1 + \exp [-(v-v_s)/k_s ]\rbrace,
    \label{eq:sigmoid}
\end{equation}
with presynaptic voltage $v$ and steepness and threshold parameters $k_s$ and $v_s$~\citep{ermentrout_kopell_1990,Wang_Rinzel_1992,Destexhe1994,golomb1996propagation,Wang_Buzsaki_1996,Ermentrout_Kopell_1998,kopell2000gamma,borgers2017introduction}.
Exemplary pulse profiles for conductance-based Hodgin-Huxley-like neurons are shown in \cref{fig1}; see \cref{appsec:WB} for details on the employed Wang-Buzs\'aki model.
The pulse shape depends on the width of the action potential and on the slope $k_s$ of the sigmoidal pulse function $s_\infty(v)$.
Narrow action potentials and hard thresholds generate pulses that are almost symmetric about the spike time $t=0$ (green pulse in \cref{fig1}c1).
Typically, however, pulses exhibit a fast rise shortly before and a slower decay after the spike (violet pulse in \cref{fig1}c2).

\changeB{QIF neurons, by comparison, have stereotyped and rather artificial action potentials (\cref{fig1b}a).
Using the sigmoidal pulse function $s_\infty(v)$, \cref{eq:sigmoid}, leads to degenerate pulses that terminate sharply at the time of the spike (\cref{fig1b}c1).
To compensate} for the hardwired artificial shape of the QIF's action potential,
one can extend the pulses ad hoc and replace $v$ by its absolute value $|v|$, see the violet pulse function in \cref{fig1b}(b2).
This tactic generates pulses $p_\parallel(v) := s_\infty(|v|)$ that are symmetric about the spike time $t=0$ (violet pulse in \cref{fig1b}c2).
A symmetric pulse is a reasonable approximation of pulses in conductance-based neuron models with narrow action potentials and a sigmoidal function with a hard threshold $k_s \to 0$ (green pulse in \cref{fig1}c1).

\begin{figure}[!t]
    \centering
    \includegraphics[width=0.6\columnwidth]{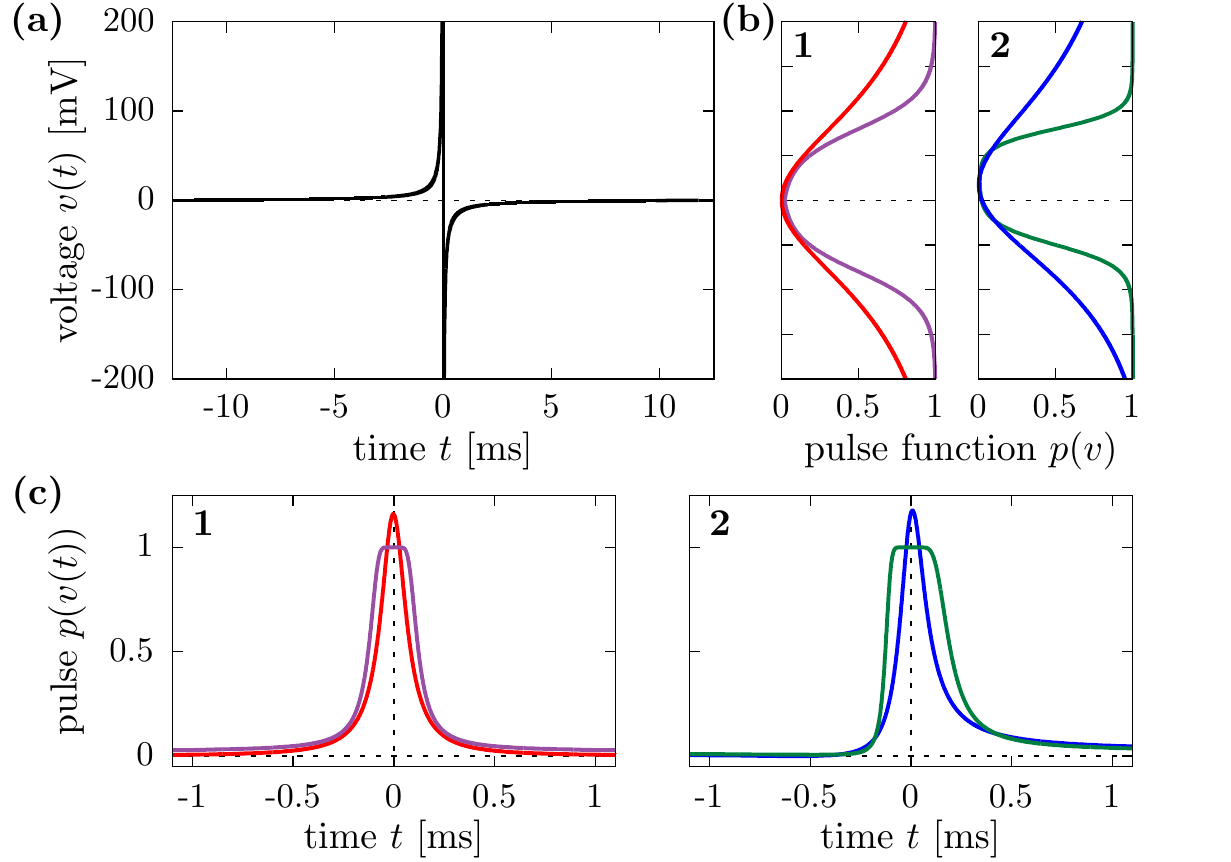}
    \caption{Smooth voltage-dependent pulses of (a) a periodically spiking QIF neuron with frequency $f\approx 41.7$Hz.
    (b) The pulse functions \eqref{eq:two_sigmoids} shown in \cref{fig1b}(b2) can be approximated around the resting potential by the phase-dependent pulse function \eqref{eq:KJ_theta} (red and blue traces; rescaled for convenience).
    The resulting smooth pulses in (c) capture the nature of symmetric (c1) and asymmetric (c2) pulses quite accurately.
    Parameters for $p_{r,\varphi,\psi}(2\arctan v)/c$ are: (b1) $r=0.985, \varphi=0,\psi=\pi,c=115$; (b2) $r=0.985, \varphi=\pi/12,\psi=\pi,c=35$.}
    \label{fig1c}
\end{figure}%
For general asymmetric pulses as in \cref{fig1}(c), however, one has to come up with a different solution that breaks the symmetry of $p_\parallel(v)$ with respect to $v=0$. 
For instance, one can combine two sigmoidal functions as
\begin{equation}
    p_\nparallel(v) = \frac1{1+e^{(v-v_{s-})/k_{s-}}} + \frac1{1+e^{-(v-v_{s+})/k_{s+}}}
    \label{eq:two_sigmoids}
\end{equation}
with thresholds $v_{s+}>0> v_{s-}$ and steepness parameters $k_{s\pm}>0$.
For $k_{s+}<k_{s-}$, QIF neurons emit asymmetric pulses with a steep upstroke and a moderate downstroke (green pulse function in \cref{fig1b}b2, the resulting pulse $p_\nparallel(v(t))$ in \cref{fig1b}c2).

\subsection{An accessible family of pulse functions}
\label{subsec:accessible}
In anticipation of an exact low-dimensional description for the collective dynamics of QIF neurons, it is advantageous to refrain from the explicit voltage-dependence of the pulses \eqref{eq:two_sigmoids} and to formulate them in the corresponding $\theta$-phase description~\citep{pcp_arxiv_2022}.
This allows not only for a directly comparable treatment of networks of $\theta$- and QIF neurons, but it also overarches the opposing approaches to pulse-coupling introduced in \cref{sec:2,sec:3}, see also \cref{subsec:D2}.

By substituting $v=\tan(\theta/2)$ in \cref{eq:two_sigmoids}, the voltage-dependent pulse $p_\nparallel(v)$ becomes phase-dependent. However, the pulse $p_\nparallel(\tan(\theta/2))$ is only implicitly defined in $\theta$, which hampers analytic tractability.
As an alternative, I propose to use pulses $p(\theta)$ of the form~\eqref{eq:pulse1} that start from the $\theta$-phase description and a priori assume an implicit voltage-dependence;
in special cases the implicit voltage-dependence can actually become explicit, see, e.g., \cref{eq:RP_voltage} below. 
Importantly, the explicit pulses~\eqref{eq:two_sigmoids} can be approximated quite accurately around the resting potential by pulses $p(\theta) = p(2\arctan v)$ that employ the $\theta$-phase transformation of the QIF neuron (\cref{fig1c}b).
Specifically, I propose the following family of accessible, analytically favorable, smooth pulse functions
\begin{align}
p_{r,\varphi,\psi}(\theta) &=1+ \frac{1-r^2}{1-r \cos \varphi} \frac{\cos(\theta-\psi-\varphi)-r\cos(\varphi)}{1-2r\cos(\theta-\psi) +r^2} \; ,
\label{eq:KJ_theta}
\end{align} 
which provide adequate approximations of the QIF pulses that can be symmetric about the spike time (\cref{fig1c}c1) or asymmetric with a steep upstroke and a more moderate downstroke after the spike (\cref{fig1c}c2).
The shape of the pulses~\eqref{eq:KJ_theta} is determined by three parameters $r\in[-1,1], \varphi\in [-\pi,\pi)$ and $\psi \in[0,2\pi)$:
$r<1$ tunes the width of the pulse, $\varphi\neq 0$ tunes its asymmetry, and $\psi > \pi$ ($\psi < \pi$) shifts the pulse to the right (left) from the spike (at $\theta=\pi$).
Crucially, these pulses admit an exact, low-dimensional description for the collective dynamics of globally coupled spiking neurons in terms of their firing rate and mean voltage (\cref{sec:4}), and thus allow for a comprehensive mathematical analysis how the pulse shape affects the synchronization properties of the network (\cref{sec:5}). 
%

To provide more details on the pulse functions \eqref{eq:KJ_theta}, first note that
for $\varphi=0$ and $\psi=\pi$, \cref{eq:KJ_theta} reduces to a ``Rectified-Poisson'' (RP) pulse $p_{\text{RP},r}(\theta) := p_{r,0,\pi}(\theta)$ with sharpness parameter $r\in (-1,1)$~\citep{gallego_et_al_2017}.
Via the transform $\theta=2\arctan(v)$, one obtains the explicit voltage-dependence of the RP pulse
\begin{equation}
 p_{\text{RP},r}(v) = \frac{2(1-r)v^2}{(1+r)^2+(1-r)^2v^2} \ ,
 \label{eq:RP_voltage}
\end{equation}
see \cref{fig1c}(b1,c1) for an example.
The formulation in terms of the voltage $v$ readily allows for using the pulses interchangeably in networks of $\theta$- and QIF neurons.
In the same manner, one can also obtain the general voltage description of \cref{eq:KJ_theta}; as it is more convoluted, I refrain from presenting it here.

The RP pulse~\eqref{eq:RP_voltage} is a smooth generalization of the discontinuous $\delta$-spike:
When a neuron spikes and its voltage $v$ diverges, then $\lim_{v\to\infty}p_{\text{RP},r}(v)=2/(1-r)$.
Thus, in the limit $r\to 1$, $p_{\text{RP},1}(v)$ becomes a Dirac $\delta$-pulse and is zero except when $v$ diverges. 
On the other hand, for $r\to-1$, $p_{\text{RP},-1}(v)=1$ is flat and the pulse is always ``on'' independent of the neuron's actual state.
In the case $r=0$, the RP pulse reads $p_{\text{RP},0}(v)=2v^2/(1+v^2)$,
which corresponds to a cosine-pulse in the $\theta$-phase description, $p_{\text{RP},0}(\theta)=(1-\cos \theta)$.

Traditionally, cosine-pulses have been generalized as ``Ariaratnam-Strogatz'' (AS) pulses according to~\citep{ariaratnam2001phase},
\begin{equation}
    p_{\text{AS},n}(\theta)=a_n (1-\cos\theta)^n, 
    \label{eq:AS_pulse}
\end{equation}
where $a_n=2^n(n!)^2 / (2n)!$ is a normalization constant and $n\in\mathbb N$ a sharpness parameter:
for $n=1$ one finds $p_{\text{AS},1}(\theta)=p_{\text{RP},0}(\theta)$; for $n\to\infty$, the AS pulse \eqref{eq:AS_pulse} converges to a Dirac $\delta$-pulse.
AS pulses have widely been used in network models of $\theta$-neurons~\citep{goel2002synchrony,Ermentrout_2006,luke_barreto_so_2013,luke2014macroscopic,So-Luke-Barreto-14,Laing_2014,laing_2015,laing2016bumps,laing2016travelling,roulet2016average,laing2017,chandra2017modeling,laing2018dynamics,laing2018chaos,aguiar2019feedforward,lin_barreto_so_2020,laing2020effects,laing2020moving,means2020permutation,blasche2020degree, bick_goodfellow_laing_martens_2020,juettner_martens_2021,omel2022collective,birdac2022dynamics}, but suffer from numerical and also analytical difficulties when the pulses become more and more localized, $n \gg 1$.
While RP pulses~\eqref{eq:RP_voltage} share the properties of AS pulses---they are symmetric about the pulse peak at $\theta=\pi$ and vanish at $\theta=v=0$---, RP pulses overcome the numerical and analytical shortcomings of the AS pulse~\eqref{eq:AS_pulse}, as will become clear below. 
In addition, the RP pulses can be readily extended to pulses that are skewed and/or whose peak is shifted away from $\theta=\pi$ (where $v=\tan(\theta/2)$ diverges) by introducing the asymmetry and shift parameters $\varphi\neq 0$ and $\psi\neq \pi$ in \cref{eq:KJ_theta}.

Pulses of the form \eqref{eq:KJ_theta} satisfy some convenient analytic properties as they correspond to a rescaled Kato-Jones distribution~\citep{kato_jones_2015}.
The Kato-Jones distribution is a four-parameter family of unimodal distributions on the circle, whose general form reads
\begin{equation}
    p_{KJ}(\theta)=1+2\; \! \text{Re}\Big\{C\sum_{n=1}^\infty (\xi e^{-i\theta})^n \Big\}
    \label{eq:KJ_distribution}
\end{equation}
with the complex-valued parameters $C=a e^{i\varphi}$ and $\xi=r e^{i\psi} \in\mathbb C$.
The Kato-Jones distribution \eqref{eq:KJ_distribution} generalizes the wrapped Cauchy distribution, which is obtained for $C\equiv1$.
One obtains \cref{eq:KJ_theta} when setting $a=(1-r^2)/(2r)/(1-r\cos\varphi) \geq 0$.
This choice guarantees that the pulse is always non-negative, $p_{r,\varphi,\psi}(\theta)\ge 0$, and vanishes at least once.
Alternatively, one could reduce the number of parameters by enforcing \cref{eq:KJ_distribution} to vanish always at $\theta=0$, which is in line with \cref{thm:1} but may entail pulses that change signs. 
In this manuscript, however, I focus on pulses $p(v)\ge 0$ so that the distinction whether the pulse-coupling is excitatory or inhibitory is uniquely determined by (the sign of) the coupling strength $J$.

Taken together, the proposed family of pulses \eqref{eq:KJ_theta} satisfies the properties:
$p_{r,\varphi,\psi}(\theta)$ is $2\pi$-periodic, non-negative, unimodal, vanishes at least at one point, and has normalized area in the course of an action potential, $\int_0^{2\pi} p_{r,\varphi,\psi}(\theta)d\theta=2\pi$.
Moreover, the pulses $p_{r,\varphi,\psi}(\theta)$ are smooth, except for the limit $r\to 1$, in which they converge to a Dirac $\delta$-pulse, 
\begin{equation}
    p_{\delta,\psi}(\theta) := \lim_{r\to1} p_{r,\varphi,\psi}(\theta) = 2\pi \delta(\theta-\psi).
    \label{eq:delta_pulse}    
\end{equation}
The phase $\theta=\psi$, at which the $\delta$-pulse is emitted, can be linked to a ``threshold'' voltage $v_\text{thr}=\tan(\psi/2)$ that, according to~\cite{Ermentrout_1996}, ``accounts for the possibility that the synaptic conductance begins before the presynaptic voltage reaches its maximal value'', cf.~also~\citep{gutkin2001turning}.
If $\psi>\pi$, then $v_\text{thr} < 0$ and the $\delta$-pulse is emitted after the presynaptic voltage has reached its maximal value $v_p$ and the neuron is recovering from the reset after its spike.

\section{Exact collective dynamics of globally pulse-coupled QIF neurons}
\label{sec:4}

To determine the effect of the pulse shape on the collective network dynamics, I will make use of a recently proposed exact reduction for globally coupled spiking neurons.
\change{I present the theory only for networks of QIF neurons, but the exact reduction equally holds for the corresponding $\theta$-neuron dynamics, see \cref{appsec:A,appsec:B}}.  Building on \cref{eq:qif1}, I consider the membrane potentials $v_{j}$ of QIF neurons $j=1,\dots,N$, that follow the subthreshold dynamics~\citep{Izhikevich_2007,Ermentrout_1996,latham_et_al_2000} 
\begin{subequations}
\begin{equation}
\tau_m\dot{v}_j = v_j^2 + I_0 + I_\text{syn} + I_j \;.
\end{equation}
The membrane time constant $\tau_m$ is typically of the order $\tau_m=10$ms, $I_0$ is a global input common to all neurons, 
$I_\text{syn}$ a global recurrent synaptic input to be defined below, 
and $I_j$ describes neuron-specific, independent inputs
\begin{equation}
    I_j(t) = \gamma [ c\;\! \eta_j + (1-c)\xi_j(t)]
    \label{eq:QIF_individual_input}
\end{equation}
\label{eq:QIF}
\end{subequations}
comprising heterogeneous and noisy inputs; $c \gamma$ determines the degree of heterogeneity and $(1-c)\gamma$ is the noise intensity.
Quenched heterogeneity $\eta_j$ (as in \cref{eq:qif1}) is sampled from a normalized Cauchy-Lorentz distribution.
$\xi_j(t)$ describes independent Cauchy white noise with $\langle \xi_j(t) \rangle_t=0$ and $\langle \xi_j(t)\xi_k(s) \rangle_t = \delta_{j,k}\delta(t-s)$.
The parameter $c\in[0,1]$ in \cref{eq:QIF_individual_input} weighs the relative deterministic and stochastic contributions to $I_j$ on the microscopic level, but does not have an impact on the macroscopic dynamics~\citep{clusella_montbrio_2022,pcp_arxiv_2022}.
The subthreshold dynamics~\eqref{eq:QIF} is complemented by a fire-and-reset mechanism:
upon reaching a threshold $v_{p}$, the voltage $v_j$ is reset to the potential $v_\text{r}$ and neuron $j$ is said to elicit a spike.
Spike times $T_j$ of neuron $j$ are defined implicitly by $v_j(T_j)=v_p$.
As the quadratic term in \cref{eq:QIF} causes the voltage to diverge in finite time, I consider $v_p = -v_r = \infty$, so that the QIF neuron is equivalent to the $\theta$-model~\citep{Ermentrout_1996} when identifying the voltage $v_j$ with a phase $\theta_j$ via the transformation $v_j = \tan(\theta_j/2)$;
neuron $j$ spikes when $\theta_j$ crosses $\pi$, see \cref{fig:cartoon}. 

The spike times $T_j^{k=1,2,\dots}$ of neurons $j=1,\dots,N$ allow for linking the theoretical model, \cref{eq:QIF}, with experimental observations via the population activity commonly defined in terms of the firing rate
\begin{equation}
	R_N(t) = \lim_{\tau_r\to 0} \frac1N \sum_{j=1}^N \sum_k \frac1{\tau_r} \int_{t-\tau_r}^t \delta(t-T^k_j) dt,
	\label{eq:RN}
\end{equation}
where the subscript $N$ indicates the network size and $\tau_r$ is a small time window over which to sum spikes.
The firing rate $R$ is at the heart of mean-field models in computational neuroscience, whose ultimate goal is to provide a self-consistent, at best exact (i.e.\ matching an underlying microscopic network model), dynamic description of the population activity that is closed in a few macroscopic variables.
A singular example for such an exact mean-field model was proposed by \cite{montbrio_pazo_roxin_2015} for QIF neurons; see~\citep{luke_barreto_so_2013,Laing_2014} for previous work on the related $\theta$-neurons, yet without providing an explicit differential equation for the firing rate $R(t)$. 
Montbri\'o et al.'s approach yielded ordinary differential equations that describe the dynamics of the firing rate $R$ and the mean voltage $V = \langle v_j \rangle = \tfrac{1}{N}\sum_{j=1}^N v_j$.
A major success of Montbri\'o et al.'s firing rate equations was to pinpoint a spike synchronization mechanism through the cooperative interplay between the neurons' voltage and firing dynamics that is central for collective network oscillations, inherent in almost all neuron network models, but not captured by traditional firing rate models by default.
In the following, I will employ a reduction strategy that builds on the ideas in \citep{montbrio_pazo_roxin_2015} and which has been developed further in \citep{pcp_arxiv_2022}.

\subsection{Firing rate and voltage (RV) dynamics}
A remarkable feature of the globally coupled QIF neurons described by \cref{eq:QIF} is that their collective dynamics is captured by a low-dimensional system of differential equations~\citep{montbrio_pazo_roxin_2015,pcp_arxiv_2022}.
The low-dimensional description becomes exact in the thermodynamic limit of infinitely many neurons, $N\to\infty$, which I adopt from now on.
The macroscopic state of the QIF network is then given by the probability density function $\mathcal W(v,t)$, so that $d\mathcal W(v,t)dv$ indicates the fraction of neurons with membrane potential in the interval $[v,v+dv)$ at time $t$.
Correspondingly, in the $\theta$-phase description one can formulate the probability density $\mathcal P(\theta,t)$ with variable $\theta=2\arctan(v)$.
The latter distribution density can be expanded in Fourier space as $\mathcal P(\theta,t) = (2\pi)^{-1} \{ 1 + \sum_{n\geq 1}  Z_n(t) e^{-in\theta} + c.c. \}$, where the modes $Z_n(t)$ are the Kuramoto-Daido order parameters~\citep{Kuramoto_1984,daido_1996} and their dynamics are given in Appendix~\ref{appsec:B}.
The two characteristic observables of neural networks---the firing rate $R$ and mean voltage $V$---can conveniently be expressed as
\begin{equation}
 \pi \tau_m R - iV = 1 +  2\sum_{n=1}^\infty (-1)^n Z_n = \Phi + \lambda \frac{\mathcal M(-\sigma)}\sigma,
 \label{eq:RV_Pls}
\end{equation} 
where $\Phi(t), \lambda(t), \sigma(t)$ are complex-valued variables and
$\mathcal M(k)$ is a constant function that depends on the initial distribution $\mathcal W(\theta,t=0)$ of the QIF neurons.
As detailed in~\citep{pcp_arxiv_2022}, see also Appendix \ref{appsec:B}, the dynamics of the three complex variables $\Phi, \lambda, \sigma\in \mathbb C$ are governed by
\begin{align}
\tau_m\dot{\Phi} = i \Phi^2 - i I(t) + \gamma, \quad
\tau_m \dot{\lambda} = 2i\Phi\lambda, \quad  \tau_m \dot{\sigma} = i\lambda,
\label{eq:Pls}
\end{align}
where $I(t)= I_0 + I_\text{syn}(t)$; note that the macroscopic dynamics~\eqref{eq:Pls} are independent of the parameter $c$ in \cref{eq:QIF_individual_input}.
In the presence of Cauchy white noise and/or Cauchy-Lorentz heterogeneity, $\gamma >0$, $\lambda \to 0$ asymptotically in time \citep{pcp_arxiv_2022}, and the collective dynamics becomes uniquely determined by $\Phi \to \pi \tau_m R - iV$ or, equivalently, by the firing rate $R$ and the mean voltage $V$.
More precisely, the collective dynamics converges to an invariant two-dimensional manifold \citep{ott_antonsen_2008,luke_barreto_so_2013,montbrio_pazo_roxin_2015}, also called the Lorentzian manifold $\{ \lambda = 0 \}$, which contains all possible attractors of the full six-dimensional dynamics~\eqref{eq:Pls} and is given by the time-dependent total voltage density of the QIF neurons in form of a Cauchy-Lorentz distribution with mean $V(t)$ and half-width $\pi\tau_m R(t)$~\citep{montbrio_pazo_roxin_2015},
\begin{equation}
    \mathcal W(v,t)= \frac1\pi \frac{\pi\tau_m R(t)}{ [ v- V(t)]^2 + [\pi\tau_m R(t)]^2} \; .
    \label{eq:CL_density}
\end{equation}
For the analysis of asymptotic regimes, it suffices to restrict the focus on the firing rate and voltage (RV) dynamics on the invariant Lorentzian manifold, which can be guaranteed by initializing the voltages $v_j(0)$ according to a Cauchy-Lorentz distribution\footnote{For different choices of initial conditions $\{v_j(0)\}_j$, one has to resort to the full six-dimensional dynamics \eqref{eq:Pls} and define $\mathcal{M}(k)$ appropriately \citep[see][]{pcp_arxiv_2022}.}.
Setting $\lambda\equiv 0$ in \cref{eq:RV_Pls,eq:Pls} leads to the RV dynamics on the Lorentzian manifold~\citep{montbrio_pazo_roxin_2015,pcp_arxiv_2022,clusella_montbrio_2022}
\begin{subequations}
\begin{align}
\tau_m\dot{R} &= \frac{\gamma}{\pi\tau_m} + 2RV \;, \\
\tau_m\dot{V} &= V^2 - (\pi \tau_m R)^2 + I_0 + I_\text{syn} \; ,\label{eq:FRE_Vdyn}
\end{align}
\label{eq:FRE}
\end{subequations}
that exactly describes the collective dynamics of large pulse-coupled networks of QIF neurons, \cref{eq:QIF}.
In the next step, I will incorporate smooth pulsatile synaptic transmission mediated by the pulses~\eqref{eq:KJ_theta} and show how the global recurrent current $I_\text{syn}$ can be expressed in terms of the macroscopic variables $R$ and $V$, which leads to an exact mean-field model that is closed in the population firing rate $R$ and mean voltage $V$.


\subsection{Recurrent synaptic input in the RV dynamics}
\label{subsec:recurrent}

\cite{montbrio_pazo_roxin_2015} considered QIF neurons that interacted with each other in a global (``all-to-all'') and instantaneous fashion through ``$\delta$-spikes'', i.e.\ by emitting Dirac $\delta$-pulses at their spike times $T^k_j$. Then, 
$I_\text{syn}$ in \cref{eq:QIF,eq:FRE} becomes proportional to $\tau_m R(t)$ as given by \cref{eq:RN}.
However, the $\delta$-spike assumption for recurrent, instantaneous coupling is particularly limiting~\citep{afifurrahman2021collective} 
not only for biological, but also for numerical reasons\footnote{
Dirac $\delta$-pulse interactions $\dot x(t) = f(t,x(t)) + g(t,x(t))\delta(t-t_0)$ for some functions $f$ and $g$ should be understood throughout this manuscript as 
$\dot x(t) = f(t,x(t))$ for $t \neq t_0$ and
$x(t_0^+) = x(t_0^-) + g(t_0,x(t_0^-))$ at $t = t_0$~\citep[see, e.g.,][]{feketa2021survey}.
}~\citep{catlla2008spiking,Klinshov_et_al_2021,feketa2021survey} and also
because the limit of infinitely narrow pulses does not commute with the thermodynamic limit of infinitely large networks~\citep{zillmer2007stability,montbrio_pazo_roxin_2015} nor does it allow for collective oscillations~\citep{montbrio_pazo_roxin_2015,ratas_pyragas_2016,juettner_martens_2021}.
Here, I avoid those difficulties by modeling interneuronal communication with smooth pulses 
of finite width that unfold in the course of an action potential of presynaptic neuron $j$ in a similar manner as neurotransmitters are released in response to the depolarization of the voltage $v_j$  (\cref{sec:3}). 

For globally coupled QIF neurons, the recurrent input $I_\text{syn}(t)=\langle s_{j}(t)\rangle$ is the population mean over all individual postsynaptic responses $s_{j}(t)$ to a pulse $p_j(t)$ emitted by presynaptic neuron $j$. 
As discussed above, I consider voltage-dependent, smooth pulses $p_j(t) = p\big(\theta_j(t)\big) = p\big(2\arctan v_j(t) \big)$ with the pulse function $p(\theta)=p_{r,\varphi,\psi}(\theta)$ given by \cref{eq:KJ_theta}.
The $\theta$-phase formulation of the pulses guarantees analytic tractability thanks to favorable properties of the corresponding probability density $\mathcal P(\theta,t)$ of the ensemble of $\theta$-neurons~\citep{pcp_arxiv_2022}, but does not limit the generality of results for general voltage-dependent pulses. 
Moreover, the pulses~\eqref{eq:KJ_theta} allow for a smooth approximation of $\delta$-spikes,  \cref{eq:delta_pulse} with $\psi=\pi$. 
Summing over all neurons $j=1,\dots,N,$ yields the connection to the firing rate $R$ in \cref{eq:RN} as:
\begin{equation}
P_{N,\delta} := \frac1N \sum_{j=1}^N p_{\delta,\pi}(\theta_j) = \frac{2\pi}N \sum_{j=1}^N \frac{\delta(t-T_j^k)}{2/\tau_m} = \pi \tau_m R_N,
\label{eq:P2R}
\end{equation}
where the first equality follows from the change of variables formula for Dirac $\delta$-functions:
$\delta(t-T_j^k) = \delta\big( g(t) \big) \big| \dot g(T_j^k) \big|$ with $g(t) = \theta_j(t) - \pi$ and $\dot g(T_j^k) = \dot \theta_j(T_j^k) = 2/\tau_m$. The sum over the spike times $T_j^k$ is taken within a short time window of length $\tau_r$ as in \cref{eq:RN}.

\Cref{eq:P2R} explicitly links the mean presynaptic pulse activity $P_{\delta,\pi}=\langle p_{\delta,\pi}\rangle$ to the firing rate $R$, see also \cref{eq:app_dd}.
Ironically, the $\delta$-spike assumption again conceals an even deeper connection: 
in the thermodynamic limit $N\to\infty$, the mean pulse activity $P_{r,\varphi,\psi}$ is fully determined by the firing rate $R(t)$ \emph{and} the mean voltage $V(t)$. 
The description of $P_{r,\varphi,\psi}$ in terms of $R$ and $V$ becomes explicit on the Lorentzian manifold, where
\begin{gather*}
    P_{r,\varphi,\psi} (t) = \left\langle p_{r,\varphi,\psi} \big( \theta_j(t)\big) \right\rangle =\lim_{N\to\infty} \frac{1}{N}\sum_{j=1}^N p_{r,\varphi,\psi} \big( \theta_j(t)\big) \\
    = \int_0^{2\pi} p_{r,\varphi,\psi}(\theta) \mathcal P(\theta,t) d\theta = P_{r,\varphi,\psi}\big(R(t),V(t)\big)\;.
\end{gather*}
In general, $P_{r,\varphi,\psi}$ depends on $R$ and $V$ only through the three complex variables $\Phi,\lambda,\sigma$, see \cref{eq:appD_eq1} in \cref{sec:app_pulses}, but on the Lorentzian manifold $\{ \lambda=0\}$ it reduces to
\begin{equation}
\begin{gathered}
P_{r,\varphi,\psi} (R,V) = \text{Re}  \left( \Psi \right) \quad \text{where} \quad \Psi= \\
\frac{(1-r^2)(1+\pi\tau_m R - iV)e^{-i\varphi} + (r-\cos\varphi) \big[ 1-re^{-i\psi}+ (\pi\tau_m R - iV)(1+re^{-i\psi})\big]}
{r(1-r\cos\varphi)  \big[ 1-re^{-i\psi}+ (\pi\tau_m R - iV) (1+re^{-i\psi})\big]} 
\end{gathered}
\label{eq:p_mean}
\end{equation}
see \cref{appsec:C,sec:app_pulses,sec:colvar_derivation} for a rigorous derivation.
Albeit complex in its general description in terms of $R$ and $V$ as well as of the parameters $r,\varphi$ and $\psi$, 
\cref{eq:p_mean} dramatically simplifies for certain pulse shapes.

For RP pulses $p_{\text{RP},r}=p_{r,0,\pi}$, see  \cref{eq:RP_voltage}, that are symmetric ($\varphi=0$) about the peak phase when the neuron spikes ($\psi=\pi$), the mean presynaptic pulse activity $P_{\text{RP},r} = \langle p_{\text{RP},r}\rangle = P_{r,0,\pi}$ reads:
\begin{align}
\hspace{0.0cm}P_{\text{RP},r}(R,V) =
\frac{2\pi\tau_m R[1+r+(1-r)\pi\tau_m R] + 2(1-r)V^2}{[1+r+(1-r)\pi \tau_mR]^2 + (1-r)^2V^2}.\label{eq:RP_mean}
\end{align}
Taking the limit $r\to1$,
yields the mean activity of $\delta$-spikes, $\lim_{r\to1} P_{r,0,\pi} = \pi \tau_m R$, which coincides with the population firing rate as already shown in \cref{eq:P2R}.

Dirac $\delta$-pulses do not necessarily need to be emitted at the instant a neuron spikes (when $v\to\infty$), but when $v$ crosses a virtual threshold voltage $v_\text{thr} < \infty$.
The corresponding peak phase is $\psi=2 \arctan(v_\text{thr})$ and the mean presynaptic activity $P_{\delta,v_\text{thr}} := \langle p_{\delta, 2\arctan(v_\text{thr})}\rangle$ for pulses of the form \cref{eq:delta_pulse} becomes
\begin{equation}
P_{\delta,v_\text{thr}}(R,V) = \frac{\pi\tau_m R (1+ v_\text{thr}^2)}{(\pi\tau_m R)^2 + (V- v_\text{thr})^2} \; .
\label{eq:deltav_mean}
\end{equation}

Importantly, for the general family \eqref{eq:KJ_theta} of pulse functions $p_{r,\varphi,\psi}$, the mean presynaptic pulse activity \eqref{eq:p_mean} explicitly depends on the mean voltage $V$, i.e. $\partial_V P_{r,\varphi,\psi} \neq 0$, except for the limit $(r,\varphi,\psi)\to(1,0,\pi)$.
As I will show in \cref{subsec:5A}, it is exactly this voltage-dependence that is crucial for collective oscillations in case of instantaneous pulse-coupling.
The emergent macroscopic oscillations, however, are sensitive to the pulse shape:
Skewing the pulses slightly to the phase after the spike yields robust collective oscillations of inhibitory neurons, which are not present for symmetric pulse-coupling (\cref{fig:sim}). 
A thorough analysis how the pulse shape affects network synchronization is the focus of \cref{sec:5}. 
To anticipate, for RP pulses that are symmetric about the peak phase $\theta=\pi$, collective oscillations are restricted to a narrow and rather unrealistic region in parameter space. 
By contrast, for (asymmetric) pulses with their peak phase after the actual spike, $\theta>\pi$, collective oscillations emerge almost naturally in inhibitory networks ($J<0$) with excitatory input currents ($I_0 >0$).
This strongly reminds of interneuronal network gamma (ING) oscillations in inhibitory networks including synaptic kinetics with finite rise and decay times~\citep{Wang_Buzsaki_1996,brunel_hakim_1999,brunel_wang_2003}.

As a wrap-up, I have presented an exact low-dimensional description for globally pulse-coupled QIF neurons~\eqref{eq:QIF} in the thermodynamic limit, which equally holds for networks of $\theta$-neurons, see \cref{eq:theta}.
For pulses $p_{r,\varphi,\psi}$ of the general form \eqref{eq:KJ_theta}, the mean pulse activity $P_{r,\varphi,\psi}=\langle p_{r,\varphi,\psi}\rangle$ can conveniently be expressed in terms of the macroscopic variables. 
The time-asymptotic macroscopic dynamics of the QIF neurons is restricted to the invariant Lorentzian manifold, on which the collective behavior is exactly described by the RV dynamics \eqref{eq:FRE}.
For instantaneous pulse-coupling, one has to substitute
\begin{equation}
    I_\text{syn}=J P_{r,\varphi,\psi}(R,V)
    \label{eq:inst_pulse_coupling}
\end{equation}
in \cref{eq:QIF}, where $J$ indicates the pulse-coupling strength and the mean pulse activity $P_{r,\varphi,\psi}$ is given by \cref{eq:p_mean}.
The RV dynamics \eqref{eq:FRE} is two-dimensional and closed in two macroscopic variables---the firing rate $R$ and the mean voltage $V$.

\begin{figure}[!t]
\centering
\includegraphics[width=0.5\textwidth]{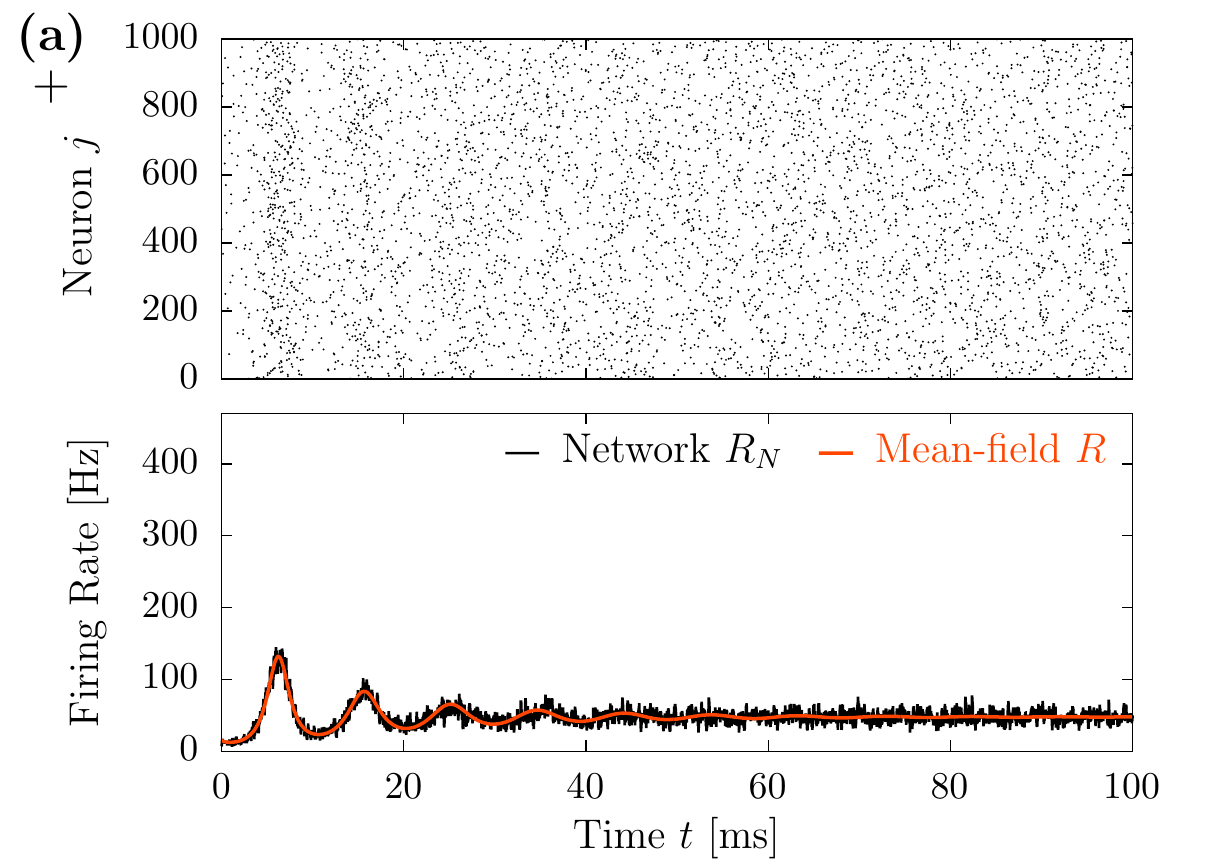}
\hspace{-0.25cm}
\includegraphics[width=0.5\textwidth]{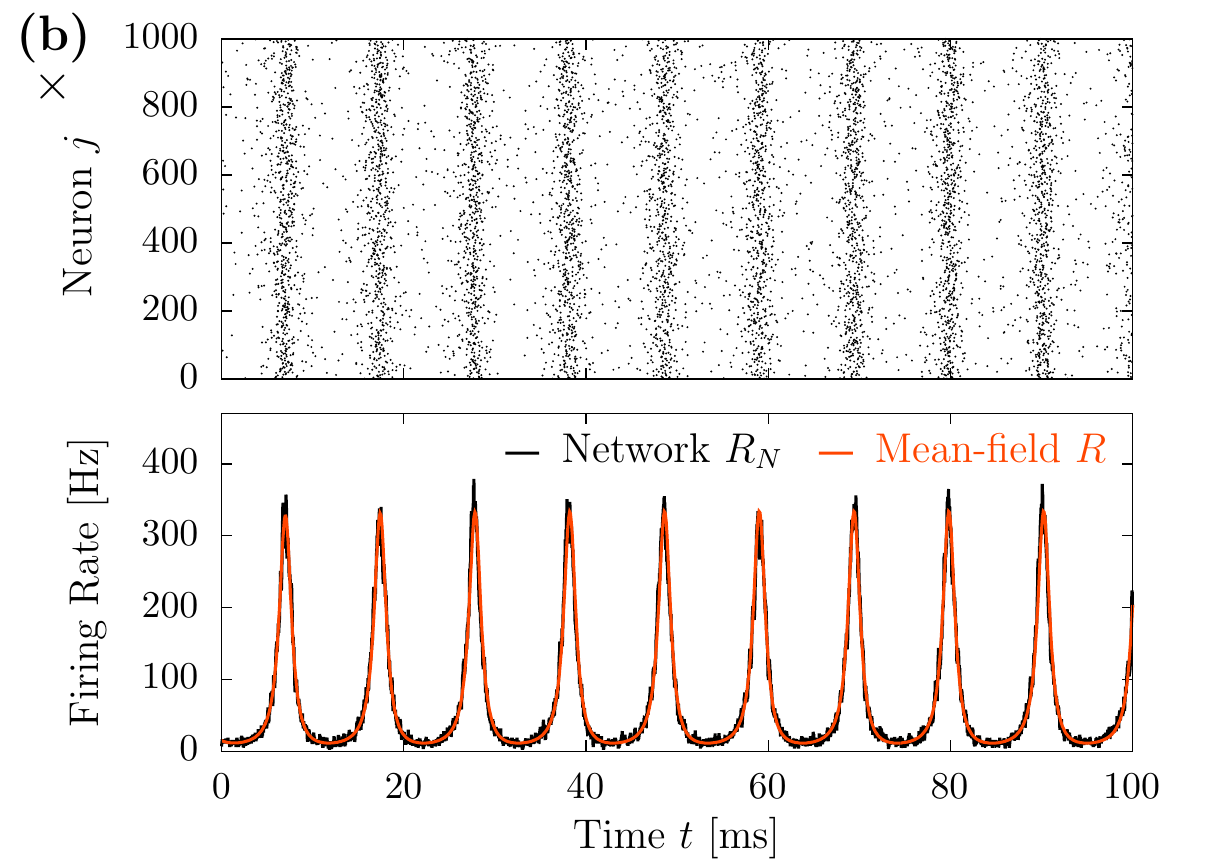}
\caption{Collective dynamics of inhibitory QIF neurons globally coupled via instantaneous pulses $p_{r,\varphi,\psi}$ with $r=0.95$ and $\psi=\pi$.
(a) Symmetric RP pulses ($\varphi=0$, red curve in \cref{fig:RP_pulses_all}a) drive the network into an asynchronous state.
(b) Asymmetric pulses slightly skewed to the phase after the spike ($\varphi=\pi/12$, red curve in \cref{fig:KJ_pulses_all}a) yield robust ING oscillations.
Top: raster plots show spike times of $1'000$ QIF neurons obtained from network simulations according to \cref{eq:QIF} with $N=10'000$ neurons.
Bottom: Excellent agreement between firing rates $R_N$ (black) and $R$ (orange) obtained from integrating the QIF network and the RV dynamics \eqref{eq:FRE} with Eqs.~\eqref{eq:RP_mean}
and \eqref{eq:KJasymm_mean} for (a) and (b), respectively. 
Network parameters are $\gamma=1, I_0=20, J=-12$; see $+$ in \cref{fig:RP_pulses_all}(c4) and $\times$ in \cref{fig:KJ_pulses_all}(d).
The membrane time constant is $\tau_m=10$~ms. 
Network simulations were performed using Euler-Maruyama integration with $dt=5\cdot10^{-3}$ms and the firing rate $R_N$ was computed using \cref{eq:RN} with $\tau_r=10^{-1}$ms.}
\label{fig:sim}
\end{figure}

The reduction of the RV dynamics greatly facilitates the investigation how the pulse shape affects collective dynamics of pulse-coupled spiking neurons, as the mean-field equations \eqref{eq:FRE} accurately reproduce the microscopic network dynamics \eqref{eq:QIF} of globally coupled QIF neurons with instantaneous pulses $p_{r,\varphi,\psi}$ (\cref{fig:sim}).
Initial voltages $v_j(0)$ of the microscopic network were chosen to follow a Cauchy-Lorentz distribution of half-width $R(0)$ and centered at $V(0)$ for an immediate match between network and RV simulations. 
For arbitrary initial conditions, one has to resort to the six-dimensional dynamics (\ref{eq:RV_Pls}~\&~\ref{eq:Pls}) with a properly chosen constant function $\mathcal M(k)$ for such a perfect agreement, see \citep{pcp_arxiv_2022}.

In the following, I will capitalize on the fact that the reduced RV dynamics \eqref{eq:FRE} perfectly capture network simulations of pulse-coupled spiking neurons,
and analyze them with respect to emergent collective behavior for various choices of the pulse function~\eqref{eq:KJ_theta}.
In doing so, I also show that the collective dynamics in \cref{fig:sim} can be predicted by linear stability analysis, see $+$ in \cref{fig:RP_pulses_all}(c4) and $\times$ in \cref{fig:KJ_pulses_all}(d) below.
The main focus of \cref{sec:5} lies on voltage-mediated synchronization, in general, and on collective oscillations in inhibitory networks, in particular, with instantaneous pulses whose peak occurs at the time of the presynaptic spike or shortly thereafter.

\section{Collective oscillations with instantaneous pulses}
\label{sec:5}
Networks of inhibitory interneurons provide a mechanism for coherent brain oscillations, particularly in the gamma band, as reciprocal inhibition turned out to be effective for neuronal synchrony~\citep{wang2010review}.
For Class 1 excitable neurons, such as $\theta$-neurons or the QIF model \eqref{eq:QIF}, collective oscillations---as a hallmark of synchrony---can be realized even with relatively fast inhibitory synapses~\citep{Ermentrout_1996}. 
Synaptic latency (including axonal delay) 
as well as rise and decay times contribute to determining synchronous firing patterns~\citep{vanVreeswijk1994,Wang_Buzsaki_1996,white1998synchronization,brunel_wang_2003,maex_schutter_2003,brunel_hansel_2006,wang2010review}.
Exact RV dynamics corresponding to \cref{eq:FRE} have been successfully employed to describe collective oscillations of inhibitory QIF neurons interacting via $\delta$-spikes with first-order~\citep{devalle2017,dumont2019macroscopic} and second-order synaptic kinetics~\citep{byrne2017mean,coombes_byrne_2019,clusella_ruffini_2022} or with delay~\citep{pazo_montbrio_2016,ratas2018macroscopic,Devalle2018}; see also \citep{keeley_byrne_2019,byrne2020next,byrne2022mean,di2018transition,bi2020coexistence,di2022coherent,avitabile2022cross} for extensions.
For instantaneous synapses, however, $\delta$-spike-interactions do not allow for macroscopic oscillations~\citep{montbrio_pazo_roxin_2015,juettner_martens_2021},
whereas instantaneous pulses of finite width have been reported to induce collective oscillations in inhibitory networks of the equivalent $\theta$-neurons~\citep{luke_barreto_so_2013,So-Luke-Barreto-14,laing_2015,lin_barreto_so_2020,juettner_martens_2021} interacting via symmetric AS pulses~\eqref{eq:AS_pulse}.
For globally coupled QIF neurons with (non-smooth) pulses of finite width, collective oscillations have so far only been found in excitatory networks, cf.~Eq.~(29) in~\citep{ratas_pyragas_2016}.

\changeB{In this section, I systematically study how the pulse shape affects collective oscillations in the two-dimensional RV dynamics~\eqref{eq:FRE} with instantanteous pulse-coupling.
I focus on identifying the transitions from asynchronous to synchronous collective behavior upon varying the coupling strength $J$ and the mean input $I_0$. Mathematically speaking, I will perform bifurcation analysis of macroscopic stationary (asynchronous) states and characterize  stability changes when varying control, or bifurcation, parameters.
When the asynchronous state loses stability and gives rise to collective oscillations (synchrony), these oscillations can be distinguished right at their onset having either (a) small amplitude and finite frequency, (b) high amplitude and finite frequency, or (c) high amplitude and arbitrarily low frequency.
Cases (a) and (b) correspond to a so-called Hopf bifurcation that is either super- or subcritical; case (c) corresponds to a SNIC or a homoclinic bifurcation.
The simplest scenario and quite common for collective oscillations to emerge is a supercritical Hopf bifurcation. 
The other cases require bistability of multiple steady states, and one typically finds saddle-node bifurcations close-by.

Before exploring the bifurcation structure of the collective dynamics for particular pulse shapes, I first prove in Section~\ref{subsec:5A} that the RV dynamics~\eqref{eq:FRE} with instantaneous synaptic transmission can generate collective oscillations through a Hopf bifurcation only if the recurrent synaptic input explicitly depends on the mean voltage $V$.
Such an effective voltage-coupling occurs for pulses of finite width, \cref{eq:RP_mean}, but also for Dirac $\delta$-pulses emitted before or after, but not at, the presynaptic spike, \cref{eq:deltav_mean}.
However, for symmetric RP pulses, as I show in \cref{subsec:3B}, collective oscillations are only feasible in a small range of inhibitory coupling ($J<0$) and mean inputs $I_0>0$, limited by Hopf and homoclinic bifurcation boundaries. 
For asymmetric right-skewed pulses, the bifurcation structure simplifies and yields collective oscillations for a large range of inhibitory coupling and positive mean inputs, see \cref{subsec:3C}, substantiating the notion of \cref{fig:sim}(b) that narrow pulses slightly skewed to the phase after the spike are a promising candidate for generating ING oscillations.
For completeness, I investigate the onset of collective oscillations for left-skewed pulses in \cref{subsec:3D}, though they are biologically less plausible and require recurrent excitation instead of inhibition.
}


\subsection{Collective oscillations \change{through} voltage-coupling}
\label{subsec:5A}
\changeB{The collective dynamics for networks of globally coupled spiking neurons with instantaneous synaptic transmission is exactly described by the RV dynamics \eqref{eq:FRE}.
On the Lorentzian manifold, the RV dynamics is two-dimensional and closed in the firing rate $R$ and mean voltage $V$ as the recurrent synaptic input $I_\text{syn}(t)$ depends instantaneously, and smoothly, on $R$ and $V$, i.e.\ $I_\text{syn}(t)=I_\text{syn}(R(t),V(t))$. 
}%
The onset of collective oscillations via a Hopf bifurcation can be determined using the eigenvalues 
\begin{equation*}
    \lambda_{1,2}= \frac12 \left[ \text{tr}(\text{Jac}) \pm \sqrt{\text{tr}(\text{Jac})^2 - 4\text{det}(\text{Jac})} \right]
\end{equation*} of the Jacobian of the RV dynamics \eqref{eq:FRE},
\begin{align}
	\text{Jac}= \begin{pmatrix}
	2V^*& 2R^*\\
	-2\pi^2\tau_m^2 R^* + \partial_R I_\text{syn}^* & 2V^* + \partial_V I_\text{syn}^*
	\end{pmatrix}\;, \label{eq:Jacobian}
\end{align}
evaluated at the fixed-point solution $(R^*,V^*)$ with 
\begin{equation}
    V^* = -\gamma/(2\pi\tau_m R^*) \quad \text{and} \quad I_\text{syn}^* = I_\text{syn}(R^*,V^*) ;
    \label{eq:fixed_point_relations}
\end{equation} 
the Jacobian~\eqref{eq:Jacobian} is obtained in rescaled time $\tilde t=\tau_m t$, and $\partial_x = \partial/ \partial x$ with $x\in\{R,V\}$.
\changeB{By definition \eqref{eq:RN}, the firing rate $R\geq 0$ is non-negative such that the fixed-point solution $(R^*,V^*)$ satisfies $V^*\leq 0$ because of \cref{eq:fixed_point_relations}; the inequality becomes strict for $\gamma > 0$.}

A Hopf bifurcation of the fixed point $(R^*,V^*)$ occurs if $\text{tr}(\text{Jac})=0$ and $\text{det}(\text{Jac}) >0$.
From $\text{tr}(\text{Jac})=0$, one has $\partial_V I_\text{syn}^* = - 4 V^*$. Since $V^*<0$ is negative for $\gamma>0$, the necessary condition for collective oscillations in \cref{eq:FRE} emerging through a Hopf bifurcation of the fixed point $(R^*,V^*)$ reads:
\begin{equation}
    \partial_V I_\text{syn}^* := \partial_V I_\text{syn}(R^*,V^*) > 0 \ . 
    \label{eq:Hopf_condition}
\end{equation}
That is, collective oscillations require recurrent coupling through the mean voltage $V$.
Put differently, if the recurrent synaptic input $I_\text{syn}$ is independent of $V$, i.e.~$\partial_V I_\text{syn}^* = 0$, then \cref{eq:Hopf_condition} cannot be satisfied for any fixed-point solution $(R^*,V^*)\in\mathbb R^+\times\mathbb R^-$, and the network of globally coupled QIF neurons will not synchronize. 

To show that collective oscillations are indeed possible if the first condition $\text{tr}(\text{Jac})=0$ is satisfied, it is necessary to prove that the second condition $\text{det}(\text{Jac})>0$ can also be satisfied. 
Assuming that $\text{tr}(\text{Jac})=0$ holds and using the fixed-point equation in \cref{eq:fixed_point_relations}, one obtains 
\[
\text{det}(\text{Jac}) 
=  (2\pi \tau_mR^*)^2 - \frac{\gamma^2}{(\pi \tau_m R^*)^{2}} - \frac{4\gamma}{\pi\tau_m} \frac{\partial_R I_\text{syn}^*}{\partial_V I_\text{syn}^*} \;.
\]
By assumption, $I_\text{syn}(R,V)$ is smooth, so for $\partial_V I^*_\text{syn}>0$ the right-hand side is well-behaved and can be expressed as a smooth function in $R^*$.
While $\text{det}(\text{Jac}) <0$ for small firing rates $R^*\to 0$, e.g.~due to strong inhibitory input $I_0 \ll 0$, the determinant $\text{det}(\text{Jac})$ will become positive for large firing rates $R^*\gg1$ (and/or for $\partial_R I^*_\text{syn}\ll 0$).
As $R^*=R^*(I_0)$ depends smoothly on the external current $I_0$---the functional dependence of $R^*$ on $I_0$, i.e.~the transfer function or $fI$-curve, can be obtained by setting the right-hand side of \cref{eq:FRE_Vdyn} to zero and inserting~\eqref{eq:fixed_point_relations}, see also \citep{devalle2017}---,
one can expect that for strong excitatory drive, $I_0\gg 1$, $R^*$ increases sufficiently such that both conditions, $\text{tr}(\text{Jac})=0$ and $\text{det}(\text{Jac})>0$, are simultaneously satisfied and give rise to collective oscillations.

Hence, only if the recurrent synaptic input $I_\text{syn}$ explicitly depends on the mean voltage $V$ can collective oscillations in the RV-dynamics~\eqref{eq:FRE} emerge through a Hopf bifurcation from a fixed-point solution $(R^*,V^*)$ with sufficiently large firing rate $R^*\gg0$. In other words, an asynchronous, typically high-activity state becomes unstable through an effective voltage-coupling and the spiking neurons start firing synchronously (as in \cref{fig:sim}b).

Now, when does the recurrent input $I_\text{syn}$ explicitly depend on the mean voltage $V$? 
This occurs naturally for electrical coupling through gap junctions, see \citep{Pietras_et_al_2019} and the Discussion~\cref{subsec:D1}.
In the absence of gap junctions, and as shown before in \cref{sec:3}, an effective voltage-coupling can also be achieved with instantaneous pulses that are different from Dirac $\delta$-pulses emitted at the presynaptic spike, see also \citep{ratas_pyragas_2016}.
For instantaneous pulses $p_{r,\varphi,\psi}$ given by \cref{eq:KJ_theta}, the synaptic current $I_\text{syn}(t) = J P_{r,\varphi,\psi}(R,V)$ depends explicitly on the voltage $V$, cf.~\cref{eq:p_mean}, and thus generally allows for collective oscillations of globally coupled QIF neurons; one can achieve that \cref{eq:Hopf_condition} holds true by freely tuning the coupling strength $J$.
Whether collective oscillations occur in realistic parameter regimes, e.g., with global inhibition $J<0$ and excitatory drive $I_0>0$, crucially depends on the pulse-shape parameters $r,\varphi,\psi$, as I will demonstrate in the following \cref{subsec:3B,subsec:3C,subsec:3D}.

To anticipate whether particular pulse shapes $p_{r,\varphi,\psi}$ require excitatory or inhibitory coupling for neural synchronization, one can compute $\partial_V P_{r,\varphi,\psi}(R,V)$ and recall that any fixed-point solution $(R^*, V^*)$ satisfies $R^*>0$ and $V^*<0$ for $\gamma>0$.
Then, for RP pulses \eqref{eq:RP_voltage}, one has
\begin{align}
\partial_V P_{\text{RP},r}(R^*,V^*) =
\frac{4(1-r^2) [1+r+(1-r)\pi\tau_m R^*] V^*}{\left\lbrace[1+r+(1-r)\pi \tau_mR^*]^2 + (1-r)^2{V^*}^2\right\rbrace^2} < 0.\label{eq:dRP_dV}
\end{align}
Thus, to satisfy condition \eqref{eq:Hopf_condition}, $J \partial_V P_{\text{RP},r}(R^*,V^*) >0$, RP pulses require inhibitory coupling ($J<0$) to generate collective oscillations.
For shifted Dirac $\delta$-pulses \eqref{eq:delta_pulse}, the sign of
\begin{align}
\hspace{0.0cm}\partial_V P_{\delta,\text{thr}}(R^*,V^*) =
\frac{2\pi\tau_m R^* (v_\text{thr}-V^*)(1+v_\text{thr}^2)}{[(\pi \tau_mR^*)^2 + (V^*-v_\text{thr})^2]^2}\label{eq:dPdelta_dV}
\end{align}
depends on the virtual threshold value $v_\text{thr}$.
If a neuron emits the Dirac $\delta$-pulse shortly after its spike (i.e.~when it is recovering from its reset, $v \leftarrow -\infty$), then $v_\text{thr} < 0$ and $\partial_V P_{\delta,\text{thr}}(R^*,V^*) < 0$, so collective oscillations occur for inhibitory coupling ($J<0$).
If the pulse is emitted before the neuron spikes, then $v_\text{thr} \gg 0$, $\partial_V P_{\delta,\text{thr}}(R^*,V^*)$ can become positive, and collective oscillations require excitatory coupling ($J>0$).
For other pulse shapes, it is more laborious to draw similar conclusions from $\partial_V P_{r,\varphi,\psi}(R^*,V^*)$ alone.
However, the general trend, which will become clear in the following \cref{subsec:3B,subsec:3C,subsec:3D}, is that collective oscillations occur with inhibitory coupling when the mean of the pulse $p_{r,\varphi,\psi}(\theta)$ coincides with the spike or is shifted to its right, whereas pulses whose mean is shifted to the left of the spike rather require excitatory coupling to synchronize the network.

\subsection{Rectified-Poisson (RP) pulses $(\varphi=0,\psi=\pi)$}
\label{subsec:3B}
For symmetric pulses ($\varphi=0$) of finite width ($r<1$) centered around the spike threshold ($\psi=\pi$),
the mean pulse activity $P_{\text{RP},r}$ is given by \cref{eq:RP_mean}.
In \cref{fig:RP_pulses_all}(a), it can be appreciated that the smaller $r<1$, the wider the RP pulse \eqref{eq:RP_voltage} about the peak phase $\pi$ at which the presynaptic voltage diverges (bottom panel).
The shape of the RP pulses (color coded for different $r$) is similar to AS pulses \eqref{eq:AS_pulse} (grey).
The major advantage of RP pulses is that $P_{\text{RP},r}$ is a simple function in $R$ and $V$, see \cref{eq:RP_mean}, whereas no such closed form exists for arbitrary AS pulses with shape parameter $n\in \mathbb N$, see \cref{appsec:C,sec:app_pulses}. 
It is thus straightforward to perform linear stability analysis of the RV dynamics~\eqref{eq:FRE} with RP pulses, but not with arbitrary AS pulses. 
All bifurcation boundaries can be obtained analytically, except for that of the homoclinic, which is a global bifurcation and has to be determined numerically~\citep{doedel2007auto,gast2019pyrates}.
In the following, I set $\gamma=1$ and refrain from a denormalization with respect to $\gamma$ as it would affect the mean pulse activity $P_{r,\varphi,\psi}$ in a nontrivial way; for different choices of $\gamma >0$, I have not noticed any qualitative differences.

For excitatory coupling ($J>0$), RP pulses are not capable of generating collective oscillations, as expected for reasonable parameter choices and as predicted by \cref{eq:dRP_dV}. 
Instead, the cusp-shaped region in \cref{fig:RP_pulses_all}(b) features bistability between two asynchronous states: a low-activity state (LAS) and a high-activity state (HAS). 
In line with the results for $\delta$-spikes by~\cite{montbrio_pazo_roxin_2015}, the narrower the RP pulse ($r\nearrow1$), the bigger the cusp-shaped region of bistability until it finally coincides with the one found for $r=1$ (black); the boundaries of the bistability regions are SN bifurcations and meet at a codimension-2 bifurcation point (Cusp).

\begin{figure}[!t]
  \centering
  \begin{minipage}[c]{.6\textwidth}
  \centering
  \hspace{-1.65cm}\includegraphics[height=5.25cm]{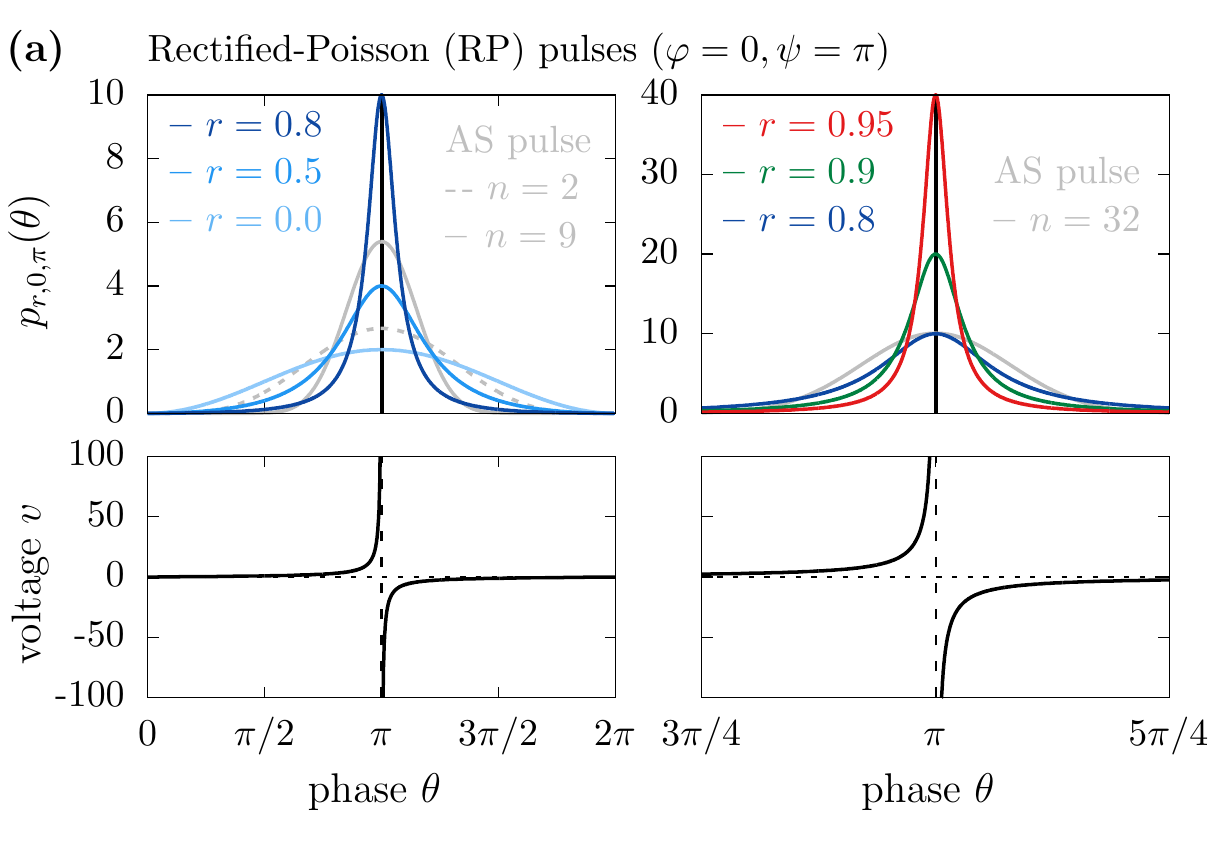}
  \end{minipage}
  \hspace{-1.65cm}\begin{minipage}[c]{.39\textwidth}
  \begin{flushleft}
  \includegraphics[height=4.5cm]{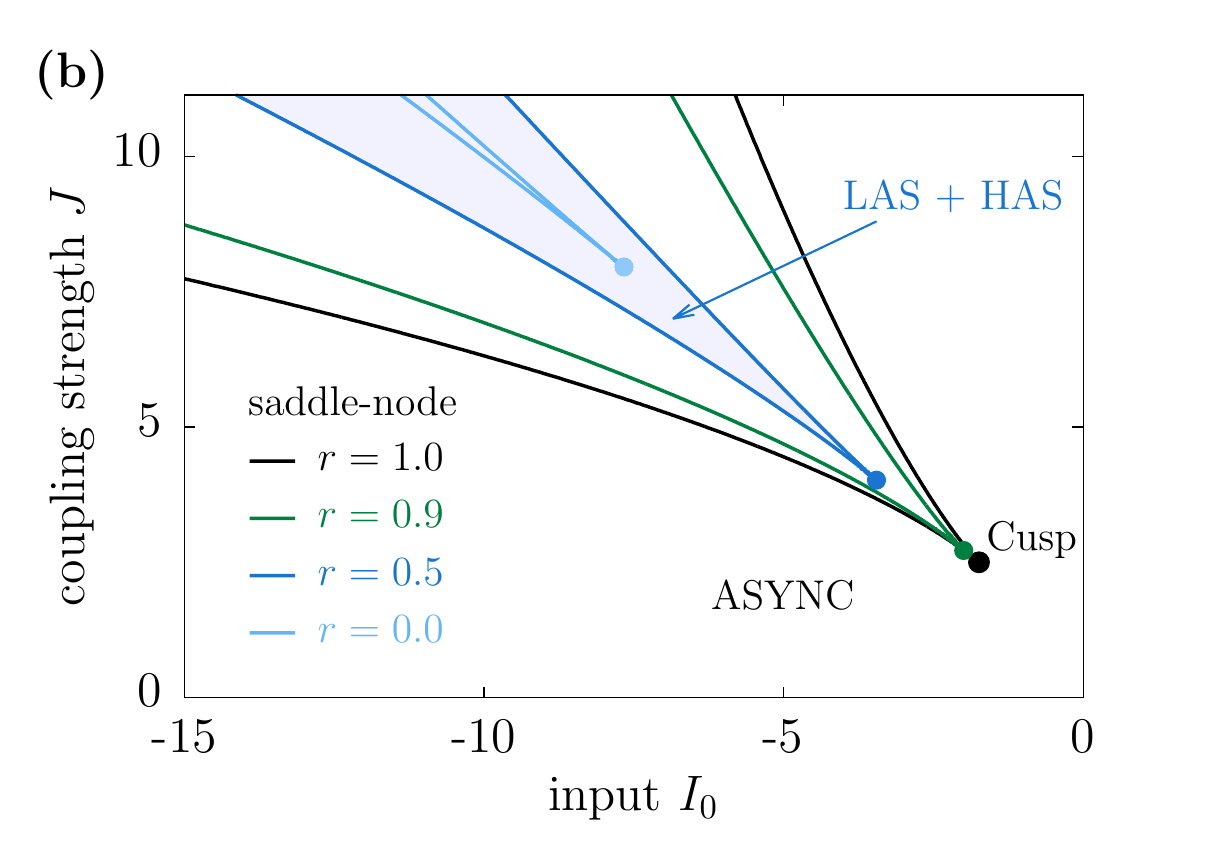}\hfill
  \end{flushleft}
  \end{minipage}\\[-0.25cm]
  \includegraphics[height=4.5cm]{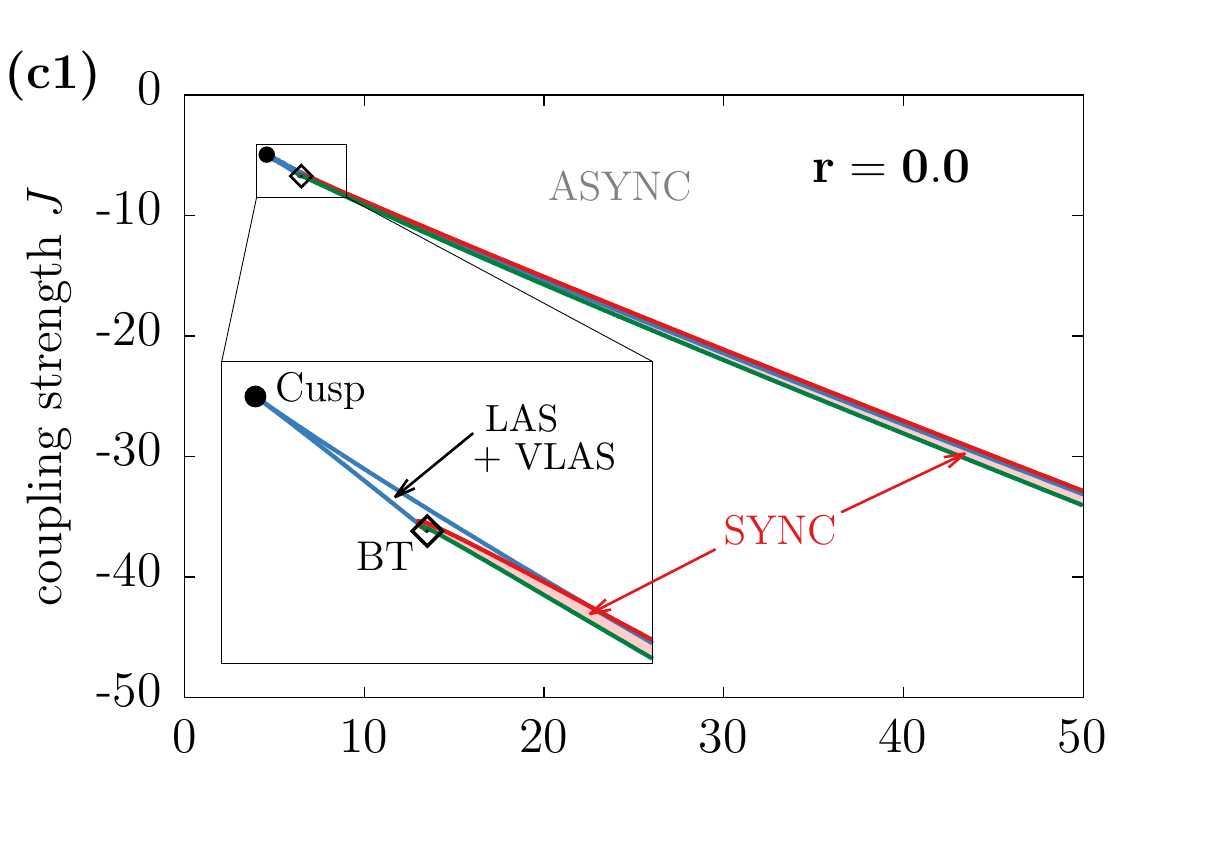}\hspace{-0.5cm}
  \includegraphics[height=4.5cm]{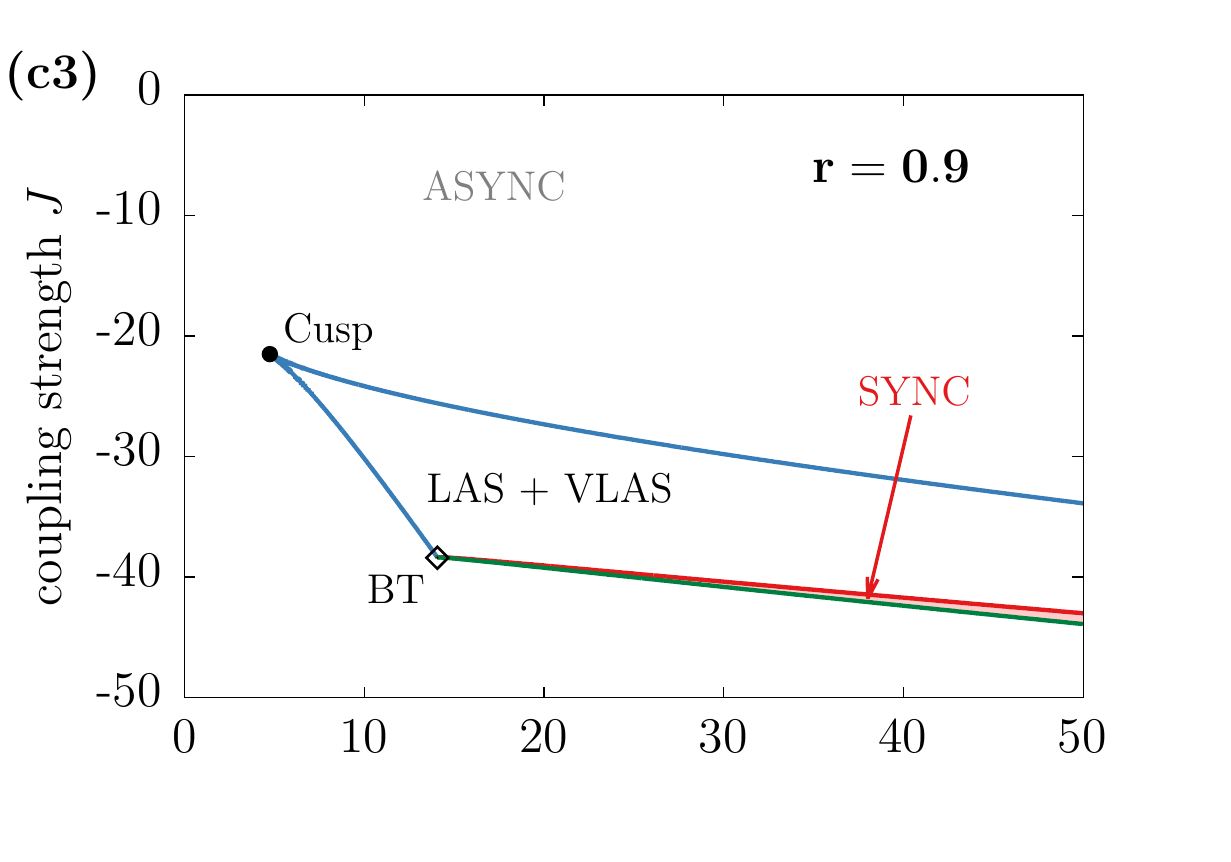}\hspace{0cm}\\[-0.5cm]
  \includegraphics[height=4.5cm]{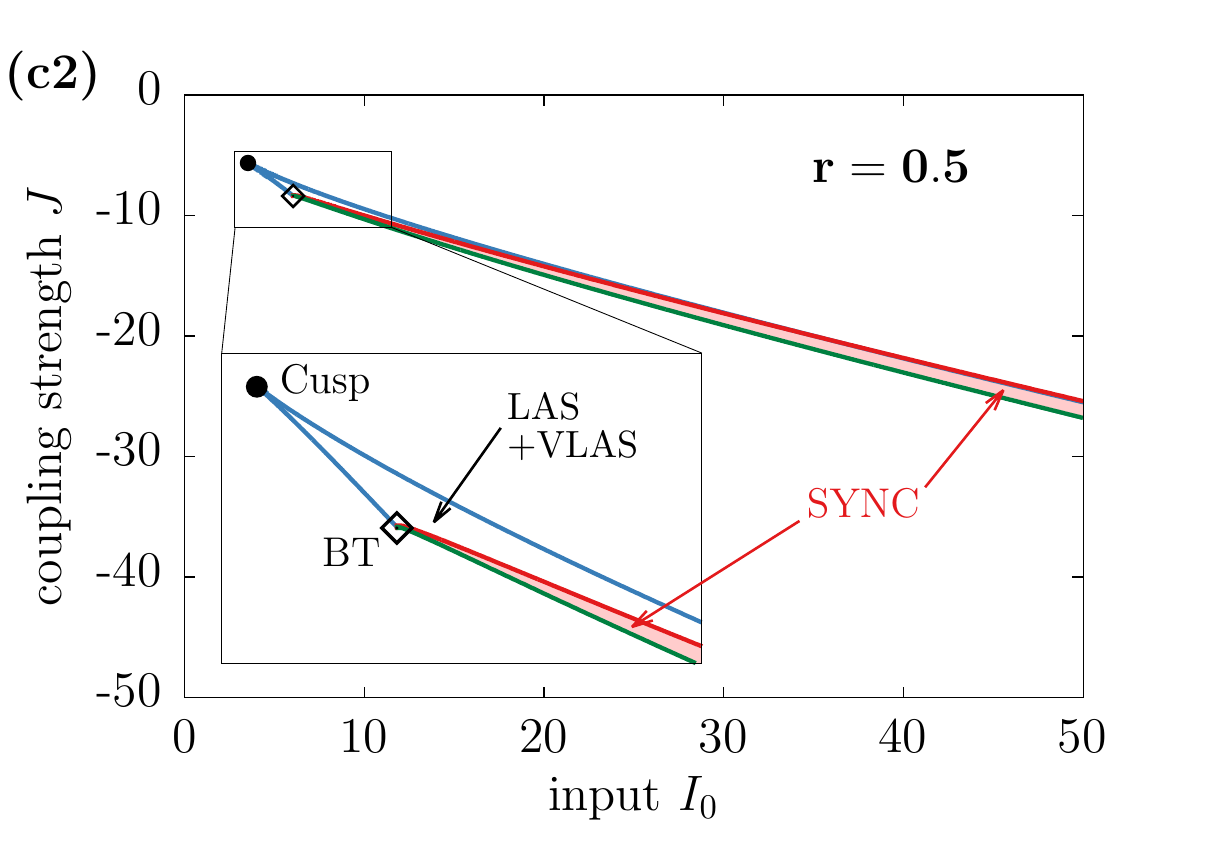}\hspace{-0.5cm}
  \includegraphics[height=4.5cm]{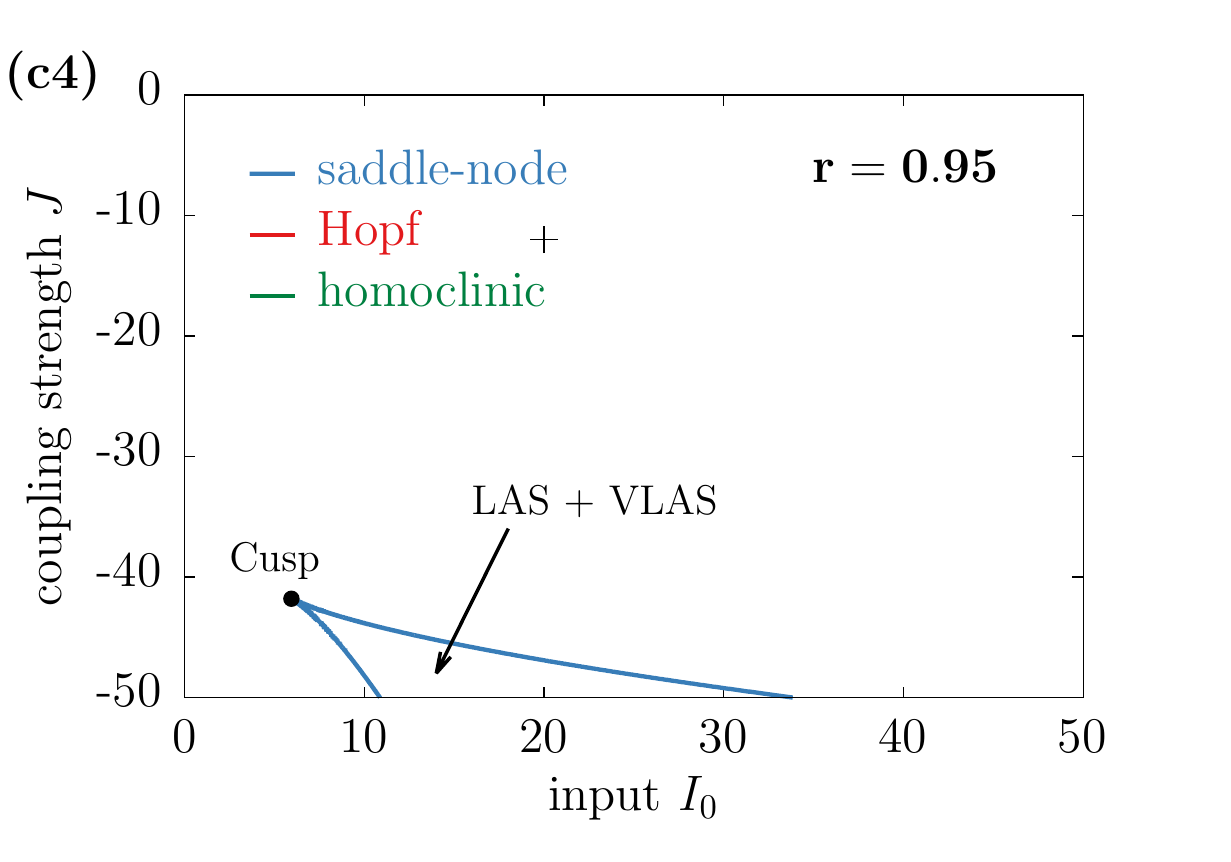}
  \caption{Phase diagrams of the RV dynamics \eqref{eq:FRE} for instantaneous RP pulses. 
  (a) Comparison of pulse profiles: RP pulses of various width $r \leq 1$ (color coded) are symmetric about the peak phase $\theta=\pi$ corresponding to the instant where the voltage $v$ diverges; see bottom panels for the link between $v$ and the theta-phase $\theta = 2\arctan(v)$.
  The larger $r\nearrow1$, the narrower the pulse (note the different $x$-axis-scale in the right panel).
  AS pulses~\eqref{eq:AS_pulse} with different width parameters $n\in\mathbb N$ are shown in grey for comparison.
  The RP pulse with $r=0$ coincides with $p_{\text{AS},1}$.
  (b) Bistability between low- and high-activity, asynchronous states (L/HAS) in excitatory networks ($J>0$). Saddle-node bifurcation boundaries are color-coded according to the pulse width $r$ of the RP pulses shown in (a).
  (c1--4) Collective oscillations in narrow parameter regions (red, SYNC) for inhibitory networks ($J<0$) and decreasing pulse width $r=0, 0.5, 0.9, 0.95$.
  Insets in (c1--2) zoom in on the general bifurcation structure near the Bogdanov-Takens (BT) point, where saddle-node (blue), (supercritical) Hopf (red) and homoclinic (green) bifurcation curves meet.
  The $+$ in (c4) denotes the parameters used for numerical simulations in \cref{fig:sim}(a).}
  \label{fig:RP_pulses_all}
\end{figure}

For inhibitory networks ($J<0$) with positive input currents ($I_0 > 0$), see \cref{fig:RP_pulses_all}(c), the asynchronous steady state loses stability at a supercritical Hopf bifurcation (red curve), \change{where collective oscillations arise with small amplitude. With increasing $|J|$, i.e.~stronger inhibition, the amplitude of the collective oscillations grows, whereas their frequency decreases up to the critical point at the green curve, where the frequency becomes zero and collective oscillations are destroyed; this critical point is called a homoclinic bifurcation. The red-shaded region between the red and the green curve, where collective oscillations exist and neurons spike in synchrony (that is why this region is labelled SYNC), occurs for limited values of inhibition only.}
Similar results were obtained by~\cite{juettner_martens_2021} for the case $r=0$ and by~\cite{luke_barreto_so_2013,lin_barreto_so_2020} for wide AS pulses ($n=2,9$).
There is also a region of bistability between a LAS and a very low-activity state (VLAS), bounded by the (blue) saddle-node curve, that meets the Hopf and homoclinic curves at the Bogdanov-Takens (BT) point, which is another codimension-2 bifurcation.
For larger $r$, i.e.\ for narrower pulses, this BT point moves further down so that collective oscillations require strong inhibitory coupling $J \ll 0$.
On top of that, collective oscillations remain confined to a narrow region in parameter space, which becomes almost negligible for RP pulses that are reasonably localized with $r > 0.8$, see Fig.~\ref{fig:RP_pulses_all}(c3--4).

\subsection{Right-skewed pulses $(\varphi>0~\text{\normalfont and/or}~\psi>\pi)$}
\label{subsec:3C}
Shifting ($\psi \neq \pi$) and/or skewing ($\varphi \neq 0$) the symmetric RP pulses critically affects the collective dynamics of globally coupled QIF neurons.
By introducing a virtual threshold $v_\text{thr}=\tan(\psi/2)$ at which the emitted pulse is strongest, one can shift the pulse to the right ($\psi>\pi$) or to the left ($\psi<\pi$) of the QIF threshold $v_p$ at infinity (\cref{fig:KJ_pulses_all}a, blue curve);
note that $v_\text{thr} \to v_p=\infty$ for $\psi\to\pi$.
By introducing an asymmetry parameter $\varphi\neq0$, one can skew the pulse (\cref{fig:KJ_pulses_all}a, red and green curves).
\changeB{Asymmetric pulses $p_{r,\varphi,\psi}(\theta)$ do not take on their maximum at $\theta = \psi$ when $v$ diverges and the neuron spikes, but their peak is shifted to the right ($\varphi>0$) or to the left ($\varphi<0$) of the spike. 
Though the peak shift is rather small\footnote{
An asymmetric pulse $p_{r,\varphi,\psi}$ with $\varphi\neq 0$ and $\psi=2\arctan(v_\text{thr})$ has its peak when the presynaptic neuron's voltage $v$ reaches $v_{max} = \frac{(1+r) v_\text{thr} \cos(\varphi/2)+(1-r)\sin(\varphi/2)}{(1+r) \cos(\varphi/2)-(1-r) v_\text{thr} \sin(\varphi/2)}$, and the pulse vanishes at $v_{min} = \frac{(1+r)v_\text{thr} \sin(\varphi/2)-(1-r)\cos(\varphi/2)}{(1+r) \sin(\varphi/2)+(1-r) v_\text{thr} \cos(\varphi/2)}$. 
The corresponding peak and trough phases $\theta$ can be found via $\theta_{max/min}=2\arctan(v_{max/min})$.},
the mean of the pulse moves clearly away from the threshold phase $\psi=2\arctan(v_\text{thr})$}, see the vertical red-dashed line in Fig.~\ref{fig:KJ_pulses_all}(a).

\begin{figure*}[!t]
  \centering
  \hspace{0.3cm}
  \includegraphics[height=4.8cm]{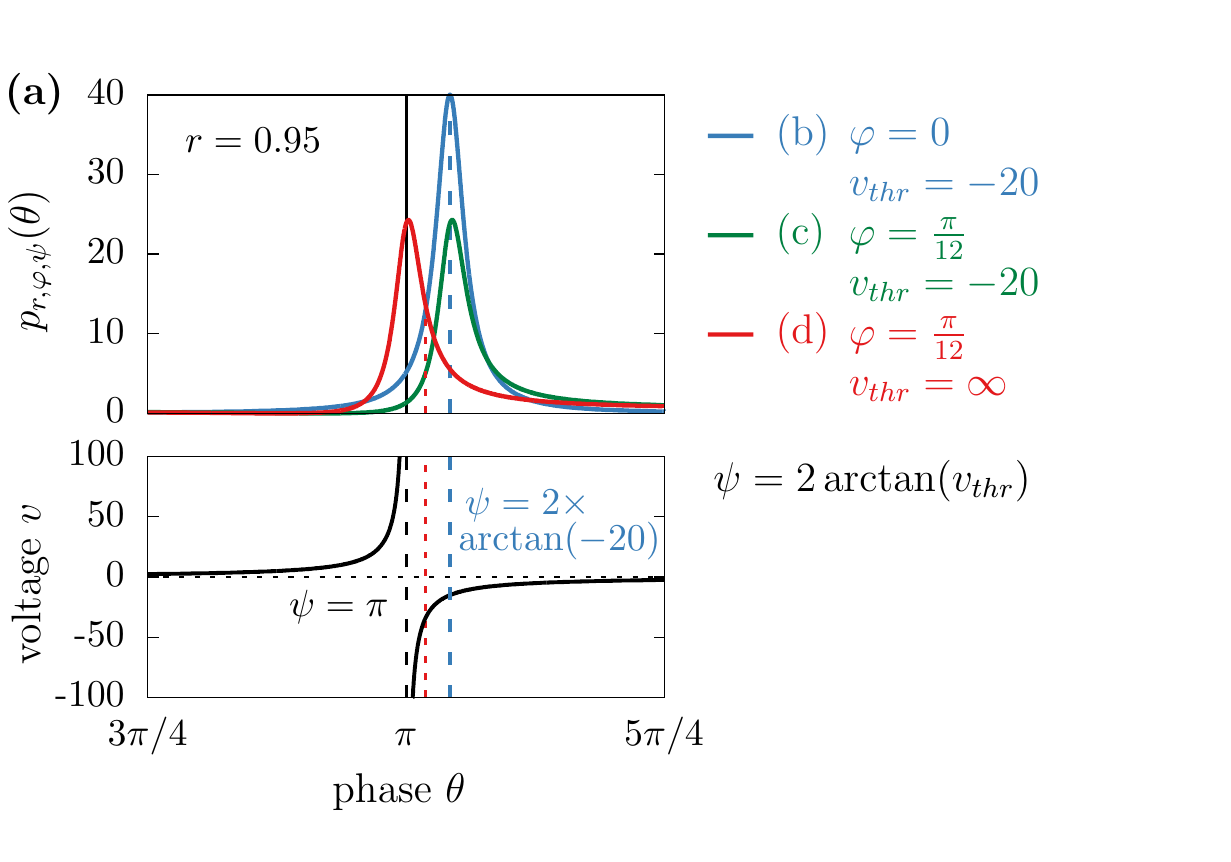}\hspace{-0.9cm}
  \includegraphics[height=4.8cm]{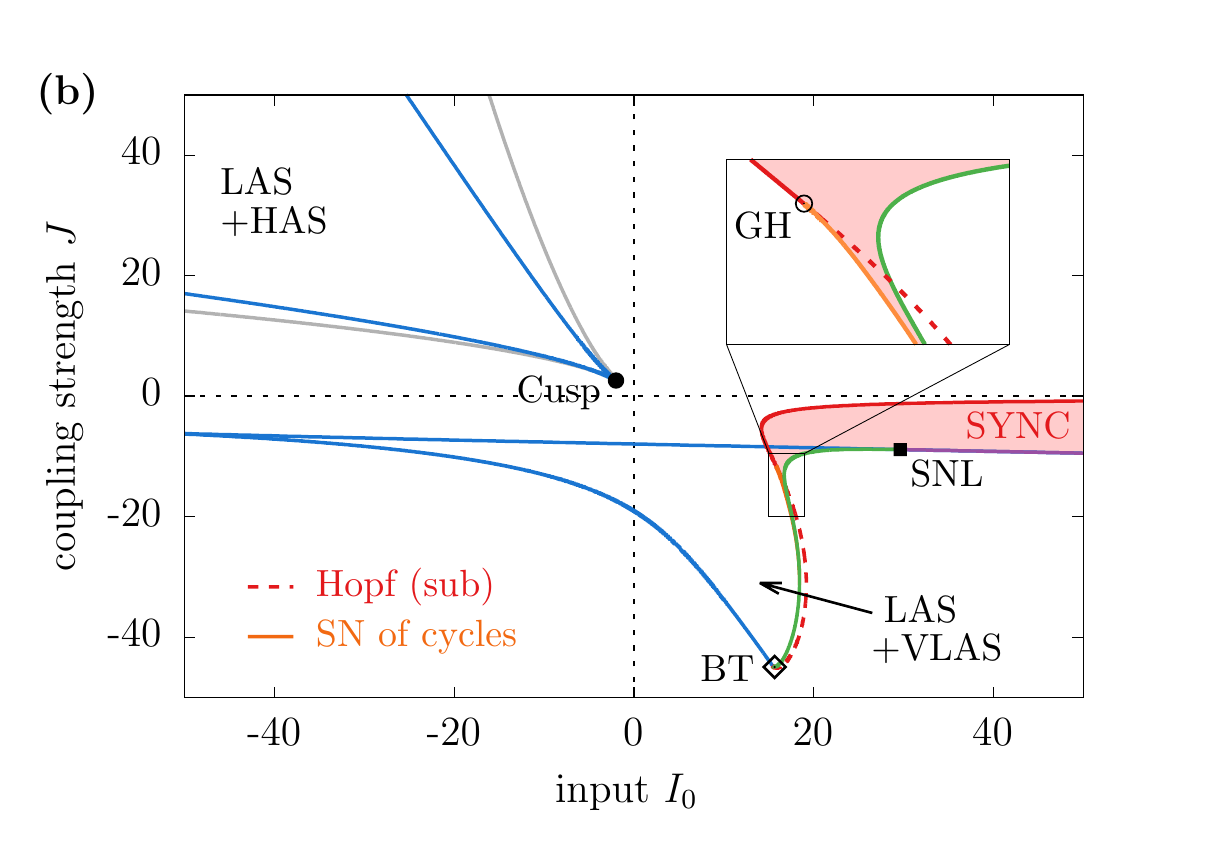}\hspace{0cm}\\[-0.5cm]
  \includegraphics[height=4.8cm]{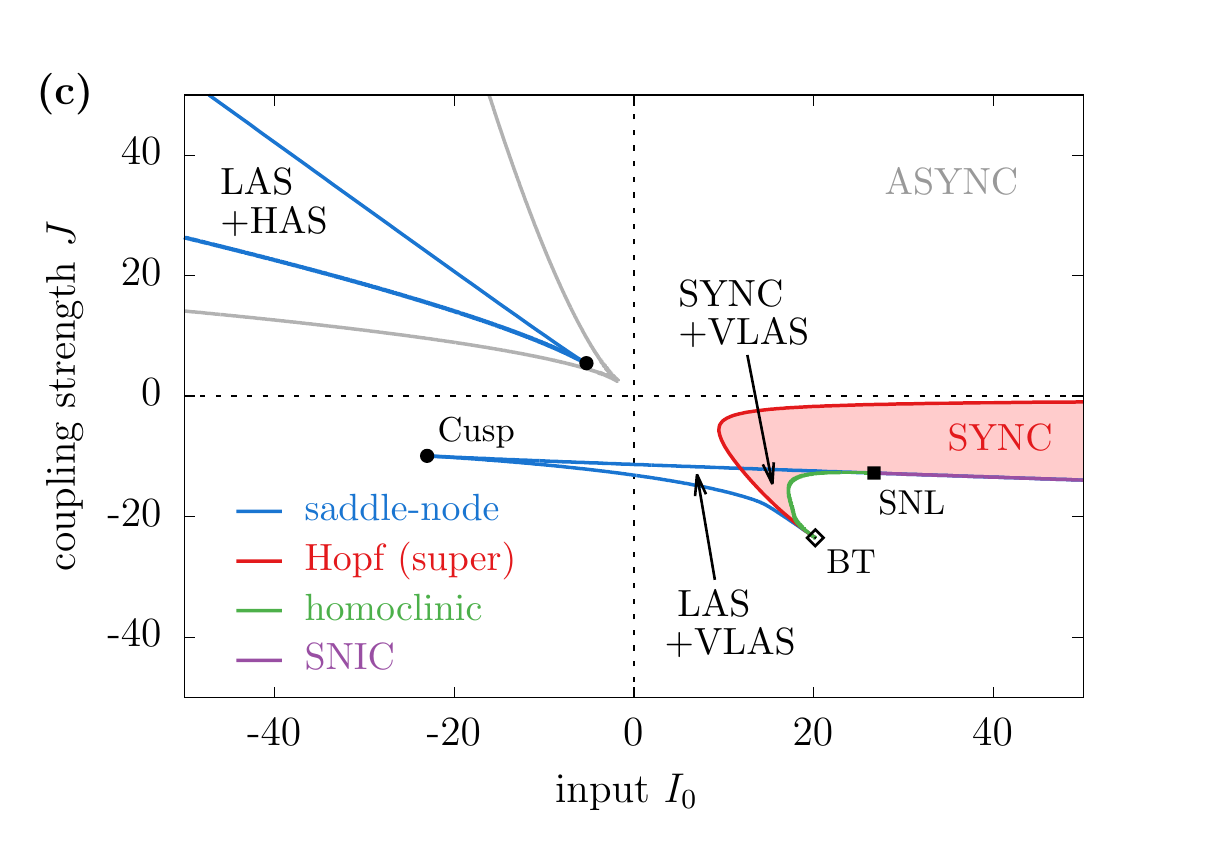}\hspace{-0.5cm}
  \includegraphics[height=4.8cm]{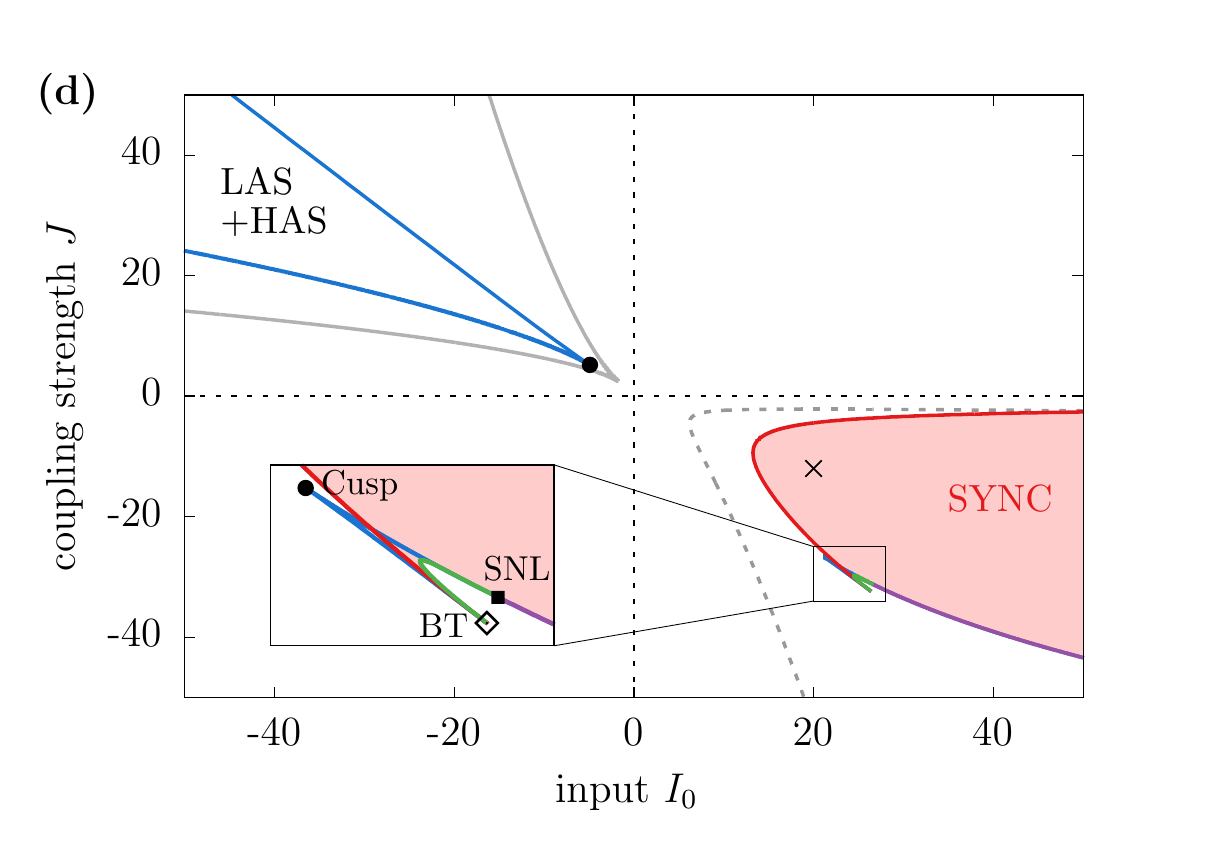}
  \caption{Phase diagrams of the RV dynamics \eqref{eq:FRE} for right-skewed pulses.
  (a) Comparison of pulse profiles that are shifted (blue; $\varphi=0, \psi=-2\arctan20$), asymmetric (red; $\varphi =\pi/12, \psi=\pi$) or both (green; $\varphi=\pi/2, \psi=-2\arctan20$); pulse width is $r=0.95$.
  The mean of the shifted pulse coincides with its peak phase $\theta=\psi$ (blue dashed), whereas the mean of the asymmetric pulse (red dotted) is shifted to the right of its peak.
  (b--d) Phase diagrams of the collective dynamics for pulses shown in (a). Collective oscillations are found in the red shaded regions (``SYNC'') for inhibitory coupling ($J<0$) and excitatory drive ($I_0>0$). Asymmetric pulses yield larger SYNC regions, while bistability regions (between low-activity states (LAS) or collective oscillations and very low-activity states (VLAS)) are drastically reduced. 
  Boundaries for codimension 1-bifurcations are colored lines: saddle-node (blue), supercritical Hopf (red solid), subcritical Hopf (red dashed), homoclinic (green), SNIC (violet), saddle-node of cycles (orange).
  Codimension 2-bifurcation points are marked by symbols: Cusp (circle), Bogdanov-Takens (BT, diamond), Saddle-Node Separatrix Loop (SNL, square).
  In (d) the $\times$-symbol denotes the parameter values for numerical simulations in \cref{fig:sim}(b)
  and the gray dashed curve is the supercritical Hopf boundary for the RV dynamics~\eqref{eq:FRE_syn} with first-order synaptic kinetics, $\tau_m=2\tau_d=10$ms, $\tau_r=0$ms, in the limit of $\delta$-spikes $(r,\varphi,\psi)\to(1,0,\pi)$.
  }
  \label{fig:KJ_pulses_all}
\end{figure*}

As before, one can perform linear stability analysis of the RV dynamics~\eqref{eq:FRE} now with mean presynaptic input $P_{r,\varphi,\psi}$ given by Eq.~\eqref{eq:p_mean} for general pulses \eqref{eq:KJ_theta}. 
As motivated in \cref{sec:3}, I will focus on pulses $p_{r,\varphi,\psi}$ whose mean is shifted to the right of the spike, that is, $\varphi > 0$ and/or $\psi> \pi$.
In \cref{fig:KJ_pulses_all}(b), I shift the narrow RP pulse ($r=0.95$, red in \cref{fig:RP_pulses_all}a) such that it reaches its peak when the presynaptic voltage crosses the virtual threshold $v_\text{thr}=-20$ when recovering from the reset after its spike. 
\changeB{This shift immediately affects the SYNC region of collective oscillations:
While the SYNC region literally falls out of view for narrow RP pulses (with $v_\text{thr}=\infty$), for shifted RP pulses the SYNC region moves closer to $J=0$ and grows as the homoclinic bifurcation curve (green) is pushed away from the Hopf curve (red).
Shifting the pulse thus facilitates neuronal synchrony.
Although the bifurcation structure appears more involved, the effect on the collective dynamics is small. The following technical details can hence be skipped, I report them here only for completeness.}
The cusp-shaped region of bistability between the LAS and the VLAS grows into the $(I_0<0)$-region---albeit with negligible effects on the overall dynamics because the VLAS has a diminutive basin of attraction (not shown)---and the upper saddle-node curve (blue) coincides with the homoclinic curve beyond the Saddle-Node-Separatrix-Loop (SNL) bifurcation point.
The violet curve to the right of the SNL point denotes a Saddle-Node on Invariant Circle (SNIC) bifurcation that terminates the collective oscillations.
Furthermore, the supercritical Hopf bifurcation (red solid) becomes subcritical (red dashed) at a Generalized Hopf (GH) bifurcation point, from which a saddle-node of cycles (SN of cycles) bifurcation curve (orange) emanates, see inset in \cref{fig:KJ_pulses_all}(b).
In the red region bounded by the supercritical Hopf/SN of cycles curve and the homoclinic curve, there is bistability between the VLAS and SYNC;
that is, the LAS loses stability and gives rise to collective oscillations at the Hopf bifurcation.
For excitatory coupling ($J>0$), on the other hand, shifting the RP pulse hardly affects the cusp-shaped region of bistability between the LAS and the HAS.

Introducing small right-skewed asymmetry ($\varphi > 0$) \change{allows for an even broader parameter region of collective oscillations, while simplifying the bifurcation scenario to great extent}.
In \cref{fig:KJ_pulses_all}(c), I consider pulses that are both shifted ($v_\text{thr}=-20$) and skewed ($\varphi = \pi/12$), see the green pulse in \cref{fig:KJ_pulses_all}(a).
The Hopf bifurcation is now always supercritical and the bistability region between LAS and VLAS shrinks drastically, whereas the bistability region between the VLAS and SYNC is only slightly increased. 
Compared to the case $\varphi=0$, also the bistability region between LAS and HAS shrinks for $J>0$.

\changeB{In \cref{fig:KJ_pulses_all}(d) I consider only skewed, but not shifted, pulses $p_{0.95,\pi/12,\pi}$ to disentangle the effect of the asymmetry from that of the shift parameter.}
For large $r=0.95$ and small $|\varphi| \ll 1$, an individual pulse (red curve in \cref{fig:KJ_pulses_all}a) reaches its maximum for $v_{max} = (r+1)/(r-1)/\tan(\varphi/2)$ almost immediately after the presynaptic spike at $\theta=\pi$. 
However, the asymmetry parameter $\varphi$ shifts the bulk of the pulse, and hence its mean, further away from $\pi$, which has a non-trivial effect on the collective dynamics:
one may have expected an overall picture of the collective dynamics to be somewhere in between that of non-shifted and shifted RP pulses; 
yet, the skewed pulse dramatically enhances the SYNC regime of collective oscillations, that dominates the parameter region for $J<0$ and $I_0>0$.
Moreover, the bistability regions within the triangle of Cusp, SNL and BT points become almost negligible.
\changeB{This scenario strongly resembles the case for synaptic dynamics with a dominant SYNC region bounded by a supercritical Hopf bifurcation~\citep{devalle2017,clusella_ruffini_2022}. 
In fact, if one considers first-order synaptic dynamics $\tau_d \dot S = - S + P_{r,\varphi,\psi}$ with synaptic decay time constant $\tau_d>0$, collective oscillations emerge already for $\delta$-spike-interactions, $(r,\varphi,\psi)\to (1,0,\pi)$. 
Consistent with the literature~\citep{devalle2017}, collective oscillations always emerge via a supercritical Hopf bifurcation (\cref{fig:KJ_pulses_all}d, dashed grey line).}
Noteworthy, the parameter regions of collective oscillations almost coincide for instantaneous asymmetric pulses and for synaptic kinetics triggered by $\delta$-spikes.


\subsection{Left-skewed pulses $(\varphi<0~\text{\normalfont and/or}~\psi<\pi)$}
\label{subsec:3D}
For biological plausibility, I previously considered pulses whose mean coincided with the QIF threshold at infinity or was shifted to the neuron's refractory phase after the spike (i.e.\ the mean of the pulse had a phase $\theta_\text{mean}\geq\pi$).
For the sake of completeness, I briefly report on pulses whose mean is advanced to the phase before the spike.
The red curve in \cref{fig:left-pulses}(a) describes a left-skewed pulse that I obtained from mirroring the asymmetric red pulse in \cref{fig:KJ_pulses_all}(a) by setting $\varphi = \pi/12 \mapsto -\pi/12$ while keeping $r=0.95$ and $\psi=\pi$, i.e.\ $v_\mathrm{thr}=\infty$, as before.
The mean is now shifted to the left (the dashed line indicates the spike at $\theta=\psi$), which changes the phase diagram drastically, compare~\cref{fig:left-pulses}(b) with \cref{fig:KJ_pulses_all}(d).
\changeB{First, the synchronous state of collective oscillations (SYNC) does not exist for recurrent inhibition, but requires excitation ($J>0$).}
Then, the cusp-shape bifurcation region that dominated the upper left part ($J>0,I_0<0$) in \cref{fig:RP_pulses_all,fig:KJ_pulses_all} has shrunk to a small triangular region bounded by the three codimension-2 bifurcation points (BT, SNL, and Cusp), see inset in \cref{fig:left-pulses}(b).
The Hopf bifurcation line emanating from the BT point separates the triangular bistability region into a lower part, where two asynchronous states (a low and a high activity state) coexist, and an upper part, where the high-activity asynchronous state (HAS) has lost stability and given rise to collective oscillations.
Overall, the phase diagram strongly resembles the case of gap junction-coupling in interplay with $\delta$-spikes~\citep[Fig.~7 of][]{Pietras_et_al_2019}.

What is more, the shape of left-skewed pulses has hardly any impact on the phase diagram.
For advanced Dirac $\delta$-pulses $p_{\delta,\psi}(\theta)$, \cref{eq:delta_pulse} with $\psi=2\arctan(v_\text{thr}) < \pi$, see the dark gray curve in \cref{fig:left-pulses}(a) with virtual threshold $v_\text{thr}=20$, the phase diagram is only slightly shifted downwards and the triangular bistability region has become a bit smaller (dark gray lines in \cref{fig:left-pulses}b).

\begin{figure}[!t]
  \centering
  \hspace*{0.0cm}\includegraphics[height=5.2cm]{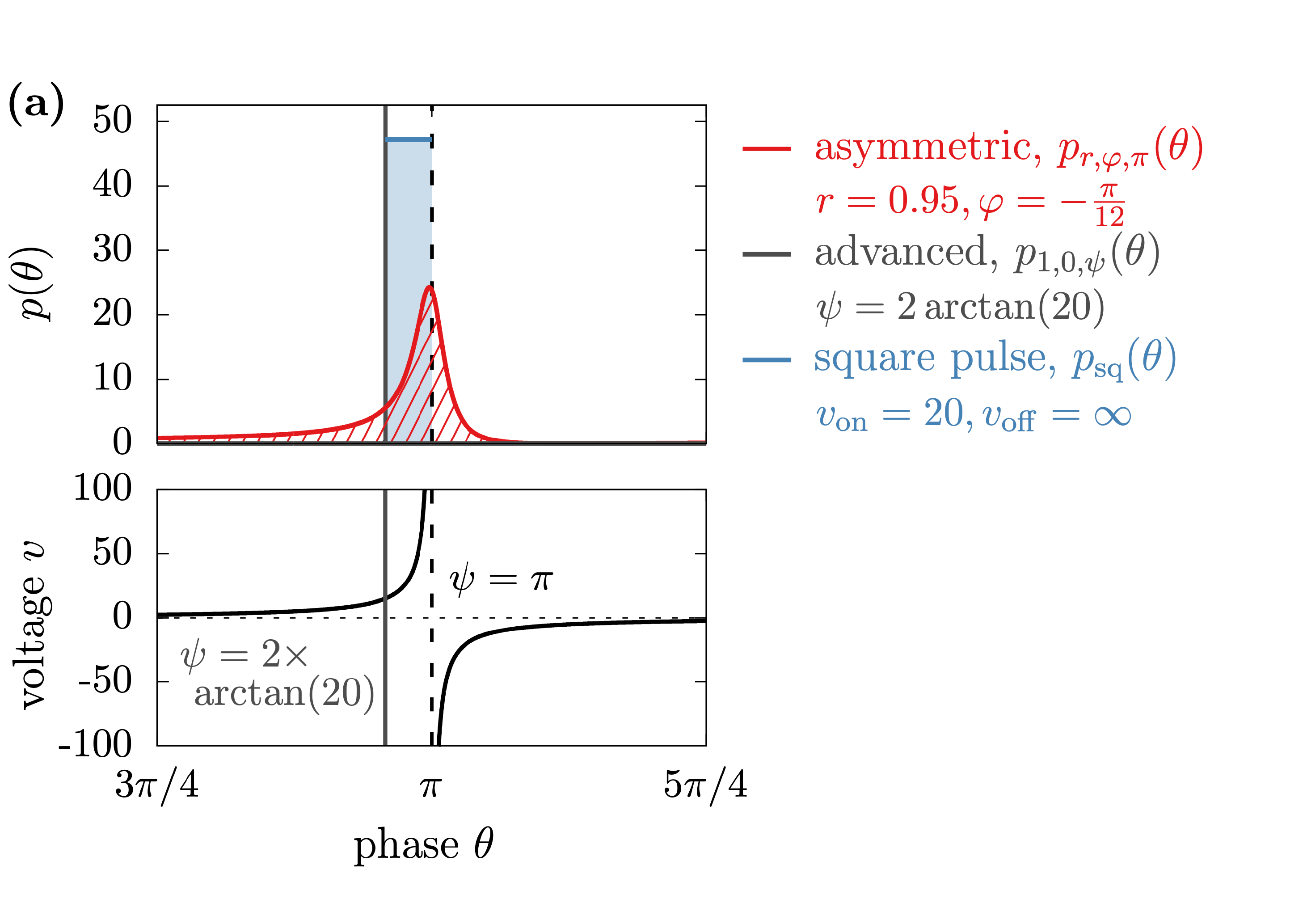} 
  \hspace{-0.3cm}\includegraphics[height=5.2cm]{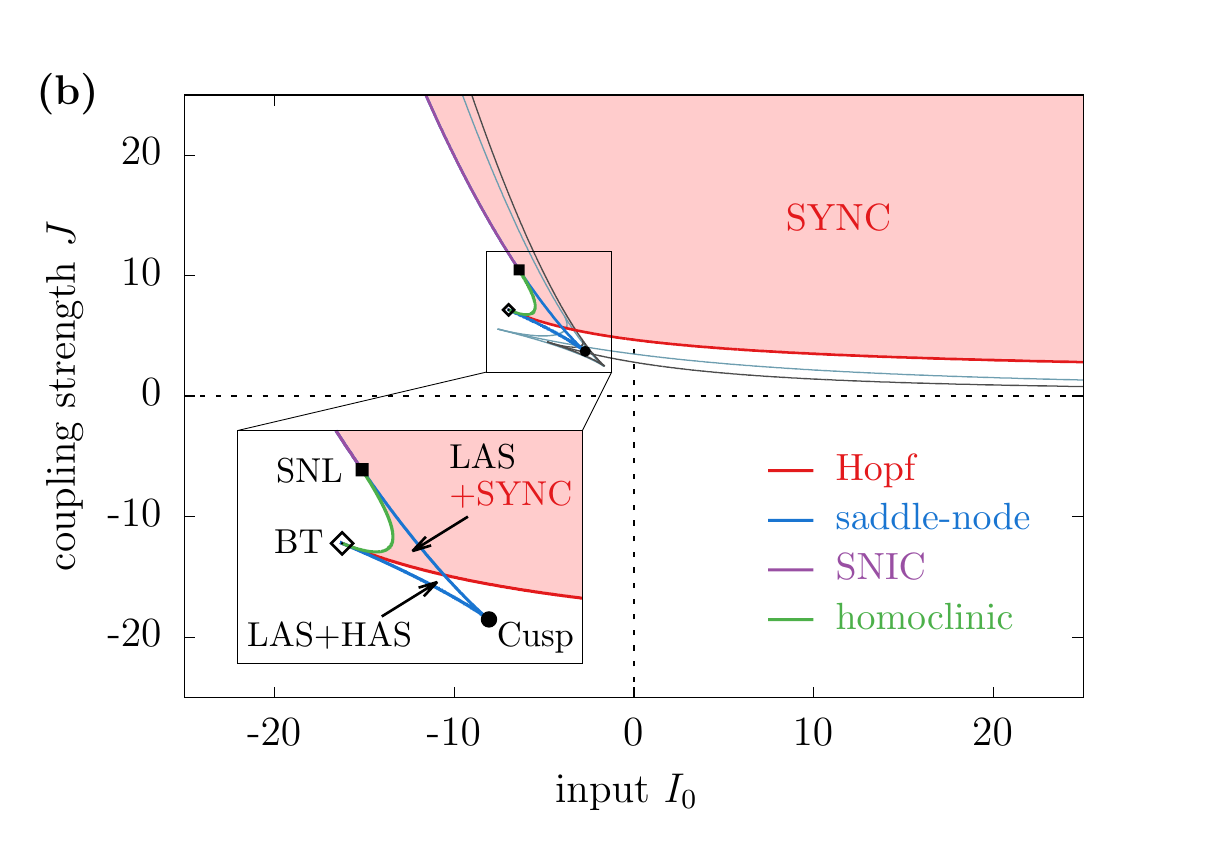}
  \caption{Left-skewed pulses induce collective oscillations for excitatory, but not for inhibitory neurons. (a) Asymmetric (red), advanced (dark-gray) and square pulses (gray-blue), whose mean lies to the left of the QIF threshold at $\theta=\pi$, where the voltage $v$ diverges (bottom).
  (b) The phase diagram using the asymmetric pulse $p_{r,\varphi,\pi}$ exhibits a dominant region for collective oscillation (SYNC, red shaded), which is bounded by a supercritical Hopf bifurcation curve (red) from below and by a SNIC (violet) and homoclinic (green) bifurcation from the left. In the triangular region between the Bogdanov-Takens (BT), Saddle-Node Separatrix Loop (SNL), and Cusp points, there is bistability between a high activity (HAS) and a low activity asynchronous state (LAS) below, and between the LAS and SYNC in the red shaded region above the Hopf curve, see the inset for a zoom.
  The dark-gray and gray-blue lines correspond to the phase diagram structure of the advanced and the square pulses depicted in (a).
  }
  \label{fig:left-pulses}
\end{figure}

For comparison, I also consider square pulses $p_\text{sq}(v)$ similar to those considered by~\cite{ratas_pyragas_2016},
\begin{equation}
p_\text{sq}(v) = \begin{cases} \pi / \vartheta, \quad &\text{if } v \geq v_\text{on} \text{ or } v \leq v_\text{off}, \\
0, &\text{otherwise,}\end{cases}
\label{eq:psq}
\end{equation}
where $\vartheta:=(\theta_\text{off}-\theta_\text{on})/2$ with onset and offset phases $\theta_\text{off,on}=2\arctan(v_\text{off,on})$ is chosen such that the square pulse $p_\text{sq}$ in the corresponding $\theta$-phase description is normalized to $2\pi$ (cf.~\cref{subsec:accessible}).
For $v_\text{off}\to-\infty$ and $v_\text{on} > 0$, \cref{eq:psq} coincides with the pulses used by~\cite{ratas_pyragas_2016}, but rescaled by $\pi/\vartheta$; see the blue pulse in \cref{fig:left-pulses}(a) with $v_\text{on} = 20$ for an example.
For $v_\text{on} > 0$ and $v_\text{off}<0$, the square pulse $p_\text{sq}$ does not terminate at the spike, but actually encloses it and extends its impact to the neuron's relative refractory period.

To obtain the mean pulse activity $P_\text{sq} = \langle p_\text{sq}\rangle$, it is no longer possible to follow the approach of \cref{subsec:recurrent} because the pulses $p_\text{sq}(v)$ \change{do not possess the desired mathematical properties, in particular they do not permit an analytic continuation in the complex plane.} 
Nonetheless, it is possible to obtain the RV dynamics for global square-pulse coupling on the Lorentzian manifold. 
To this end, one can average the square pulses~\eqref{eq:psq} with respect to the Cauchy-Lorentz distribution density $\mathcal W(v,t) $, \cref{eq:CL_density}, which results in the mean presynaptic pulse activity
\begin{equation}
\begin{aligned}
P_\text{sq}(R,V) &= \int_{-\infty}^\infty p_\text{sq}(v) \mathcal W(v,t) dv\\
&= \frac1\vartheta \left[ \arctan\!\left( \!\frac{V - v_\text{on}}{\pi \tau_m R}\!\right) \!-\! \arctan\!\left( \!\frac{V - v_\text{off}}{\pi \tau_m R}\!\right)\! \right] .
\label{eq:psq_mean}
\end{aligned}
\end{equation}
For $v_\text{off}\to-\infty$, \cref{eq:psq_mean} reduces to the global synaptic variable used by~\cite{ratas_pyragas_2016} rescaled with $\pi/\vartheta$.

\changeB{\cref{eq:psq_mean} can readily be used for the recurrent input $I_\text{syn} = J P_\text{sq}$ in the RV dynamics \eqref{eq:FRE}.
Linear stability analysis for square pulses with $v_\text{on}=20$ and $v_\text{off}=\infty$ (blue pulse in \cref{fig:left-pulses}a) 
yields a very similar phase diagram compared to the asymmetric and advanced pulses, see the gray-blue lines in \cref{fig:left-pulses}(b) and cf.~\citep{ratas_pyragas_2016}:
%
A large parameter region of collective oscillations occurs in the upper right quadrant for excitatory coupling $J>0$, which could have been anticipated along the same lines as in \cref{subsec:5A} because $\partial_V P_\text{sq}(R^*,V^*) = \pi\tau_m R^* / \vartheta / [(\pi\tau_m R^*)^2 + (V^* - v_\text{on})^2]>0$ for all fixed-point solutions $R^* > 0$.}

In sum, left-skewed pulses---which may be biologically less plausible because their mean \changeB{effect} occurs before the neuron actually spikes---, first, do not allow for collective oscillations among inhibitory QIF neurons, but only among excitatory neurons, and, second, the resulting phase diagram is largely insensitive to the shape of the pulse, which is in stark contrast to symmetric RP pulses (\cref{fig:RP_pulses_all}) and to right-skewed pulses (\cref{fig:KJ_pulses_all}).



\section{Discussion}
\label{sec:6}
The motivation for the work at hand has been to include and justify another degree of biological realism---namely, smooth pulsatile synaptic transmission---within the framework of an exact and low-dimensional mean-field theory for networks of globally coupled spiking neurons~\citep{montbrio_pazo_roxin_2015,pcp_arxiv_2022}.
\changeB{The resulting RV dynamics \eqref{eq:FRE} are admissible to mathematical analysis and allow for comprehensive insights how the pulse shape affects the collective dynamics.} 
Unless considering synaptic interaction via $\delta$-spikes, previous studies on pulse-coupled spiking $\theta$-neuron networks typically resorted to broad, symmetric, and rather inflexible Ariaratnam-Strogatz (AS) pulses, \cref{eq:AS_pulse}, more for mathematical convenience than for biological realism, see \cref{sec:2}.
The main reason for using AS pulses seems a historic one. To provide some context, I will review previous approaches to pulse-coupling between $\theta$-neurons in \cref{subsec:previous}.
Here, I also revisit an alternative approach to pulse-coupling between QIF neurons, which however has led to biologically contradictory results on the single neuron as well as on the network level, and which is restricted to the Lorentzian manifold.

To overcome previous limitations and to analyze the effect of biologically plausible pulse shapes---both symmetric and asymmetric---on the collective dynamics of globally pulse-coupled QIF and $\theta$-neurons (even beyond the Lorentzian manifold),
I have proposed a rather general family of pulse functions $p_{r,\varphi,\psi}(\theta)$, \cref{eq:KJ_theta},
that depend on the presynaptic voltage $v$ via their $\theta$-phase description $\theta(t)=2\arctan(v(t))$.
In the limit $(r,\varphi,\psi)\to(1,0,\pi)$, the pulse reduces to the $\delta$-spike commonly used in network models of spiking neurons.
Taking the mean over those $\delta$-spikes yields a direct connection with the population firing rate, $\langle p_{1,0,\pi}\rangle = \pi \tau_m R$, see \cref{eq:P2R}.
Tuning either of the three pulse parameters---its width $r$, asymmetry $\varphi$ and shift $\psi$---allows for diverse pulse shapes $p_{r,\varphi,\psi}$ that may mimic realistic synaptic transmission between pre- and postsynaptic neurons and, by adjusting the timing of the synapse, crucially influence the collective dynamics of the network.
\change{I have concentrated on non-negative pulses \eqref{eq:KJ_theta} only and could therefore reduce the number of parameters of the analytically tractable Kato-Jones pulses \eqref{eq:KJ_distribution} from four to three.
In \cref{subsec:D1}, I discuss these three pulse parameters in more detail and also draw connections to delay and to electrical coupling via gap junctions.
When loosening the modeling assumptions, such as the closeness to a SNIC bifurcation in \cref{thm:1}, the more general Kato-Jones pulses \eqref{eq:KJ_distribution} that change signs can also be thought possible as an extension of \cref{eq:KJ_theta}---studying their effects on the collective dynamics will be left for future work. 
In view of more general synaptic processes, I discuss in \cref{subsec:D2} whether instantaneous pulses of finite width can replace more complex synaptic transmission, such as synaptic kinetics and conductance-based synapses, beyond the interpretation inherent to \cref{thm:1}.}
Recall that \cref{thm:1} imposed certain constraints on the pulse width in order to establish the $\theta$-neuron network~\eqref{eq:thm3}---and its QIF-equivalent \cref{eq:qif1}---as the canonical pulse-coupled network model for weakly connected Class 1 excitable neurons.
For the discussion in \cref{subsec:D2}, I will discard those restrictions on the pulse function $p$. Instead, I will regard the $\theta$-model as the continuous analogue of the (discontinuous) QIF model, cf.~\cref{remark2}.
In this way, the pulses $p_{r,\varphi,\psi}(\theta)$ in the $\theta$-neuron network~\eqref{eq:thm3} can be probed against the hypothesis that they may, or may not, correspond to complex synaptic interactions in the QIF network~\eqref{eq:QIF}.


\subsection{Previous approaches to pulse-coupling}
\label{subsec:previous}
The phase-representation of the $\theta$-neuron suggests to invoke the theory of coupled phase oscillators to study synchronization and emergent collective behavior. 
Yet, insights from the literature on pulse-coupling in related phase models may not always carry over to $\theta$-neurons and one ought to be careful when drawing analogies. 
Revisiting the arguments that justified pulses of finite width between coupled phase oscillators will lead to the root of the ``pulse-problem'' in $\theta$-neurons.
To this end, I consider a network of $N$ globally coupled neurons described by variables $\theta_j\in\mathbb S^1$, $j=1,\dots,N$, with the dynamics
\begin{equation}
\dot \theta_j = \omega_j - b_j \cos(\theta_j) + \frac{1}{N} \sum_{k=1}^N q(\theta_j) p(\theta_k)  .
\label{eq:phasemodel1}
\end{equation}
The network dynamics \eqref{eq:phasemodel1} reduces to the network model \eqref{eq:thm3} of $\theta$-neurons for $\omega_j=1+\eta_j$, $b_j=1-\eta_j$ and $q(\theta)=1+\cos(\theta)$ with excitability parameter $\eta_j$.
More generally, \cref{eq:phasemodel1} describes the dynamics of active rotators~\citep{Shinomoto_Kuramoto_1986,Kuramoto1987} with pulse-response coupling; $p(\theta_k)$ is a pulse-like function and $q(\theta_j)$ plays a role analogous to that of a ``phase response curve''~\citep{winfree1980geometry}.
The parameter $b_j$ determines whether neuron $j$ is periodically spiking ($|\omega_j| > b_j$) or in an excitable regime ($|\omega_j|\leq b_j$).
Setting $b_j=0$ for all $j$ leads to the seminal Winfree model~\citep{winfree1967,winfree1980geometry}. 


In the context of neural oscillators, \cite{ermentrout_kopell_1990} used the Winfree model---\cref{eq:phasemodel1} with $b_j=0$---to describe the dynamics of periodically firing neurons (with frequency $\omega_j$) on their limit cycle, parameterized by the phase $\theta_j$.
In the course of an action potential, the presynaptic neuron $k$ emits a pulse $p(\theta_k)$ that is modulated by the response function $q(\theta_j)$ of postsynaptic neuron $j$.
For Hodgkin-Huxley-like neurons, the pulses $p(\theta)$ can be interpreted as instantaneous nonlinear conductances~\citep{ermentrout_kopell_1990,Wang_Rinzel_1992}.
By using graded potentials rather than fast spikes, synaptic interaction in these kinetics-based models is represented as a function of the presynaptic voltage.
Analogously, in kinetic models of synaptic transmission, the neurotransmitter concentration in the synaptic cleft can be reduced to a sigmoidal function of the presynaptic voltage~\citep{Destexhe1994}, see also~\citep{Koch_2004,Ermentrout_Terman_2010,Rothman2013}.
According to these theories, a voltage-dependent pulse can comprise the biochemical processes---at the presynaptic site---from the initiation of an action potential until the release of neurotransmitters to the synaptic cleft.
The resulting pulse can become arbitrarily broad and typically exhibits a steep increase until the neuron spikes and a moderate decrease afterwards~\citep{ermentrout_kopell_1990}.
Its asymmetric shape results from the interplay between the action potential shape and the synaptic activation, or neurotransmitter release, function.
Asymmetric pulses were shown to be critical to 
synchronization in neural networks consisting of two neurons~\citep{Wang_Rinzel_1992,skinner1994mechanisms,sato2007generalization}; a comprehensive picture how asymmetric pulses affect the collective behavior of large neural networks, however, has been lacking.

The argumentation that pulses $p(\theta)$ in the Winfree model result from the interplay between action potential shape and synaptic activation function, 
may justify the use of broad pulses also in networks of $\theta$-neurons.
This reasoning ignores, however, that the trajectory of an action potential in conductance-based neurons is radically compressed 
in the canonical $\theta$-model (\cref{fig:cartoon}) so the resulting pulse cannot become arbitrarily broad (\cref{fig:cartoon_pulse}).
Besides, the Winfree model describes the phase dynamics of periodically firing conductance-based neurons obtained through a proper phase reduction~\citep{ermentrout_kopell_1990},
whereas the $\theta$-neuron (albeit a phase model) does not result from phase reduction.
Instead, it canonically describes a particular dynamical regime of a neuron that does not need to be periodically spiking, but can also be excitable.
Hence, adopting the Winfree model, \cref{eq:phasemodel1} with $b_j=0$, with broad and possibly asymmetric pulses $p(\theta)$ for networks of $\theta$-neurons has to be regarded with care.
Still, as long as $\theta$-neurons are in the periodically firing regime, the Winfree model may be useful for studying their collective dynamics, at least to some extent~\citep{Hoppensteadt_Izhikevich_1997,Izhikevich_1999,Clusella_Pietras_Montbrio_2022}, 
and previous results about the effect of the response and pulse functions $q(\theta_j)$ and $p(\theta_k)$ on synchronization properties in the Winfree model can become applicable~\citep{ariaratnam2001phase,pazo_montbrio_2014,gallego_et_al_2017}.

To account for excitable neuronal dynamics, the Winfree model has to be augmented by a term proportional to $\cos(\theta_j)$, yielding the network model \eqref{eq:phasemodel1} of coupled active rotators.
A thorough analysis of \cref{eq:phasemodel1} with general pulse and response functions $p$ and $q$ is lacking, though.
So far, building on \citep{Kuramoto1987}, O'Keeffe and Strogatz studied how the width of (symmetric) pulses affects the collective dynamics, while considering a flat response function $q\equiv$~\emph{const} and identical $b_j=b$ \citep{okeeffe_strogatz_2016}.
They found that broad pulses may entail collective dynamics different from those generated by narrow pulses, which is consistent with results on the Winfree model by 
\cite{pazo_montbrio_2014} and \cite{gallego_et_al_2017}.
It remains unclear, however, how these results carry over to $\theta$-neuron networks including sinusoidal response functions and excitable dynamics.

A crucial tool for pinpointing the effect of the pulse width on the synchronization properties of large populations of Winfree oscillators and active rotators has been an exact dimensionality reduction first proposed by \cite{ott_antonsen_2008}.
Since then, a plethora of studies have adopted the ``Ott-Antonsen ansatz'' to facilitate mean-field analyses.
But to do so, they had to rely on analytically tractable pulse and response functions, e.g., resorting to symmetric and broad pulses as suggested by \cite{ariaratnam2001phase}, and thus often at the cost of biological realism.
By that time, the ``Ariaratnam-Strogatz'' (AS) pulse had already found its way into networks of $\theta$-neurons~\citep{goel2002synchrony,Ermentrout_2006} as a mathematical tractable alternative to similar, mainly symmetric, pulses of finite width~\citep[see, e.g.,][]{gutkin2001turning,osan2001two,borgers2003synchronization,borgers_kopell_2005,komek2012dopamine,Gutkin2014}, which were introduced either for numerical convenience (to smooth the discontinuous $\delta$-spikes) or on the premise that instantaneous pulse-coupling reflected realistic synaptic potentials. 
Typically, however, the authors did not specify which synaptic processes they meant to replace by broad pulses.
In \cref{subsec:D2}, I reexamine their premise and argue that pulses irrespective of their (finite) width cannot reflect real synaptic dynamics---at the postsynaptic site---and ought not to be used to replace synaptic transmission with synaptic kinetics or conductance-based synapses in networks of spiking neurons.
Regardless, the combination of the Ott-Antonsen ansatz with the analytically tractable, though biologically debatable, AS pulses turned out to be pivotal to studying the collective dynamics of pulse-coupled $\theta$-neurons~\citep{luke_barreto_so_2013,So-Luke-Barreto-14,luke2014macroscopic,Laing_2014,laing_2015,laing2016bumps,laing2016travelling,roulet2016average,laing2017,chandra2017modeling,laing2018dynamics,laing2018chaos,aguiar2019feedforward,lin_barreto_so_2020,laing2020effects,laing2020moving,means2020permutation,blasche2020degree, bick_goodfellow_laing_martens_2020,juettner_martens_2021,omel2022collective,birdac2022dynamics}.
As of today, however, pulse-coupling in $\theta$-neurons has widely eluded a biological justification and the role of the pulse shape for the collective dynamics remained unclear.
I hope that the ideas put forward in \cref{sec:2all} contribute to the discussion in a productive way. In particular, I offered two alternative interpretations of pulses:
As long as pulses of arbitrary shape are sufficiently narrow, they can reflect swift synaptic transmission including both the pre- and postsynaptic site (\cref{sec:2}).
Alternatively, pulses between $\theta$-neurons can describe voltage-gated conductances, or neurotransmitter release, at the presynaptic site (\cref{sec:3}).
This second interpretation builds on the change of coordinates through the inverse transform $\theta_j = 2 \arctan(v_j)$ from the QIF model (\cref{remark2}), which implicates, however, that the network of pulse-coupled $\theta$-neurons may no longer represent the canonical model for the universal class of weakly connected Class 1 excitable neurons.


Pulse-coupling approaches in networks of (quadratic) integrate-and-fire neurons are by default simpler, as synaptic transmission is generally modeled with $\delta$-spikes. 
Other forms of pulse-coupling are rarely found in the literature,
although the model formulation in terms of voltages $v$---as opposed to the phase description of the $\theta$-neuron---is in principle favorable for modeling voltage-dependent synaptic activation, which may entail pulses $p(v)$ with variable shapes.
The shape of the emitted pulse, however, depends not only on the activation function $p$, but also on the action potential.
This can become critical for integrate-and-fire neurons because their fire-and-reset mechanism implicates ``action potentials'' with a rather artificial shape:
at the moment of the spike, the neuron's voltage is instantaneously reset and then starts integrating again (\cref{fig:cartoon}c). 
A sigmoidal activation function of the presynaptic voltage, similar to those used in conductance-based neurons, then lops off the pulse of an integrate-and-fire neuron right after its spike.
The resulting pulse exhibits a moderate increase and a radical decrease much in contrast to those pulses generated by conductance-based neurons (cf.~\cref{fig1,fig1b}).

Biological concerns aside, networks of heterogeneous QIF neurons with synapses exhibiting such a sigmoidal voltage-dependence were studied by~\cite{ratas_pyragas_2016}.
The synaptic pulses of finite width turned out a necessary ingredient to synchronize QIF neurons and led to collective oscillations, whereas global coupling via instantaneous $\delta$-spikes is known to rule out collective oscillations of QIF neurons~\citep{montbrio_pazo_roxin_2015,juettner_martens_2021}.
Yet, the voltage-dependent pulses in~\citep{ratas_pyragas_2016} synchronized only excitatory, but not inhibitory neurons,
which is at odds with the theoretical finding that Class 1 neurons (including the QIF model) tend to be much more easily synchronized by mutual inhibition than excitation~\citep{wang2010review}.
A mechanistic explanation for this paradox has been elusive, but may fall back to the artificial shape of the QIF's action potential.
The conventional sigmoidal description of voltage-dependent pulses had thus to be revised to compensate for the abrupt fire-and-reset mechanism of QIF neurons and, eventually, to reveal general principles of emergent collective behavior in spiking neuron networks.
In \cref{sec:3}, I proposed such a compensation scheme by taking the QIF's action potential into account, which led to a variety of (a)symmetric pulses that, unfortunately, are analytically intractable to study the network dynamics in more detail. Therefore, I approximated those pulses by analytically more favorable pulses $p_{r,\varphi,\psi}$ in \cref{subsec:accessible}, which allowed for exact low-dimensional collective dynamics of globally pulse-coupled QIF neurons (\cref{sec:4}).
As their attractors lie on the invariant ``Lorentzian'' manifold, it sufficed to study the RV dynamics \eqref{eq:FRE} with the mean pulse activity $P_{r,\varphi,\psi}=\langle p_{r,\varphi,\psi}\rangle$,
which are two-dimensional and amenable to a comprehensive bifurcation analysis for instantaneous pulse-coupling through $p_{r,\varphi,\psi}$.

I would like to remark that in the framework of the RV dynamics \eqref{eq:FRE} on the Lorentzian manifold, pulsatile coupling is not restricted to the (smooth) pulses $p_{r,\varphi,\psi}$ proposed above. 
It is possible to design different voltage-dependent pulses $p(v)$ that allow for an accessible mean field $P(R,V)=\langle p \rangle = \int_{\mathbb{R}} p(v) \mathcal W(v,t)dv$ closed in $R$ and $V$ thanks to the averaging with respect to the invariant Cauchy-Lorentz distribution density $\mathcal W(v,t)$ of the total voltage density \eqref{eq:CL_density} of globally coupled QIF neurons. 
This approach was pursued by~\cite{ratas_pyragas_2016}, who used uniform (square) pulses $p_\text{sq}(v)$ that are activated when a neuron's voltage exceeds the value $v_\text{on} \leq \infty$ and terminated right at the spike time, see \cref{eq:psq} with $v_\text{off}=\infty$. 
By averaging the pulses $p_\text{sq}(v)$ with respect to $\mathcal W(v,t)$, one obtains the mean presynaptic pulse activity $P_\text{sq} = \langle p_\text{sq}\rangle$ of the square pulses $p_\text{sq}(v)$, \cref{eq:psq_mean},
which can readily be used as the recurrent input $I_\text{syn}$ in the RV dynamics~\eqref{eq:FRE}.
Importantly, this approach only applies to the dynamics on the Lorentzian manifold, where the voltages of globally coupled QIF neurons are known to be distributed according to Cauchy-Lorentz distribution density~\eqref{eq:CL_density}.
To capture transient dynamics beyond the Lorentzian manifold, one has to resort to the exact low-dimensional system~\eqref{eq:Pls} and the mean pulse activity $\langle p\rangle$ can no longer be found by averaging with respect to $\mathcal W(v,t)$, 
but has to be determined in terms of the macroscopic variables $\Phi,\lambda$ and $\sigma$.
This can be achieved, e.g., by following the strategy outlined in \cref{sec:colvar_derivation}. 
In general, this strategy requires certain assumptions on the pulse functions $p(v)$---which the smooth pulses $p_{r,\varphi,\psi}$ given by \cref{eq:KJ_theta} satisfy but $p_\text{sq}(v)$ does not \citep[see, e.g., Eq.~(4) in][and compare with Appendix F]{ratas2019noise}
---but eventually allows 
for a rigorous and exact mean-field reduction, as well as for the mathematical analysis of the collective dynamics, of pulse-coupled networks of spiking neurons.




\subsection{Pulse parameters, delays and gap junctions}
\label{subsec:D1}
In this section, I will discuss the three parameters---pulse width $r$, asymmetry $\varphi$, and shift $\psi$---in more detail that shape the smooth pulses $p_{r,\varphi,\psi}$ and, in turn, also the collective dynamics of pulse-coupled neuron networks.

The first parameter $r\in[-1,1]$ tunes the width of the pulse and interpolates between continuous, flat ($r=-1$) and discrete, event-triggered ($r=1$) synaptic transmission.
Even for large $0\ll r < 1$, the pulsatile transmission is smooth and the presynaptic neuron almost always sends out a signal.
Still, the larger $r$, the more localized is this signal around a particular (threshold) voltage $v\approx v_\text{thr}=\tan(\psi/2)$ or, equivalently, a particular central phase $\theta\approx\psi$.
The effect of the emitted pulse (with large $r$) on the postsynaptic neuron becomes negligible for voltages farther away from $v_\text{thr}$ (\cref{fig:RP_pulses_all}a, right panel).
If $v_\text{thr}$ coincides with the QIF spiking threshold $v_p$ at infinity ($v_\text{thr}=v_p\to\infty$), and in absence of asymmetry ($\varphi=0$), the ``Rectified-Poisson'' (RP) pulses $p_{r,0,\pi}$ resemble the ``Ariaratnam-Strogatz'' (AS) pulses $p_{\text{AS},n}$ that have been frequently employed in the context of $\theta$-neuron networks~\citep[see, e.g.,][]{goel2002synchrony,Ermentrout_2006,luke_barreto_so_2013,luke2014macroscopic,So-Luke-Barreto-14,Laing_2014,laing_2015,laing2016bumps,laing2016travelling,roulet2016average,laing2017,chandra2017modeling,laing2018dynamics,laing2018chaos,aguiar2019feedforward,lin_barreto_so_2020,laing2020effects,laing2020moving,means2020permutation,blasche2020degree,bick_goodfellow_laing_martens_2020,juettner_martens_2021,omel2022collective,birdac2022dynamics}.
The usability of AS pulses, however, is hampered by the series representation of the mean pulse activity $\langle p_{\text{AS},n}\rangle$. 
The larger $n$, the narrower the AS pulse, and the more convoluted $\langle p_{\text{AS},n}\rangle$, see \cref{appsec:C}.
In fact, $\langle p_{\text{AS},n}\rangle$ does not allow for a closed-form description in the exact low-dimensional system \eqref{eq:Pls}. 
By contrast, the mean RP pulse activity $P_{\text{RP},r}=\langle p_{r,0,\pi}\rangle$ can be expressed in terms of the three macroscopic variables $\Phi,\lambda,\sigma$ that completely determine the collective dynamics. 
On the Lorentzian manifold, $P_{\text{RP},r}$ simplifies to a concise function of the firing rate $R$ and the mean voltage $V$, see \cref{eq:RP_mean}, and allows for a straightforward bifurcation analysis even for arbitrarily narrow pulses (\cref{subsec:3B}).

The second parameter, $\psi\in[0,2\pi)$, shifts the pulse to the right ($\psi>\pi$) or to the left ($\psi<\pi$), but does not change its actual shape.
For symmetric pulses ($\varphi=0$), the pulse is strongest at the phase $\theta=\psi$, or when the presynaptic voltage $v$ crosses the ``virtual threshold'' $v_\text{thr}=\tan(\psi/2)$.
The nomenclature of a virtual threshold becomes rigorous for Dirac $\delta$-pulses $p_{1,0,2\arctan(v_\text{thr})}$, that are emitted at the moment when $v=v_\text{thr}$.
Depending on the sign of $v_\text{thr}$, the pulse is shifted to the right or to the left of the QIF spiking threshold $v_p=\infty$, which can be interpreted as an effective delay or advance of the postsynaptic response \citep[cf.][]{Ermentrout_1996,gutkin2001turning}.
Indeed, if the pulse reaches its peak when $v=v_\text{thr}\ll 0$, then the effect on the postsynaptic response is strongest after the actual spiking of the presynaptic neuron.
On the other hand, for $v_\text{thr}\gg 0$, the emitted pulse is strongest already before the neuron has spiked.

For threshold values after the spike, i.e.~$v_\text{thr} < 0$, and as expected for delayed synaptic transmission~\citep{roxin2005role,roxin_montbrio_2011}, collective oscillations can be found for inhibition ($J<0$) and (strong) excitatory drive $I_0>0$ (\cref{fig:KJ_pulses_all}b and c).
The results are qualitatively identical for sufficiently narrow pulses with width $0 \ll r \leq 1$ and threshold value $v_\text{thr} \ll 0$, so that it suffices to report the phase diagram for one set of parameters only (here, $r=0.95, v_\text{thr} = -20$).
As a word of caution I stress that shifted pulses mimic truly delayed pulses only on the single neuron level; on the macroscopic level, they may lead to very distinct collective dynamics. 
Reminiscent of that is the cusp-shaped region of bistability between two (very) low-activity states for $J<0$ in \cref{fig:KJ_pulses_all}(b and c).
Note also that in contrast to ``real'' time delays, instantaneous recurrent coupling with shifted pulses cannot store the neurons' history and transmit it unaltered after the delay.
That is why the RV dynamics \eqref{eq:FRE} remain low-dimensional, whereas real delays yield infinite-dimensional dynamics that can entail more complex collective behavior~\citep{pazo_montbrio_2016,Devalle2018}.

For advanced (or left-shifted) pulses with threshold values before the spike ($0 \ll v_\text{thr}<\infty$), collective oscillations occur for excitatory coupling ($J>0$) and cease via a supercritical Hopf bifurcation when decreasing the coupling strength (\cref{fig:left-pulses}), cf.~\cref{eq:dPdelta_dV}.
This insight also explains why the square pulses \eqref{eq:psq} used by \cite{ratas_pyragas_2016}, whose mean is shifted to the phase before the actual spike, only induce collective oscillations for excitatory neurons but not in the biologically more plausible case of inhibition (\cref{subsec:previous,subsec:3D}).
Moreover, the observed bifurcation scenario for left-shifted pulses resembles the case of gap junction-coupling \citep[cf.~Fig.~7 of][]{Pietras_et_al_2019}.
Gap junctions are, in general, known to promote neural synchrony and facilitate collective oscillations~\citep{laing_2015,Pietras_et_al_2019}.
QIF neurons that are globally coupled via gap junctions of strength $J$ (but not via pulses) have the subthreshold dynamics
\begin{equation}
    \tau_m \dot v_j = v_j^2 + i_0 + \frac{J}{N}\sum_{k=1}^N (v_k - v_j) + I_j \ .
    \label{eq:QIF_gap}
\end{equation}
On the Lorentzian manifold, the exact RV dynamics corresponding to the microscopic dynamics \eqref{eq:QIF_gap} in the limit $N\to\infty$ takes on the same form as Eq.~\eqref{eq:FRE} when identifying $V = \langle v_j \rangle - J/2$ as a shifted mean voltage, $I_0 = i_0 + J^2/4$ and recurrent synaptic input $I_\text{syn}=J V$.
Thus, in line with the results of \cref{subsec:5A}, the onset of collective oscillations in the gap junction-coupled network~\eqref{eq:QIF_gap} is neatly explained by an effective voltage-coupling.
The similarity between the collective dynamics for advanced pulses on one hand and for gap junctions on the other hand, supports the notion that pulse-interactions indeed induce an effective voltage component in the recurrent coupling.
Again a word of caution is due about the correct interpretation of temporally vs.\ effectively advanced pulses:
With instantaneous pulse-coupling, the mean field $\langle p_{r,0,2\arctan(v_\text{thr})} \rangle$ has an immediate effect on individual neurons.
That is, emitting an inhibitory pulse before the actual spike may possibly hinder the same neuron to actually spike. 
By allowing for synaptic kinetics, as in Eqs.~\eqref{eq:s_des} or \eqref{eq:biexp} below, one can alleviate this intricacy and indeed interpret the time at which a neuron crosses the virtual threshold $v_\text{thr}$ as the activation time $t_a$ of the postsynaptic response~\citep[cf.][]{Ermentrout_1996}.

Finally, the asymmetry parameter $\varphi\in[-\pi,\pi)$ skews the pulse $p_{r,\varphi,\psi}(\theta)$ and shifts its bulk to the right ($\varphi>0$) or to the left ($\varphi<0$) of the central phase $\theta=\psi$; note that $\varphi\neq0$ is only effective for pulses of finite width $r<1$.
Already a small value of $0<|\varphi|\ll 1$ can have a large effect on the network dynamics and easily induce collective oscillations (Figs.~\ref{fig:sim}b and \ref{fig:KJ_pulses_all}d).
If the pulses are slightly skewed to the phase after the spike ($\varphi>0$), collective oscillations emerge almost naturally for inhibition ($J<0$) with sufficient excitatory drive ($I_0>0$).
The effect is similar to that of synaptic kinetics and, indeed, for these right-skewed pulses, one retrieves a fast rise and slower decay of the postsynaptic response (\cref{fig1b}c2) as observed for first- and second-order synapses.
For left-skewed pulses ($\varphi<0$), by contrast, the mean of the pulse is advanced and collective oscillations emerge only for excitation ($J>0$; \cref{fig:left-pulses}), similar to left-shifted pulses ($\psi<\pi$, $v_\text{thr}>0$) or gap junctions.

\subsection{Can instantaneous pulses replace complex synaptic transmission?}
\label{subsec:D2}
In contrast to the mathematical abstraction of $\delta$-spikes, smooth pulses of finite width have often been assumed to be biologically more realistic and to better approximate postsynaptic responses like those of conductance-based, Hodgkin-Huxley-like neuron models~\citep{Ermentrout_2006,luke_barreto_so_2013,bick_goodfellow_laing_martens_2020}.
It seems daunting to include various levels of biochemical realism at a chemical synapse, so in general one jumps over the explicit modeling of (1) how an increase (depolarisation) in the voltage $v_\text{pre}$ of a presynaptic neuron activates voltage-gated Ca$^{2+}$ channels, (2) how the Ca$^{2+}$-influx induces the release of neurotransmitters that diffuse to the postsynaptic neuron and bind to specific receptors with different possible mechanisms~\citep{Kaeser_Regehr_2014},
and (3) how the binding of neurotransmitters triggers the opening of ionic channels and eventually generates a postsynaptic current.
Instead, the synaptic process is described phenomenologically---but not derived from first principles~\citep{Koch_2004}---by the synaptic input $I_\text{syn}$ to the postsynaptic neuron,
\begin{equation}
I_\text{syn} = - g_\text{syn}\big(v_\text{pre}(t), t\big) \big[v_\text{post}(t) - E_\text{syn} \big] \ ,
\label{eq:Isyn1}
\end{equation}
with reversal potential $E_\text{syn}$ and a synaptic conductance that is often represented as $g_\text{syn}(t) = \hat g_\text{syn} s(t)$ with maximal synaptic conductance $\hat g_\text{syn}\ge 0$ and a gating variable $s(t)$ that may be interpreted as the fraction of open channels releasing neurotransmitters. 
If the synaptic conductance is activated by a (sigmoidal-like) function $f(v_\text{pre})$ of the presynaptic membrane potential, as discussed in \cref{sec:3}, and follows first-order synaptic kinetics, the dynamics of the gating variable $s(t)$ is given by
\begin{equation}
\dot s= a_r f\big(v_\text{pre}(t-\tau_l)\big) (1-s) - a_d s \ .
\label{eq:s_des}
\end{equation}
The constants $a_r$ and $a_d$ determine the rise and decay times of the postsynaptic response~\citep{Wang_Rinzel_1992,Wang_Buzsaki_1996,Destexhe1994,Ermentrout_Terman_2010} and a possible latency time $\tau_l$ can account for finite axonal propagation times~\citep{Ermentrout_Terman_2010}.
Depending on the shape of the presynaptic action potential, 
$s(t)$ can actually begin to rise before the presynaptic voltage reaches its peak (corresponding to the neuron's spike time $T_\text{pre}$), especially when the action potential is broad~\citep{Ermentrout_1996}.
Alternatively to Eq.~\eqref{eq:s_des}, the time course of $s(t)$ can be described by the difference of two exponential functions,
\begin{equation}
s(t) = A \big[ e^{-(t-t_a)/\tau_r} - e^{-(t-t_a)/\tau_d} \big] , \; t \geq t_a,
\label{eq:biexp}
\end{equation}
with amplitude $A$ and activation time $t_a$, which is typically around the spike time $T_\text{pre}$ of the presynaptic neuron plus some latency $\tau_l$ (possibly due to finite axonal propagation speed and taking into account that the postsynaptic response may start before $T_\text{pre}$).
Biexponential synapses \eqref{eq:biexp} with characteristic latency, rise, and decay time constants $\tau_{l,r,d}$ are the gold standard in computational models of spiking neurons---they allow for mean-field approaches~\citep{treves_1993,Abbott_vanVreeswijk_1993,brunel_hakim_1999,brunel_wang_2003,brunel_hansel_2006} and are typically reported in the experimental neuroscience literature (although there are no coherent definitions for $\tau_{l,r,d}$). 

In network models of spiking QIF or $\theta$-neurons, 
pulses $p_{r,\varphi,\psi}$ of finite width ($r<1$), as described by \cref{eq:KJ_theta}, are an ideal candidate for the presynaptic voltage-dependent activation, or release, function $f(v_\text{pre})$ in \cref{eq:s_des}.
The pulse is activated already shortly before the QIF neuron reaches the peak of its action potential (\cref{fig1b}) and can last even through its recovery period, see \cref{sec:3}.
The versatility of the pulses $p_{r,\varphi,\psi}$ further allows to accentuate the synaptic activation, or the release of neurotransmitters, on either phase of the action potential through the asymmetry parameter $\varphi\neq0$ or the shift parameter $\psi\neq\pi$. 
Thereby, it is possible to account for physiological conditions under which the opening of voltage-gated Ca$^{2+}$ channels, and consequently neurotransmitter release, is advanced, e.g., at increased temperature~\citep{sabatini1996timing,sabatini1999timing,volgushev2004probability,yang2006amplitude,chao2019timing,van2020temperature}.

When including synaptic kinetics of the form \eqref{eq:s_des} that are activated by voltage-dependent pulses, one has to be careful how to incorporate the corresponding postsynaptic responses in mean-field models. 
Indeed, when summing over a large number of postsynaptic responses $s_j$, the product $f(v_\text{pre,j}) (1-s_j)$ with the presynaptic pulses $f(v_\text{pre,j})$ presents a nonlinear problem that can only be resolved approximately\footnote{When summing the responses $s_j$ over all presynaptic neurons $j$ with individual responses given by \cref{eq:s_des}, the mean response reads
$\dot S = \langle \dot s_j \rangle = a_r \langle p_{r,\varphi,\psi} \rangle - a_d S - a_r \langle p_{r,\varphi,\psi}(\theta_j) s_j \rangle$.
The last term $\langle p_{r,\varphi,\psi}(\theta_j) s_j \rangle $ represents an average of the product of the presynaptic pulse with the current state of the postsynaptic response.
In general, $p_{r,\varphi,\psi}(\theta_j)$ and $s_j$ are not independent but correlated, so that taking the mean of their product does not yield a closed equation in terms of $S$ and $\langle p_{r,\varphi,\psi} \rangle$.
For $\delta$-spikes (in the limit $(r,\varphi,\psi)\to(1,0,\pi)$)  and under the Poissonian assumption that presynaptic spike trains have a coefficient of variation (CV) close to $1$, one can approximate $\langle p_{1,0,\pi}(\theta_j) s_j \rangle \approx (\pi \tau_m R) S$, cf.~\citep{pietras_schmutz_schwalger_2022}, and possibly loosen the Dirac $\delta$-pulse assumption to obtain $\langle p_{r,\varphi,\psi}(\theta_j) s_j \rangle \approx \langle p_{r,\varphi,\psi} \rangle S$ for $r$ close to $1$.
Yet, the Poissonian assumption, and hence the foregoing approximation, is difficult to justify in strongly correlated collective states, e.g., of regular synchrony~\citep{clusella_montbrio_2022}.
}, even for the analytically tractable pulses $f(v_\text{pre,j})=p_{r,\varphi,\psi}(\theta_j)$.
Biexponential synapses~\eqref{eq:biexp} do not suffer from this shortcoming, which may explain their success in (mean-field models in) computational neuroscience. 
As the postsynaptic response $s_j$ is typically activated at the presynaptic spike time $T_j$, it is advantageous to rewrite~\eqref{eq:biexp} as
\begin{equation}
 \tau_d \dot s_j = -s_j +  u_j, \; \tau_r \dot u_j = -u_j + \tau_0 p_{1,0,\pi}\big(\theta_j(t-\tau_l)\big),
 \label{eq:sj}
\end{equation}
with normalization factor $\tau_0$; one retrieves~\eqref{eq:biexp} for $\tau_0=A(\tau_d-\tau_r)/(2\pi\tau_r\tau_d)$.
To avoid infinite-dimensional time-delayed neuronal dynamics when $\tau_l>0$, 
one can follow \citep{Ermentrout_1996} and use shifted Dirac $\delta$-pulses $p_{1,0,2\arctan(v_\text{thr})}$ with a negative virtual threshold $v_\text{thr}<0$ that effectively delay the postsynaptic response.
Likewise, one can use positive virtual thresholds $v_\text{thr} >0$ to account for activation times of the synapse before the actual spike time, $t_a < T_j$.
Since there are no nonlinear terms in \eqref{eq:sj}, one can readily average the postsynaptic responses over the population and obtain a concise mean-field model.

While I have shown how to interpret, and incorporate, the novel pulse function $p_{r,\varphi,\psi}(\theta)$ in the traditional framework of synaptic transmission, 
one question remains: 
are instantaneous pulses of finite width adequate for replacing the complex processes involved in generating a postsynaptic current $I_\text{syn}(t)$ of the form \cref{eq:Isyn1}?
In other words, can instantaneous pulses $p_{r,\varphi,\psi}$ approximate  (the effect of) $I_\text{syn}(t)$ sufficiently well as previously hypothesized?
To answer this question, I will decompose the problem into two parts because $I_\text{syn}$ comprises two distinct mechanisms: \emph{synaptic kinetics} (encoded in the dynamics of $s$) and \emph{conductance-based synapses} (due to the explicit voltage-dependence in $[E_\text{syn}-v_\text{post}(t)]$).
As I will argue below, instantaneous pulses of finite width are not suited to replace either synaptic kinetics (\cref{subsub:syn_kin}) or conductance-based synapses (\cref{subsub:coba}) in networks of QIF or $\theta$-neurons, but should rather be used complementary to  traditional synaptic transmission (\cref{subsub:pulse-triggered}).


\subsubsection{Pulses of finite width do not replace synaptic kinetics}
\label{subsub:syn_kin}
To focus on the effect of synaptic kinetics, it is convenient to approximate the term $E_\text{syn}-v_\text{post}(t) \approx v_\text{eff}$ in \cref{eq:Isyn1} by an effective potential $v_\text{eff}$.
This approximation is valid if $v(t)$ spends most of the time near its rest state, and the recurrent synaptic input $I_\text{syn}$ in the QIF dynamics \eqref{eq:QIF} is given by $I_\text{syn}(t)=J S(t)$ with $J=\hat g_\text{syn} v_\text{eff}$ and $S(t) = \langle s_j(t) \rangle$.
Depending on the sign of $v_\text{eff}$, the coupling $J$ is excitatory ($v_\text{eff}>0$) or inhibitory ($v_\text{eff}<0$).
The question is now whether the time course of $s_k(t)$ as a postsynaptic response to the spiking of presynaptic neuron $k$ can be equally well explained with an instantaneous pulse $s_k(t) = p_{r,\varphi,\psi}(\theta_k(t))$ or with a $\delta$-spike-triggered biexponential synapse \eqref{eq:biexp} with realistic latency, rise, and decay time constants.

\begin{figure}[!t]
\centering
\includegraphics[width=0.45\textwidth]{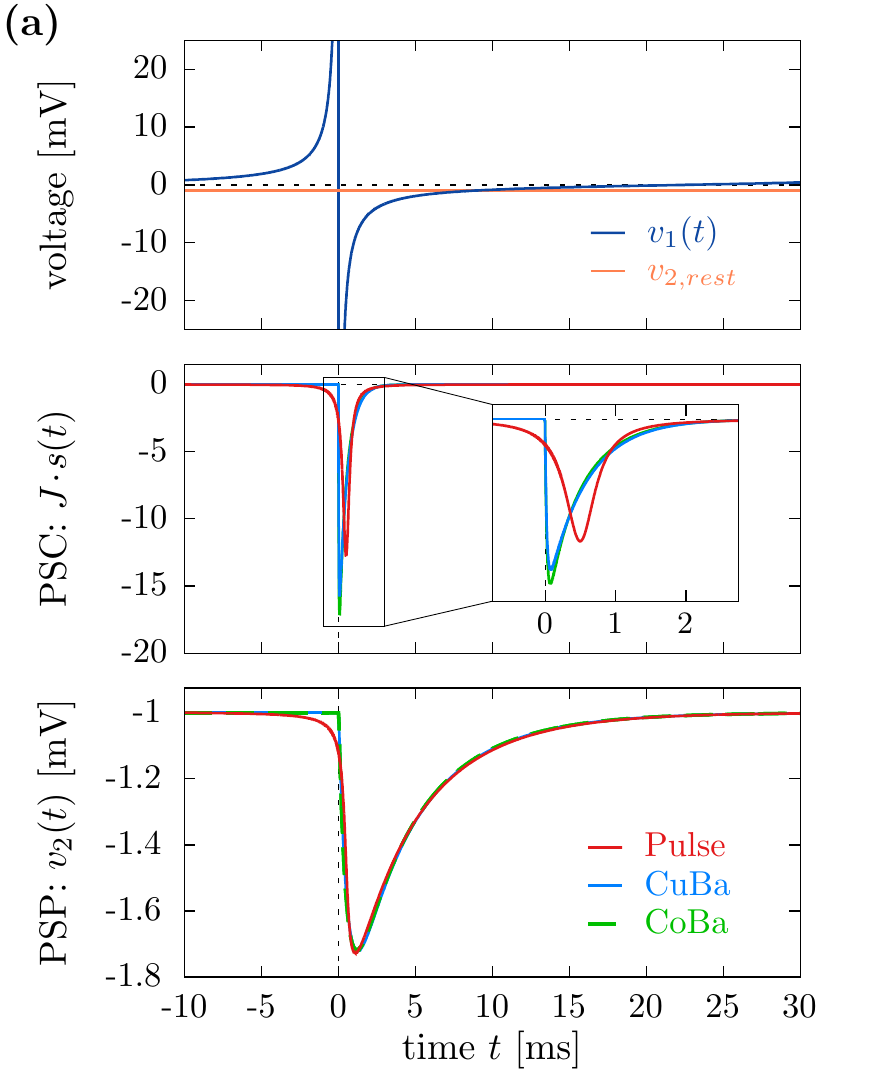} \hspace{0.0cm}
\includegraphics[width=0.45\textwidth]{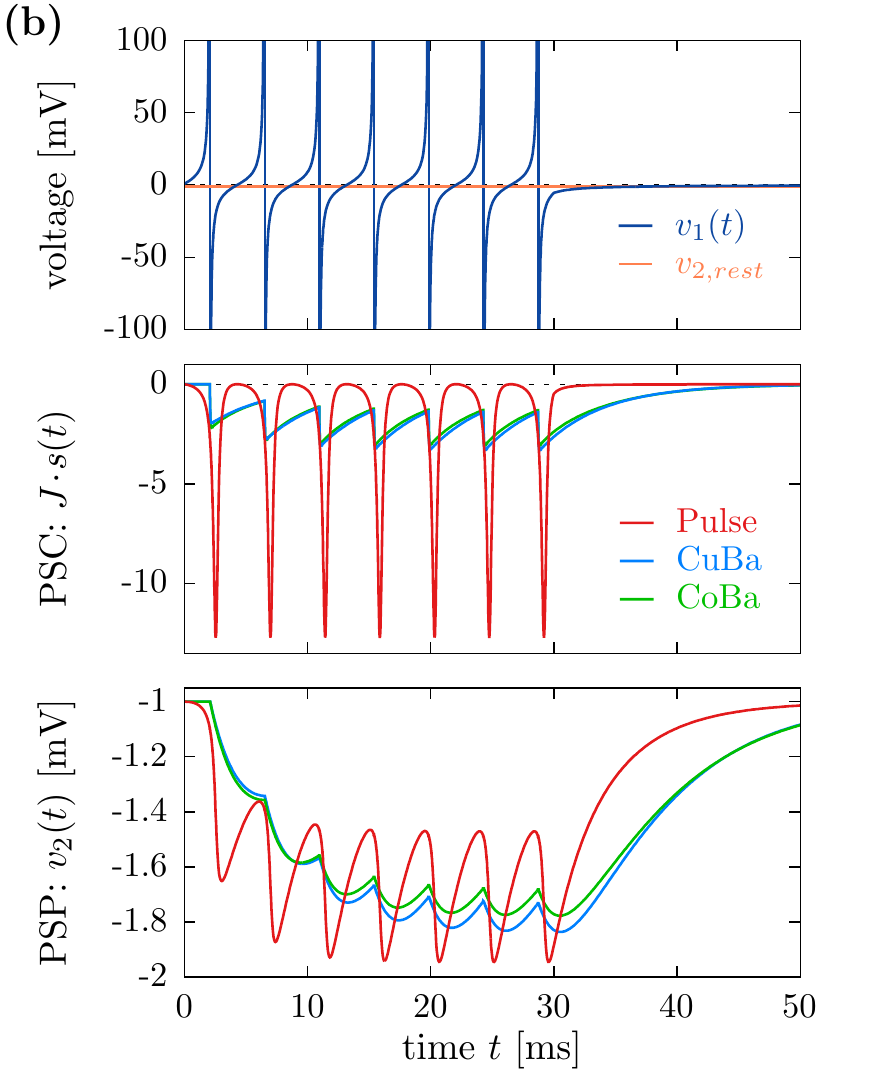} 
\caption{Pulse-coupling (red) does not replace traditional synaptic transmission via current-based (CuBa, blue) or conductance-based (CoBa, green) synapses with synaptic kinetics. 
Top: Presynaptic voltage traces $v_j$ of periodically spiking neuron $j=1$ (blue) and excitable neuron $j=2$ with constant input $I_2=-1$ and resting potential $v_{2,\text{rest}}=-1$ (orange).
Middle: Postsynaptic currents elicited by presynaptic neuron $1$ according to a pulse $J  p_{0.95,0,-2\arctan(20)}(\theta_1)$, CuBa synapse $J  s_1(t)$ or CoBa synapse $\hat g_\text{syn}s_1(t)[E_\text{syn}-v_2(t)]$, where $s_1(t)$ follows \cref{eq:sj} with $\tau_0=1,\tau_l=0$.
For the inhibitory synapse, $J=-\pi$ and $\hat g_\text{syn}=1.12\pi/4$ are chosen such that the PSPs (bottom) almost coincide; $E_\text{syn}=-5$ and $\hat g_\text{syn} 
 v_\text{eff} \approx J$ with $v_\text{eff}=(E_\text{syn}-v_{2,\text{rest}})=-4$.
Bottom: Postsynaptic potential/voltage response $v_2(t)$ according to PSCs.
(a) The voltage trace $v_2(t)$ in response to a single spike ($I_1=0.5$) can be sufficiently well described by pulses and more complex synaptic dynamics with short synaptic time constants $\tau_r=0.025$ms and $\tau_d=0.55$ms.
(b) Synaptic integration of rapidly incoming spikes ($I_1(t)=50$ for $t\leq 30$ms and $0$ afterwards) is characteristic for CuBa and CoBa synapses with realistic decay $\tau_d=5$ms, but not feasible with instantaneous pulses $p_{r,\varphi,\psi}$.}
\label{fig:limits}
\end{figure}

For simplicity, I consider two inhibitory QIF neurons coupled via shifted pulses $p_{r,0,\psi}$ of finite width $r=0.95$ (\cref{fig:KJ_pulses_all}a, blue curve).
The membrane time constant is fixed at $\tau_m=10$ms. 
Neuron $1$ receives a constant input $I_1=0.5$ and spikes at frequency $100/(\sqrt{2}\pi)\approx22.5$~Hz,
whereas neuron $2$ receives $I_2=-1$ and remains quiescent at its resting potential $v_{2,\text{rest}}=-1$. 
The shift parameter $\psi$ is chosen such that the pulse is strongest when the voltage $v$ of the presynaptic neuron crosses the virtual threshold $v_\text{thr} = -20 = \tan(\psi/2)$ after recovering from its reset. 
In this example, a pulse emitted by neuron $1$ is strongest $\sim\!\!0.5$ms after its spike and
the postsynaptic current of neuron $2$ is proportional to $s_1(t) = p_{0.95,0,-2\arctan 20}(\theta_1(t))$.
The postsynaptic potential (PSP) $v_2(t)$ evolves according to $\tau_m \dot v_2 = v_2^2 -1 + Js_1(t)$ with $J=-\pi$ and exhibits a rapid rise and more moderate decay with its peak at $\sim\!\!1.08$ms after the presynaptic spike.
One can fit the voltage response $v_2(t)$ of neuron $2$ to that produced by a biexponential synapse \eqref{eq:biexp} activated at the time of the presynaptic spike at $t=0$ with $\tau_l=0$ms.
The agreement between the voltage responses (PSPs) to the instantaneous pulse and to the biexponential synapse is remarkable, even though the perceived PSCs are quite different (\cref{fig:limits}a).
Nonetheless, the fitted rise and decay times, $\tau_r=0.025$ms and $\tau_d=0.55$ms, are by far shorter than those reported in the literature 
\citep[e.g., $\tau_r=0.5$ms and $\tau_d=0.5$ms used by][]{Wang_Buzsaki_1996,brunel_wang_2003}.
Furthermore, the duration from the presynaptic spike until the peak of the PSP differ by $0.2$ms, (peak pulse response at $1.08$ms vs.\ peak synaptic response at $1.28$ms).
This calls for introducing a latency time constant of $\tau_l=0.2$ms, which seems a reasonable value.
For wide pulses ($r=0.75$), however, the fitted latency increases and the postsynaptic response will be prompted at very low presynaptic voltage thresholds $v_\text{thr} \approx 0$ long before the actual spiking threshold.

Besides unrealistic synaptic time constants, another shortcoming of instantaneous pulses is the impossibility to integrate rapidly incoming inputs.
While this is not a problem for very fast synaptic kinetics (with short rise and decay time constants $\tau_{r,d}$),
synaptic integration becomes important for realistic synaptic decay of around $\tau_d=5$ms.
In the context of the two neurons from before, increasing the input of neuron $1$ to $I_1=50$ induces spiking at a faster frequency of $100\sqrt{50}/\pi\approx225$~Hz.
In the case of synaptic kinetics, the postsynaptic neuron $2$ integrates the subsequent spikes from neuron $1$ and both PSC and PSP exhibit increased baseline levels.
The inhibitory effect on the postsynaptic voltage $v_2$ is thus significantly larger as compared to instantaneously transmitted pulses of finite width (\cref{fig:limits}b).

In sum, instantaneous pulses of finite width can have a similar effect on individual postsynaptic voltage responses as biexponential synapses,
but the comparison is misleading.
First, the fitted time constants of biexponential synapses are far from realistic and, second, instantaneous pulses cannot integrate inputs, which is an important hallmark of synaptic kinetics. 
Hence, instantaneous pulses $p_{r,\varphi,\psi}$ do not replace synaptic kinetics.

\subsubsection{Pulses do not describe conductance-based synapses}
\label{subsub:coba}
Can instantaneous pulses of finite width substitute for conductance-based synapses?
To study the additional voltage-dependence in Eq.~\eqref{eq:Isyn1}, I will focus on instantaneous conductance-based synapses, that is, $g_\text{syn}(t) = (\hat g_\text{syn}/N) \sum_k \delta(t-T_k^l)$ is proportional to the spike train of the presynaptic neurons.
The voltage dynamics \eqref{eq:QIF} of globally coupled QIF neurons then becomes
\begin{equation}
\tau_m \dot v_j = v_j^2 + I_0 - \hat g_\text{syn} \tau_m R(t) [ v_j - E_\text{syn} ] + I_j(t)
\label{eq:QIF_coba}
\end{equation}
with $I_j$ as in \cref{eq:QIF_individual_input} and the same fire-and-reset rule at infinity as before. 
For positive maximal synaptic conductance $\hat g_\text{syn} > 0$, the reversal potential $E_\text{syn}$ determines whether the (net effect of the) recurrent coupling is excitatory ($E_\text{syn}>0$) or inhibitory ($E_\text{syn}<0$).

As in \cref{subsub:syn_kin} about synaptic kinetics, one can compare the postsynaptic voltage response to a presynaptic periodically spiking neuron when the neurons interact via ($\delta$-spike-activated) conductance-based synapses or via (instantaneous) pulses of finite-width.
With $I_{1,2}$ as before and setting $\hat g_\text{syn} (E_\text{syn} - v_\text{rest}) = J$ with $v_\text{rest}=-\sqrt{-I_2}=-1$, the PSC for the conductance-based synapse \eqref{eq:QIF_coba} is identical to a $\delta$-spike of strength $J$, and the voltage responses can only be approximated by sufficiently narrow pulses $p_{r,\varphi,\pi}$ with $r\to1$.
When augmenting the conductance-based synapse \eqref{eq:QIF_coba} by second-order synaptic kinetics, that is, $R(t)$ in \cref{eq:QIF_coba} is replaced by $s_j(t)$ according to \cref{eq:sj}, the resulting postsynaptic response almost coincides with the one by the biexponential synapse (\cref{fig:limits}).
Therefore, the pulse-approximation of the second-order conductance-based synapse suffers from the same shortcomings---unrealistic synaptic time constants and impossibility of synaptic integration---as is the case for the biexponential synapse.

On top of that, the pulse-approximation of conductance-based synapses becomes problematic in networks of (heterogeneous) QIF neurons 
even though the PSP amplitude of an individual neuron receiving conductance-based synaptic input matches the PSP for narrow pulse-coupling (\cref{fig:limits}).
But when a group of neurons receive heterogeneous inputs, they have variable resting potentials and, consequently, the resulting PSP amplitudes differ across the network. 
The PSPs may even vary in time as they depend on the current state of each neuron.
This feature of neural networks with conductance-based synapses is hardly possible to implement by pulses with a predefined shape that is moreover common to all neurons.

In addition, there are significant differences between instantaneous pulse-coupling and conductance-based synapses with respect to the collective dynamics.
As shown in \cref{sec:5}, instantaneous pulses of finite width generate, in general, collective oscillations in large networks of QIF neurons (albeit the parameter regions for symmetric pulses may seem degenerate).
By contrast, globally coupled QIF neurons with instantaneous conductance-based synapses \eqref{eq:QIF_coba} do not support collective oscillations that emerge via a Hopf bifurcation from an asynchronous state.
The proof follows the same lines as in \cref{subsec:5A}:
On the Lorentzian manifold, the RV dynamics \eqref{eq:FRE} for the microscopic dynamics \eqref{eq:QIF_coba} are
\begin{subequations}
\begin{align}
\tau_m\dot{R} &= \frac{\gamma}{\pi\tau_m} + 2RV - \hat g_\text{syn} \tau_m R^2\;, \\
\tau_m\dot{V} &= V^2 - (\pi \tau_m R)^2 + \hat g_\text{syn} \tau_m 
 R [E_\text{syn} - V ] + I_0\;.
\end{align}
\label{eq:FRE_cond}
\end{subequations}
The fixed-point solutions $(R^*,V^*)$ of \cref{eq:FRE_cond} satisfy $V^* = \hat g_\text{syn} \tau_m R^* / 2 - \gamma/ (2\pi\tau_m R^*)$.
A necessary condition for the oscillatory instability of $(R^*,V^*)$ via a Hopf bifurcation is that the trace of the Jacobian $\text{Jac}_\text{CoBa}$ of \eqref{eq:FRE_cond} vanishes.
Here, however,
\begin{equation*}
\text{tr} (\text{Jac}_\text{CoBa}) = -\frac{2\gamma}{\pi \tau_m R^*} - \hat g_\text{syn} \tau_m R^* < 0
\end{equation*}
is always negative because $\gamma, \hat g_\text{syn}, \tau_m R^* > 0$ by definition.
Hence, collective oscillations never occur through a Hopf bifurcation in networks of globally coupled QIF neurons with instantaneous conductance-based synapses that are triggered by presynaptic spikes.
In conclusion, pulses of finite width do not account for conductance-based synapses, either.

\subsubsection{Pulse-triggered synaptic kinetics}
\label{subsub:pulse-triggered}
Synaptic kinetics need not be triggered by $\delta$-spikes, but can also be initiated by general pulses $p_{r,\varphi,\psi}$,
which can be thought of as a combination of \cref{eq:s_des,eq:sj}:
Replacing the $\delta$-spikes in \cref{eq:sj} by general pulses $p_{r,\varphi,\psi}$, leads to the microscopic synaptic dynamics
\begin{equation}
 \tau_d \dot s_j = -s_j +  u_j, \quad \tau_r \dot u_j = -u_j + p_{r,\varphi,\psi}\big(\theta_j(t)\big).
 \label{eq:sj_p_general}
\end{equation}
The response dynamics \eqref{eq:sj_p_general} triggered by a narrow and possibly asymmetric pulse is more general than the conventional $\delta$-spike-triggered biexponential synapse \eqref{eq:biexp}.
First, it connects the response with the presynaptic action potential in a continuous manner, and thereby avoids the open discussion at which instant the synaptic response is actually triggered: Does the activation time $t_a$ in \cref{eq:biexp} denote the peak voltage of the action potential or a seemingly arbitrary threshold value?
And second, the pulse-triggered second-order dynamics~\eqref{eq:sj_p_general} can be used to fit more complex, experimentally verified impulse responses, e.g., for hippocampal neurons with an impulse response given by a multi-exponential function~\citep{BekkersStevens1996}
\[
s(t) = s_{\max} \{1-\exp[-(t-t_a)/\tau_r]\}^x \exp[-(t-t_a)/\tau_d] ,
\]
or, in the realm of biomechanics, for motor-unit twitches upon motoneuron discharge, whose impulse response is given by a generalized alpha-function~\citep{Fuglevand1993,RAIKOVA20021123,contessa_deLuca2013}
\[
s(t) = s_{\max} (t-t_a)^x \exp[-(t-t_a)/\tau_d] .
\]
In both cases, $x$ is a real-valued parameter (and not an integer), so that the dynamics of $s(t)$ cannot be described with (analytically tractable) differential equations.

On the network level, pulse-triggered synaptic kinetics~\eqref{eq:sj_p_general} further permit a concise mean-field reduction of the collective dynamics as before.
For global coupling of strength $J$, the recurrent synaptic input is given by $I_\text{syn} = J S(t)$ with $S(t) = \langle s_j(t) \rangle$.
Setting $\tau_0=1$ and $\tau_l=0$ in \cref{eq:sj_p_general} ensures that 
the macroscopic fixed points satisfy $S^*=P_{r,\varphi,\psi}(R^*,V^*)$, which allows for a direct comparison with instantaneous synaptic transmission in the limit $\tau_r=\tau_d\to0$, see Fig.~\ref{fig:KJ_pulses_all}(d) for an example; a comprehensive comparison, however, is beyond the scope of this paper.
The exact, augmented RV dynamics of globally coupled QIF neurons with second-order synaptic kinetics \eqref{eq:sj_p_general} read on the Lorentzian manifold
\begin{subequations}
\begin{align}
\tau_m\dot{R} &= \frac{\gamma}{\pi\tau_m} + 2RV \;, \\
\tau_m\dot{V} &= V^2 - (\pi \tau_m R)^2 + I_0 +J S(t) \; ,\\
\tau_d \dot{S} &= -S + U, \; \\
\tau_r \dot{U} &=-U + P_{r,\varphi,\psi}(R,V) \; ,
\end{align}
\label{eq:FRE_syn}
\end{subequations}
with the mean presynaptic activity $P_{r,\varphi,\psi}$ fully determined in terms of $R$ and $V$ according to \cref{eq:p_mean}.
For instantaneous rise time $\tau_r\to0$, \cref{eq:FRE_syn} describes the RV dynamics with first-order synaptic kinetics (with exponential decay $\tau_d$).
For $\tau_r=\tau_d$, the second-order synaptic kinetics of the RV dynamics \eqref{eq:FRE_syn} reduces to that of the so-called alpha-synapse.

For the sake of completeness, I also present the pulse-triggered conductance-based RV dynamics with synaptic kinetics by combining \cref{eq:FRE_cond,eq:sj_p_general,eq:FRE_syn}:
\begin{subequations}
\begin{align}
\tau_m\dot{R} &= \frac{\gamma}{\pi\tau_m} + 2RV - \hat g_\text{syn} R S  \;, \\
\tau_m\dot{V} &= V^2 - (\pi \tau_m R)^2 + \hat g_\text{syn}  S \big[ E_\text{syn} - V\big] + I_0 \; ,\\
\tau_d \dot{S} &= -S + U, \; \\
\tau_r \dot{U} &=-U + P_{r,\varphi,\psi}(R,V) \; ;
\end{align}
\label{eq:FRE_syn_cond}
\end{subequations}
similar conductance-based RV dynamics were reported by \cite{ratas_pyragas_2016}, 
but with a synaptic variable $S(t)$ for non-smooth pulses and first-order synaptic kinetics ($\tau_r=0$); see also the works by \cite{byrne2017mean,coombes_byrne_2019,keeley_byrne_2019,byrne2020next,byrne2022mean}.
Preliminary results suggest that the additional synaptic kinetics in the augmented RV dynamics~\eqref{eq:FRE_syn} or \eqref{eq:FRE_syn_cond} blur the effect of the pulse shape $p_{r,\varphi,\psi}$ on the collective dynamics, especially if the pulse is narrow (\cref{fig:first-order}).
It may hence suffice to resort to the conventional $\delta$-spike-interactions, setting $(r,\varphi,\psi)\to(1,0,\pi)$, when studying also synaptic kinetics.
A comprehensive analysis of the augmented RV dynamics~\eqref{eq:FRE_syn} and \eqref{eq:FRE_syn_cond} shall clarify this hypothesis, which I leave for future work.

\begin{figure}[t]
    \centering
    \includegraphics[height=5cm]{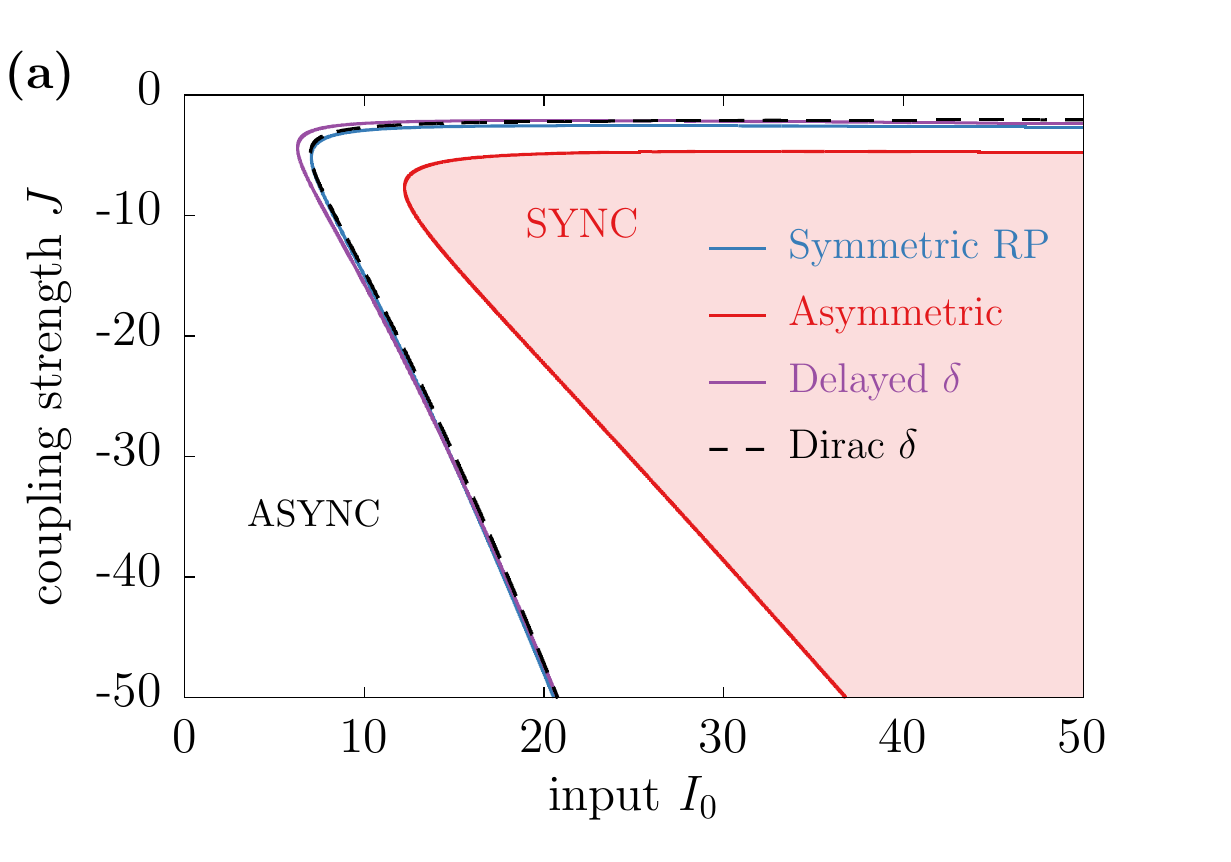} \hspace{-0.6cm}
    \includegraphics[height=5cm]{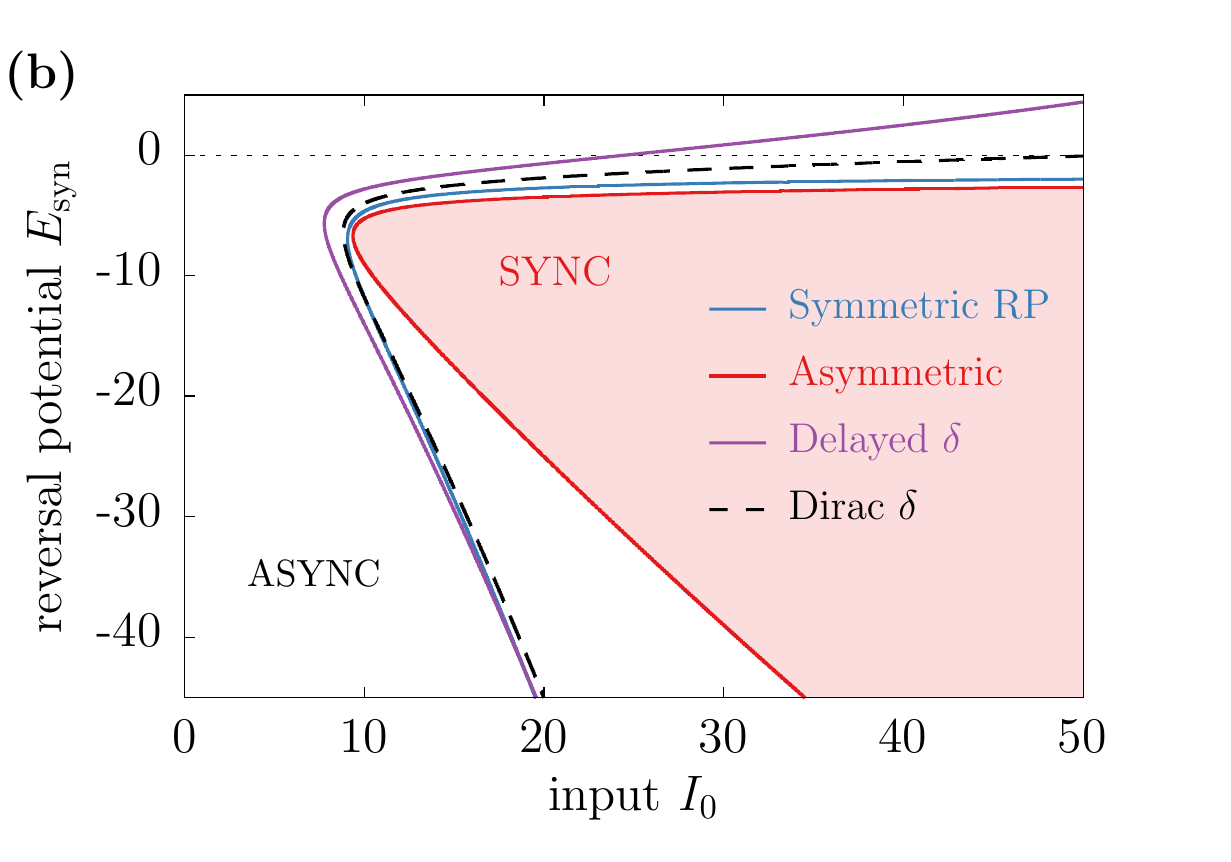}
    \caption{Collective oscillations among inhibitory QIF neurons due to first-order synaptic kinetics of (a) current-based and (b) conductance-based synapses are affected little by the pulse shape $p_{r,\varphi,\psi}$.
    The Hopf bifurcation boundaries of RP pulses ($r\!=\!0.95,\varphi\!=\!0,\psi\!=\!\pi$; blue) and of delayed $\delta$-pulses ($r\!=\!1,\varphi\!=\!0,\psi\!=\!2\arctan(-20)$; violet) coincide almost perfectly with the one for $\delta$-spikes ($r\!=\!1,\varphi\!=\!0,\psi\!=\!\pi$; black dashed); for asymmetric pulses ($r=0.95,\varphi\!=\!\pi/12,\psi\!=\!\pi$; red) the Hopf curve is slightly shifted to the right. 
    Collective oscillations are found in the ``SYNC'' region to the right of the Hopf boundaries (the shaded region indicates collective oscillations for the red asymmetric pulses).
    The Hopf boundaries are supercritical and were detected from \cref{eq:FRE_syn,eq:FRE_syn_cond} with $\tau_m=2\tau_d=10$ms and $\tau_r=0$.
    Note that the $y$-axis in (b) denotes the reversal potential $E_\text{syn}$ of conductance-based synapses with maximal synaptic conductance $\hat g_\text{syn}=1$; $E_\text{syn}$ plays a remarkably similar role as the coupling strength $J$ for current-based synapses in (a).}
    \label{fig:first-order}
\end{figure}


\section{Conclusions \& Outlook}
\label{sec:7}
Spiking neural networks are well-established in the neurosciences and a powerful tool in understanding cortical information processing, which originates from the exchange of action potentials between neurons. 
For computational advantages, mathematical tractability, and rigorous analysis, these networks use simple spiking neuron models that replicate the essential features of real neural dynamics, while interactions between neurons are modeled with infinitely narrow pulses \change{when the presynaptic neuron ``spikes''; these pulses, however, do not capture the more complex dynamics of real synapses.
Nonetheless and in} support of the spike-assumption, recent studies have led to believe that the shape of the action potential is indeed dispensable both on an individual \changeB{as well as on a network level:
(generalized leaky) integrate-and-fire (GLIF) models, which do not provide access to (realistic) action potentials, were shown to preserve intrinsic properties of single neurons and characteristic features of their spike generation observed in experiments~\citep{mensi2012parameter,pozzorini2015automated,teeter2018generalized}.
Moreover, GLIF point neurons capture experimentally observed population activity as faithfully as multi-compartment Hodgkin-Huxley-like models do ~\citep{rossert2017automated,arkhipov2018visual,billeh2020systematic}.}
It may be true that the shape of the action potential hardly affects the network dynamics, at least for uncoupled neurons.
But as soon as neurons are synaptically connected, the spotlight turns on the action potential.
The shape of pulsatile synaptic transmission between neurons directly depends on the action potential and can, ultimately and critically, affect the network dynamics.
By taking the shape of these interactions explicitly into account, the here proposed modeling framework enables a new perspective on pulse shape and voltage-dependent synchronization of spiking neuron networks, which has remained concealed for $\delta$-spike-interactions.


In the first part of this paper, I have proposed two biologically plausible interpretations of smooth pulsatile synaptic transmission in networks of spiking QIF and $\theta$-neurons---the difference between the two interpretations is subtle and based on the respective modeling assumptions.
\changeB{In general, pulses represent the interplay of a presynaptic action potential with a synaptic activation function $p$. 
For QIF and $\theta$-neuron models, $p$ needs to be modified to account for their simplified spiking behavior.}
When pulse-coupled networks of $\theta$-neurons are meant to replicate (weakly) connected Class 1 excitable neurons, 
the pulses must be sufficiently narrow as they reflect an \changeB{(almost)} instantaneous synaptic transmission process that can include both the presynaptic and the postsynaptic site; this interpretation carries over to QIF neurons (with infinite reset and threshold values) through the forward transformation $v=\tan(\theta/2)$.
Alternatively, and not necessarily on the premise that the neurons are Class 1 and weakly connected, QIF neurons emit pulses of arbitrary shape that can describe voltage-gated conductances, or the release of neurotransmitters, at the presynaptic site; this interpretation carries over to $\theta$-neurons via the inverse transformation $\theta=2\arctan v$.
\changeB{In either case, pulses may eventually be interpreted as a continuous generalization of the conventionally used $\delta$-spikes to install neurotransmitter-based chemical synapses.}

In the second part, I have put forward an exact low-dimensional macroscopic description for large networks of globally coupled QIF or $\theta$-neurons interacting via smooth pulses of various shapes that approximate the previously justified pulses.
The modeling framework allows for incorporating the recurrent synaptic input, mediated by a general family of pulse functions $p_{r,\varphi,\psi}$, in terms of a few macroscopic variables.
Thereby, one obtains system \eqref{eq:Pls} of three complex-valued ordinary differential equations that exactly describes the collective dynamics in the thermodynamic limit.
In the presence of (independent Cauchy white) noise or (Cauchy-Lorentz distributed) heterogeneity, the collective dynamics converges to an invariant manifold~\citep{pcp_arxiv_2022}, the so-called Lorentzian manifold~\citep{montbrio_pazo_roxin_2015}, on which the recurrent synaptic input is fully determined by the population firing rate $R$ and the mean voltage $V$.
On this manifold, the firing rate and voltage dynamics---the ``RV dynamics''~\cref{eq:FRE}---are closed in $R$ and $V$, remain two-dimensional for instantaneous pulse-coupling, and can readily be analyzed with respect to emergent collective behavior. 

For instantaneous synaptic interactions, I have proved that collective oscillations can only emerge when the recurrent input includes a voltage component.
This is the case, e.g., for electrical coupling via gap junctions.
In absence of gap junctions, the recurrent synaptic input can incorporate a voltage component---and hence allows for collective oscillations---if pulses transmitted via chemical synapses have a finite width or if the pulse peaks at a moment different from the neuron's spike time.
This insight strongly supports the voltage-dependent spike synchronization mechanism~\citep{devalle2017}, i.e.~a resonance in the neurons’ membrane and spiking dynamics~\citep{Pyle_Rosenblum_2017}, that is crucial for collective oscillations and typically not captured by traditional firing rate models.

Symmetric pulses centered about the spike time support collective oscillations in principle, but the parameter region where collective oscillations can be found appears somewhat degenerate due to the close vicinity of a Hopf bifurcation curve---where oscillations emerge supercritically---and a homoclinic bifurcation curve---at which oscillations are destroyed. 
Relaxing the symmetry condition of the pulse, or shifting the pulse peak away from the spike time, generates a wide region in parameter space where collective oscillations arise naturally and as the unique attractor of the network dynamics.
I have showed in networks of inhibitory QIF neurons with excitatory drive ($J<0,I_0>0$), that shifting the bulk of the pulse only slightly to the phase after the actual spike yielded robust ING oscillations.
Moreover, the collective dynamics generated by instantaneous asymmetric pulses resembled those generated by first-order synapses with $\delta$-spikes~\citep{devalle2017}.
In general, however, the correspondence between pulses of finite width and more detailed models of synaptic transmission is elusive.
Put differently, (narrow) pulse-coupling complements, but does not replace, synaptic kinetics and conductance-based synapses.

As an outlook and to bring the presented formalism even closer to experimental data, I leave for future work the analysis of pulse-triggered synaptic kinetics or conductance-based synapses, see the augmented RV dynamics~\eqref{eq:FRE_syn} and \eqref{eq:FRE_syn_cond}, or when incorporating finite threshold and reset values and asymmetric spikes with $v_p \neq - v_r$ as considered by \cite{montbrio_pazo_2020,gast2023macro}, see also \cref{appsec:E}.
I mention in passing that one can also transform the QIF dynamics around realistic resting potentials $\approx\!-70$mV, 
see \cref{appsec:A}, without changing the overall collective dynamics qualitatively.
In the current work, I did not consider habituation nor activity-dependent modulation of synaptic transmission, which allowed me to disentangle the effect of the pulse shape on the collective dynamics. 
To do so, I introduced voltage-dependent pulses in \cref{sec:3} that arise through the interplay of a synaptic activation function with the action potential and, thus, directly account for its shape.
Now, the shape of the action potential, i.e.~its waveform, is not only cell type- and temperature-dependent~\citep{sabatini1999timing,gray1996chattering,bean2007action},
but may also undergo dynamic changes through various plasticity mechanisms~\citep{sabatini1999timing,bean2007action,byrne1996presynaptic}. 
These can lead to action potential broadening or amplitude reduction, which subsequently affects neurotransmitter release and synaptic transmission and may hence directly, or indirectly, contribute to learning and memory storage~\citep{gershman2023}.
The proposed pulse-coupling is versatile to incorporate dynamical changes of the action potential, which will open up new avenues for investigating network effects of (pre-)synaptic, intrinsic, and homeostatic plasticity 
complementary to previously proposed mean-field approaches~\citep{tsodyks1998neural,zierenberg2018homeostatic,schmutz2020mesoscopic,pietras_schmutz_schwalger_2022,taher2020exact,gast2020meanfield,gast2021meanfield,Bandyopadhyay2021.10.29.466427,chen2022exact,taher2022bursting,ferrara2023population,gast2023macro},
and leverage more detailed network simulations as in \citep{lavi2015shaping} that combine the computational advantages of spiking neural networks with biological realism at the microscopic level, including a more realistic synaptic transmission process.
\subsection*{Acknowledgements}
I want to thank A.~Pikovsky, E.~Montbri\'o, R.~Cestnik and A.~Daffertshofer for fruitful discussions.
The project has received funding from the European Union’s Horizon 2020 research and innovation programme under the Marie Sklodowska-Curie grant agreement No 101032806.

\renewcommand{\baselinestretch}{1}
\bibliographystyle{apalike}

\newpage
\appendix

\counterwithin*{equation}{subsection}

\setcounter{page}{1}
\renewcommand{\thepage}{S\arabic{page}}
\setcounter{equation}{0}
\setcounter{theorem}{0}
\renewcommand{\theequation}{\thesubsection\arabic{equation}}
\renewcommand{\thesubsection}{\Alph{subsection}}
\renewcommand{\thesubsubsection}{\Alph{subsection}.\arabic{subsubsection}}


\section*{Appendix}
\subsection{Proof ot \cref{thm:1}}
\label{appsec:proof}

A mathematically rigorous exposition of \cref{thm:1},
based on Theorem 1 in \citep{Izhikevich_1999} and building on the Assumptions \ref{item1}--\ref{item5} in the main text, reads:

\begin{theorem}
  \label{thm:app}
  Consider a weakly connected neural network of the form 
  \begin{equation}
  \dot X_j = F_j(X_j,\lambda) + \varepsilon G_j(X_1,\dots,X_N; \lambda,\rho, \varepsilon), \quad X_j \in \mathbb{R}^m, \quad j=1,\dots,N.
  \label{eq:app_thm1}
  \end{equation}
  The functions $F_j:\mathbb{R}^m \times \Lambda \to \mathbb{R}^m$ describe the individual dynamics of each neuron $j=1,\dots,N$, where $\lambda\in \Lambda$ summarizes system (control) parameters and is an element of some Banach space $\Lambda$, typically $\Lambda = \mathbb{R}^l$ for some $l>0$.
  The functions $G_j:\mathbb{R}^m \times \dots \times \mathbb{R}^m \times \Lambda \times \mathbb{R}^n \times \mathbb{R} \to \mathbb{R}^m$ describe weak perturbations of order $\mathcal{O}(\varepsilon)$ for small $|\varepsilon|\ll1$, which are due to recurrent coupling through the other neurons or due to external inputs $\rho = \rho_0 + \mathcal O(\varepsilon) \in \mathbb R^n$ for some $n>0$.
  We assume that each (uncoupled) subsystem $\dot X_j = F_j(X_j,\lambda)$ undergoes a SNIC bifurcation for some $\lambda=\lambda_0$,
  that each function $G_j$ has the pair-wise connected form 
  \begin{equation}
    G_j(X_1,\dots,X_N; \lambda_0,\rho_0, 0)  = \sum_{k=1}^N G_{jk} (X_j,X_k)
    \label{eq:app_thm_coupling}
  \end{equation}
  and each $G_{jk} (X_j,X_k)=0$ for $X_k$ from some open neighborhood of the saddle-node bifurcation point.
  Moreover, we assume that in the uncoupled system, $\varepsilon=0$, the phase space of each $X_j\in\mathbb R^m$, $j=1,\dots,N$, has an attractive normally hyperbolic compact invariant manifold $M_j\subset \mathbb R^m$ that is homeomorphic to $\mathbb S^1$.

  Then, there is $\varepsilon_0 >0$ such that for all $\varepsilon < \varepsilon_0$ and all $\lambda = \lambda_0 + \mathcal O(\varepsilon^2)$, there is a piece-wise continuous transformation that maps solutions of \eqref{eq:app_thm1} to those of the canonical network model
  \begin{equation}
  \begin{aligned}
  \theta'_j = (1-\cos \theta) + (1+\cos \theta_j) c_j(\theta_1,\dots,\theta_N;\varepsilon) + R_j(\theta_j,\varepsilon), \quad j=1,\dots,N,
  \label{eq:app_thm2}
  \end{aligned}
  \end{equation} 
  where $\theta_j \in \mathbb{S}^1$ are phase variables, $'=d/d\tau$ and $\tau=\varepsilon t$ is slow time. 
  The $c_j$ are connection functions and $R_j$ small remainders, see \cref{eq:app_proof_connfunc,eq:app_proof_remainder} below.

  The canonical model \eqref{eq:app_thm2} of the weakly coupled neural network \eqref{eq:app_thm1} can be approximated by
  \begin{equation}
    \theta'_j = (1-\cos \theta_j) + (1+\cos \theta_j) \Big[ \eta_j + \sum_{k=1}^N p_{jk} (\theta_k) \Big]
    \label{eq:app_thm3}
  \end{equation}
  with constants $\eta_j\in \mathbb{R}$, and the functions $p_{jk}(\theta_k)$ are given by \cref{eq:app_pulse_def} and describe smoothed $\delta$-pulses of strength $s_{jk}=\mathcal O(\|G_{jk}\|)$, i.e.~$s_{jk}$ is proportional to the amplitude of the coupling in \eqref{eq:app_thm1}; here, $\|.\|$ is the supremum norm.
  The duration of the emitted pulses is short as the $p_{jk}$ satisfy $p_{jk}(\theta_k) = 0$ if $|\theta_k - \pi| > 2\sqrt\varepsilon$ for all $j,k=1,\dots,N$, see \cref{fig:cartoon_pulse}, 
  which can, but need not, justify their approximation by discontinuos Dirac $\delta$-pulses at the time of a neuron's spike.
\end{theorem}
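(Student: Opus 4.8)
The plan is to build directly on Theorem~1 of \citep{Izhikevich_1999} and the canonical-model reduction of \citep{Hoppensteadt_Izhikevich_1997}, refining it so as to expose the \emph{pulse} structure of the coupling rather than merely its $\delta$-limit. I would organize the argument in four stages: an invariant-manifold reduction to a scalar phase on each circle, a normal-form rescaling near the SNIC that produces the single-neuron model \eqref{eq:EK_canonical}, a matched-asymptotic treatment of the weak coupling that yields the pulse--response form, and finally remainder estimates.

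First, since in the uncoupled system each $X_j$ has an attracting, normally hyperbolic compact invariant manifold $M_j \cong \mathbb{S}^1$, Fenichel persistence guarantees that for all $|\varepsilon| < \varepsilon_0$ the perturbed system \eqref{eq:app_thm1} possesses a nearby invariant manifold on which the flow is described by a single phase $\phi_j$. On $M_j$ the SNIC assumption puts the on-circle dynamics, near the bifurcation point, into the scalar saddle-node normal form $\dot\phi_j = b_j(\lambda) + a_j \phi_j^2 + \mathcal{O}(\phi_j^3)$ with $b_j(\lambda_0)=0$. Choosing $\lambda = \lambda_0 + \mathcal{O}(\varepsilon^2)$ fixes $b_j = \varepsilon^2\beta_j$, and the balance $\phi_j = \varepsilon u_j$, $\tau = \varepsilon t$ renders the normal form $\mathcal{O}(1)$ in slow time. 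The substitution $u_j = \tan(\theta_j/2)$ then maps the rescaled flow to $\theta_j' = (1-\cos\theta_j) + (1+\cos\theta_j)\eta_j + \dots$, the factor $(1+\cos\theta_j)$ being the Type~I phase-response curve generated automatically by the quadratic normal form. This reproduces the Ermentrout--Kopell model and identifies $\eta_j$ with the rescaled $\beta_j$.

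Second, I would track the weak coupling. By Assumption~\ref{item5}, $G_{jk}(X_j,X_k)$ vanishes on a fixed neighborhood of the saddle-node, so the input from $k$ to $j$ is active only while $\phi_k$ lies outside this dead zone, i.e.\ during the fast spike of neuron $k$ (the excursion around the far side of $M_k$, corresponding to $\theta_k$ near $\pi$). A matched-asymptotic argument joins the inner region ($\phi_k = \mathcal{O}(\varepsilon)$, the slow saddle-node passage giving the $\theta$-dynamics) to the outer region ($\phi_k = \mathcal{O}(1)$, the spike); the coupling switches on beyond the intermediate matching scale $\phi_k \sim \sqrt\varepsilon$, which in the phase coordinate is precisely $|\theta_k - \pi| < 2\sqrt\varepsilon$, yielding the stated support of $p_{jk}$, see \cref{fig:cartoon_pulse}. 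Integrating $\varepsilon G_{jk}$ over the brief passage produces a localized bump of strength $s_{jk} = \mathcal{O}(\|G_{jk}\|)$. Because neuron $j$ is itself $\varepsilon^2$-close to the SNIC, its own phase $\theta_j$ is essentially frozen during $k$'s fast spike, so one may replace $p_{jk}(\theta_j,\theta_k)$ by $p_{jk}(\theta_k)$ and factor the contribution as $(1+\cos\theta_j)\,p_{jk}(\theta_k)$ up to higher-order corrections, which gives \eqref{eq:app_thm3}. Collecting the remainders $R_j$ together with the neglected $\mathcal{O}(\phi^3)$ and $\mathcal{O}(\varepsilon^2)$ terms and bounding them uniformly on $\mathbb{S}^1$ then completes the approximation of the full canonical model \eqref{eq:app_thm2} by \eqref{eq:app_thm3}.

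The hard part will be the outer, spike region, where the quadratic normal form is no longer valid and the genuinely nonlinear flow on $M_k$ must be controlled; this is exactly why only the qualitative pulse form, the support bound $2\sqrt\varepsilon$, and the strength scaling can be secured, while the explicit pulse shape remains out of reach. Two further technical points require care: the reduction must be carried out for all $N$ neurons \emph{simultaneously}, since any near-identity transformation of one neuron feeds back into the coupling terms of the others, and the freezing of $\theta_j$ during an incoming spike must be justified quantitatively, as it breaks down once $j$ is no longer $\varepsilon^2$-close to the bifurcation, as noted in the Remarks following \cref{thm:1}.
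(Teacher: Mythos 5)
Your proposal follows essentially the same route as the paper's proof: a Fenichel invariant-manifold reduction to circle dynamics carried out for all neurons at once, the quadratic saddle-node normal form with the scaling $\lambda=\lambda_0+\mathcal{O}(\varepsilon^2)$ and the substitution $x_j\propto\varepsilon\tan(\theta_j/2)$, localization of the coupling via the dead-zone assumption to pulses supported in $|\theta_k-\pi|<2\sqrt{\varepsilon}$ with the postsynaptic phase frozen during an incoming spike, and the strength scaling $s_{jk}=\mathcal{O}(\|G_{jk}\|)$. The only difference is presentational: where you invoke matched asymptotics, the paper cites the corresponding Hoppensteadt--Izhikevich machinery (persistence and preservation of the pair-wise structure, Theorem 8.11 for the intermediate canonical model, the VCON lemma for the $\mathcal{O}(\sqrt{\varepsilon}\log\varepsilon)$ remainders, and Proposition 8.12 for the smoothed-$\delta$-pulse interpretation), and it additionally records that $\eta_j$ picks up contributions from $D_\rho g_j\cdot\rho_1$ and $\partial_\varepsilon g_j$, not only from the unfolding of $\lambda$.
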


\begin{proof} \phantom{text}\newline
  \emph{Step 1: Invariant manifold reduction---}
  Each uncoupled subsystem $\dot X_j = F_j(X,\lambda)$ has an attractive normally hyperbolic compact invariant manifold $M_j$. Hence, the direct product $M = M_1 \times \dots \times M_N$ is also normally hyperbolic.
  Fenichel's theorem \cite[Theorem 4.1 in][]{Hoppensteadt_Izhikevich_1997} states that the network dynamics \eqref{eq:app_thm1} then persists to have an attractive normally hyperbolic compact invariant manifold $M_\varepsilon \subset \mathbb R^m \times \dots \times \mathbb R^m$ for sufficiently small coupling strengths $\varepsilon \ll 1$. 
  The persistence of normally hyperbolic manifolds further ensures that there is an open neighborhood $W$ of both $M$ and $M_\varepsilon$ as well as a continuous function $h_\varepsilon \colon W \to M$ that maps solutions of \eqref{eq:app_thm1} to those of the local model
  \begin{equation}
    \dot X_j = F_j(X_j,\lambda) + \varepsilon \tilde G_j(X_1,\dots,X_N; \lambda,\rho, \varepsilon), \quad X_j \in M_j,
    \label{eq:app_proof1}
  \end{equation}
  see \cite[Theorems 4.3 and 4.7 in][]{Hoppensteadt_Izhikevich_1997}.
  
  For the $X_j$ are $\mathcal O(\varepsilon^2)$-close to a SNIC bifurcation, the invariant manifolds $M_j$ are homeomorphic to $\mathbb{S}^1$, which allows one to rewrite \eqref{eq:app_proof1} as
  \begin{equation}
    \dot x_j = f_j(x_j,\lambda) + \varepsilon g_j(x_1,\dots,x_N; \lambda,\rho, \varepsilon), \quad x_j \in \mathbb S^1,
    \label{eq:app_proof2}
  \end{equation}
  where the $f_j$ correspond to $F_j$. 
  Due to the pair-wise connections in the $\varepsilon$-perturbed system \eqref{eq:app_proof1}, one can construct the mappings from the attractive normally hyperbolic compact invariant manifolds $M_{j,\varepsilon}$ corresponding to the $\varepsilon$-perturbations onto the (unperturbed) manifolds $M_{j}$ such that the $g_j$ share the pair-wise connected form of the $G_j$, i.e.
  \[
    g_j(x_1,\dots,x_n; \lambda_0, \rho_0, 0) = \sum_{k=1}^N g_{jk} (x_j,x_k) .
  \]
  The construction of such mappings that restrict the dynamics of \eqref{eq:app_proof1} onto $M = M_1 \times \dots \times M_N$ ensures that the $g_{jk}(x_j,x_k)=0$ for $x_k$ from a small, open neighborhood of the saddle-node bifurcation point.
  A more rigorous proof of the properties of the $g_j$ follows along the lines of the proof of, and the comment after, Theorem 4.7 as well as the lines of the theorem and the remarks in Section 5.2.1 of \cite{Hoppensteadt_Izhikevich_1997}.
  
  Without loss of generality, we now assume that the saddle-node bifurcation on the limit cycle occurs at $x_j=0$, $j=1,\dots,N$, when $\lambda = 0$.
  Thus,
  \[
    f_j(0,0) = 0, \quad \frac{\partial f_j(0,0)}{\partial x_j} = 0, \quad \frac{\partial^2 f_j(0,0)}{\partial x_j^2} > 0 \quad \text{for all } j=1,\dots, N.
  \] 
  Moreover, we introduce the parameters $\lambda_2$ and $\rho_0$ as
  \[
    \lambda = \varepsilon^2 \lambda_2 + \mathcal O(\varepsilon^3) \quad \text{and} \quad \rho =\rho(\varepsilon) = \rho_0 + \varepsilon \rho_1 + \mathcal O(\varepsilon^2).
  \]
  Finally, we assume that $g_{jk}(x_j,x_k)=0$ for $|x_k| <\sqrt{\varepsilon}/p_k$ for some constants $p_k$ to be defined below.

  \emph{Step 2: Reduction of the canoncical model---}
  \Cref{eq:app_thm2} can be obtained from \cref{eq:app_proof2} as in \cite[Theorem 8.11 in][]{Hoppensteadt_Izhikevich_1997}.
  The connection functions obey
  \begin{equation}
    c_j(\theta_1,\dots,\theta_N;\varepsilon) = \begin{cases}
      \vspace{0.35em}
      \parbox[t][0.5em][c]{1.5cm}{$\eta_j$,} &\parbox[t]{7.5cm}{if no neuron is firing a spike,\\
      i.e.\ $\forall k\in\{1,\dots,N\}: \quad |\theta_k - \pi| > 2\sqrt{\varepsilon}$,}\\
      \parbox[t][0.5em][c]{1.4cm}{$\mathcal O(1/\varepsilon),$}  &\parbox[t]{7.5cm}{if at least one neuron is firing a spike,\\
      i.e.\ $\exists k\in\{1,\dots,N\}: \quad |\theta_k - \pi| \leq 2\sqrt{\varepsilon}$.}
    \end{cases}
    \label{eq:app_proof_connfunc}
  \end{equation}
  The first case follows from the local canonical model for multiple saddle-node bifurcations in weakly connected neural networks \cite[cf.~Theorem 5.4 in][]{Hoppensteadt_Izhikevich_1997} when noticing that $\partial_{x_k} g_j(0,\dots,0;0,\rho_0,0)=0$ because $g_{j,k}(x_j,x_k)$ vanishes in an open neighborhood of $|x_k|=0$;
  the constants $\eta_j$ are defined below.
  The second case follows analogously to the proof of \cite[Theorem 8.8 in][]{Hoppensteadt_Izhikevich_1997} when replacing $\sqrt[4]{\varepsilon}$ by $\sqrt{\varepsilon}$, see also below.
  Moreover, it follows from the VCON Lemma 8.2 with $\delta=\sqrt{\varepsilon}$ in \citep{Hoppensteadt_Izhikevich_1997} that the remainders satisfy
  \begin{equation}
    R_j(\theta_j,\varepsilon) = \begin{cases}
      \vspace{0.35em}
      \parbox[t][0.5em][c]{1.75cm}{$\mathcal O(\sqrt{\varepsilon})$,} \quad &\parbox[t]{7.5cm}{if no neuron is firing a spike,\\
      i.e.\ $\forall k\in\{1,\dots,N\}: \quad |\theta_k - \pi| > 2\sqrt{\varepsilon}$,}\\
      \parbox[t][0.5em][c]{2.2cm}{$\mathcal O(\sqrt{\varepsilon} \log \varepsilon),$}  &\parbox[t]{7.5cm}{if at least one neuron is firing a spike,\\
      i.e.\ $\exists k\in\{1,\dots,N\}: \quad |\theta_k - \pi| \leq 2\sqrt{\varepsilon}$.}
    \end{cases}
    \label{eq:app_proof_remainder}
  \end{equation}

  \emph{Step 3: Approximation of the canoncical model---}
  We now follow the proof of \cite[Proposition 8.12 in][]{Hoppensteadt_Izhikevich_1997}, expand \eqref{eq:app_proof2} as a Taylor series and consider the first parts, 
  \begin{equation}
    \dot x_j = p_j x_j^2 + \varepsilon^2 p_j \eta_j + \varepsilon \sum_{k=1}^N g_{jk} (x_j,x_k) ,
    \label{eq:app_proof3}
  \end{equation}
  where 
  \[
    p_j = \frac{1}{2} \frac{\partial^2 f_j(0,0)}{\partial x_j^2}
    \quad \text{and} \quad
    \eta_j = \big[ D_\lambda f_j(0,0) \cdot \lambda_{2} + D_\rho g_j \cdot \rho_1 + \partial_\varepsilon g_j \big] / p_j;
  \]
  here, $D_\lambda f_j$ denotes the component-wise partial derivatives of $f_j$ with respect to the (possibly multidimensional) bifurcation parameter $\lambda\in\Lambda$.
  Likewise, $D_\rho g_j$ denotes the component-wise partial derivatives of $g_j$ with respect to an external input $\rho$ and $\partial_\varepsilon g_j$ is the partial derivative of $g_j$ with respect to $\varepsilon$, both terms are evaluated at
  $(x_1,\dots,x_N; \lambda, \rho,\varepsilon)=(0,\dots,0; \lambda_0, \rho_0,0)$.
  If neuron $j$ does not fire, i.e.\ $|x_j|<\sqrt{\varepsilon}/p_j$, we can use the change of variables
  \begin{equation}
    x_j = \frac{\varepsilon}{p_j} \tan(\theta_j/2),
    \label{eq:app_proof_changevariables}
  \end{equation}
  and transform \eqref{eq:app_proof3} to
  \begin{equation}
    \dot \theta_j = \varepsilon \big[ (1 - \cos \theta_j) + (1+\cos\theta_j) \eta_j\big] + (1+\cos \theta_j) p_j \sum_{k=1}^N g_{jk} (x_j,x_k) \ .
    \label{eq:app_proof4}
  \end{equation}
  If the other neurons do not fire, either, then $|x_j|<\sqrt{\varepsilon}/p_j$ for all $j=1,\dots,N$, the last sum vanishes and neuron $j$ evolves on the slow time scale $\tau = \varepsilon t$ according to
  \begin{equation}
    \theta_j'(\tau) = (1 - \cos \theta_j) + (1+\cos\theta_j) \eta_j \ .
  \end{equation}

  We now determine the effect of an incoming spike. 
  Say, neuron $k\neq j$ starts to fire a spike at time $t=0$, that is, $x_k(0) = \sqrt{\varepsilon}/p_k$.
  Neglecting the terms of order $\varepsilon$ in \eqref{eq:app_proof4} yields
  \begin{equation}
    \dot \theta_j = (1+\cos \theta_j) p_j g_{jk} \big(x_j(t),x_k(t)\big) \ ,
    \label{eq:app_proof5}
  \end{equation}
  which we must integrate from initial condition $\theta_j(0)=\theta_j^\text{old}$ until the end of the spike at time $t^\ast$ when $x_k(t^\ast) = -\sqrt{\varepsilon}/p_k$.
  In other words, the incoming pulse $p_{jk}(t)$ through the spiking of neuron $k$ is evolving on the slow time scale as
  \begin{equation}
    p_{jk}(\tau) = \frac{p_j}{\varepsilon} g_{jk} \left( \frac{\varepsilon}{p_j} \tan\left(\frac{\theta_j(\tau/\varepsilon)}{2} \right), \frac{\varepsilon}{p_k} \tan\left(\frac{\theta_k(\tau/\varepsilon)}{2} \right) \right) 
    \label{eq:app_proof6}
  \end{equation}
  for $\tau \in (0,t^\ast/\varepsilon)$;
  note that $p_{jk}(\tau) = \mathcal O(1/\varepsilon)$ as already indicated by \cref{eq:app_thm2} with the connection functions $c_j$ given by \eqref{eq:app_proof_connfunc}.
  As we assumed that neuron $j$ does not spike, we can approximate its phase at first order to be constant $\theta_j(t)\approx \theta_j^\text{old}$, so that the pulse function \eqref{eq:app_proof6} only depends on the phase of neuron $k$,
  \begin{equation}
    p_{jk}(\theta_k) := \frac{p_j}{\varepsilon} g_{jk} \left( \varepsilon \tan\left( \theta_j^\text{old}/{2} \right) / p_j ,  \varepsilon \tan\left(\theta_k/{2} \right) / p_k \right) .
    \label{eq:app_pulse_def}
  \end{equation}
  Together, this yields the slow dynamics of the weakly pulse-coupled canonical network \eqref{eq:app_thm3}.
  Moreover, as $g_{jk}(x_j,x_k)\neq 0$ only if $|x_k| > \sqrt{\varepsilon}/p_k$, one can use \cref{eq:app_proof_changevariables}
  and Taylor expand $1/\tan(x)$ about $x = \pi/2$ to find
  \[
    |x_k| > \sqrt{\varepsilon}/p_k \quad \Longleftrightarrow\quad \sqrt{\varepsilon} > \frac{1}{\left|\tan(\theta_k/2)\right|} \approx \frac{|\theta_k-\pi|}{2} \quad \text{if}\quad |\theta_k -\pi| \ll 1,
  \]
  such that $p_{jk}(\theta_k) \neq 0$ only if $|\theta_k-\pi| < 2\sqrt{\varepsilon}$ and it is zero otherwise.

  \emph{Step 4: Pulse-interpretation as smoothed $\delta$-pulses---}
  Since $p_{jk}(\theta_k)$ as defined by \cref{eq:app_pulse_def} is a smooth function of $\theta_k$ and nonzero only in a small neighborhood of $\pi$, see \cref{fig:cartoon_pulse}, one can naturally identify $p_{jk}(\theta_k)$ as a smoothed $\delta$-pulse.

  Another, mathematically elegant way to see this was proposed by Izhikevich and Hoppensteadt, who computed the phase shift corresponding to an incoming spike of neuron $k$ by solving \cref{eq:app_proof5} through separation of variables, see \cite[Proposition 8.12 in][]{Hoppensteadt_Izhikevich_1997}:
  The new phase $\theta_j^\text{new}$ after the end of the spike can be approximated as
  \begin{equation}
    \theta_j^\text{new} = 2 \arctan \left( \tan \frac{\theta_j^\text{old} }2 + s_{jk}\right) - \theta_j^\text{old} 
  \end{equation}
  with 
  \begin{equation}
    s_{jk} = \frac{1}{2} \frac{\partial^2 f_j(0,0)}{\partial x_j^2} \int_{\mathbb{S}^1} \frac{g_{jk}(x_j^\text{old},x_k)}{f_k(x_k,0)} dx_k \ ;
  \end{equation}
  the $s_{jk}$ are then proportional to $\|g_{jk}\|$ and thus correspond to the amplitude of the coupling functions $G_{jk}$.
  This leads to the approximate network model  
  \begin{equation}
    \begin{aligned}
    \theta'_j = (1-\cos \theta) + (1+\cos \theta_j) \eta_j + \sum_{k=1}^N w_{jk}(\theta_j) \delta(\theta_k - \pi)
    \end{aligned}
    \label{eq:app_proof7}
  \end{equation} 
  where $\delta$ is the Dirac $\delta$-function---justified by the short duration of an incoming spike, lasting $4\sqrt{\varepsilon}$ units of slow time---and each function $w_{jk}$ has the form
  \begin{equation}
    w_{jk} (\theta_j) = 2 \arctan \left( \tan \frac{\theta_j}2 + s_{jk}\right) - \theta_j
  \end{equation}

  For small $|s_{jk}|$, the function $w_{jk}$ can be Taylor expanded as $w_{jk}(\theta_j)= (1+\cos \theta_j)s_{jk} + \mathcal{O}(|s_{jk}|^2)$, which yields the Dirac $\delta$ pulse-coupled model
  \begin{equation}
    \begin{aligned}
    \theta'_j = (1-\cos \theta_j) + (1+\cos \theta_j) \Big[ \eta_j + \sum_{k=1}^N s_{jk} \delta(\theta_k - \pi)\Big]
    \end{aligned}
    \label{eq:app_proof8}
  \end{equation}
  as an approximation of \cref{eq:app_proof7} neglecting higher-order coupling terms.
  
  The canonical model \eqref{eq:app_proof7} with the additional remainder $\mathcal O(\sqrt{\varepsilon} \log \varepsilon)$ as in \eqref{eq:app_proof_remainder} was first presented as Theorem 1 in \citep{Izhikevich_1999}, yet without a proof.
  As the remainder smoothes the $\delta$-pulse,
  this motivates the notion that the pulses $p_{jk}(\theta_k)$ are indeed smoothed Dirac $\delta$-pulses.

\end{proof}

\begin{remark}\label{rem:app_1}
  The assumption of attractive normally hyperbolic compact invariant manifolds $M_j$ of the uncoupled subsystems $X_j$ is typically difficult to see, or to guarantee, in high-dimensional systems, but it is crucial to a rigorous derivation of the canonical weakly pulse-coupled network \eqref{eq:app_thm3} of $\theta$-neurons.
\end{remark}
\begin{remark}\label{rem:app_2}
  In order to identify the parameters of the reduced system \eqref{eq:app_thm3} from high-dimensional dynamics \eqref{eq:app_thm1} of weakly coupled Class 1 excitable systems close to SNIC bifurcations, a bottleneck is the construction of the mapping from the $\varepsilon$-perturbation $M_\varepsilon$ onto the normally hyperbolic compact invariant manifold $M$.
\end{remark}

Specifically, \cref{rem:app_1,rem:app_2} mean that even for a concrete example of weakly coupled, Class 1 excitable, conductance-based Hodgkin-Huxley-like neurons, it can be laborious---if at all possible---to reduce the system to a network of pulse-coupled $\theta$-neurons.
First, one has to prove that the dynamics of each uncoupled Hodgkin-Huxley-like neuron evolves on an attractive normally hyperbolic compact invariant manifold that is homeomorphic to $\mathbb{S}^1$.
Second, if these normally hyperbolic manifolds exists, it is another challenge to construct the mapping from the $\varepsilon$-perturbed original dynamics to the dynamics on these manifolds, which at the same time ought to respect the pair-wise coupling structure as in \eqref{eq:app_thm_coupling}.
Such a reduction goes beyond the scope of the manuscript.
The current manuscript rather analyzes general principles of how the shape of putatively reduced pulse-functions $p_{jk}(\theta_k)$ affect the synchronization properties and collective dynamics of pulse-coupled $\theta$-neurons.

\subsection{Conductance-based Wang-Buzs\'aki neuron model}
\label{appsec:WB}

The Wang-Buzs\'aki model is a conductance-based Hodgkin-Huxley-like neuron model first presented in \citep{Wang_Buzsaki_1996},
according to which an interneuron is described by a single compartment and obeys the current balance equation
\begin{subequations}
\begin{equation}
    C\frac{dv}{dt} = - g_{Na}m_\infty^3(v) h (v-E_{Na}) - g_{K} n^4(v-E_{K}) - g_{L}(v-E_{L}) + I_0 + I_\text{app},
\end{equation}
where $C=1~\mu$F/cm$^2$ is the capacitance, $g_L=0.1$mS/cm$^2$ the leak conductance, $E_L=-65$mV, and the passive time constant is $\tau = C/g_l = 10$ms.
The spike-generating Na$^+$ and K$^+$ voltage-dependent ion currents are of Hodgkin-Huxley type with activation and inactivation variables that are governed by the equations:
\begin{align}
    \dot h &= \phi \big[ \alpha_h(v) (1-h) - \beta_h(v) h \big],\\
    \dot n &= \phi \big[ \alpha_n(v) (1-n) - \beta_n(v) n \big],
\end{align}
with the following functions
\begin{align}
    m_\infty(v) &=  \alpha_m(v) / \big[ \alpha_m(v) + \beta_m(v)  \big],\\
    \alpha_m(v) &= 0.1(v+35) / \{ 1 - 0.1\exp[-(v+35)]\},\\
    \beta_m(v) &= 4\exp[-(v+60)/18],\\
    \alpha_h(v) &= 0.07\exp[-(v+58)/20],\\
    \beta_h(v) &= 1 / \{ 1 + \exp[-0.1(v+28)]\},\\
    \alpha_n(v) &= 0.01(v+34) / \{ 1 - \exp[-0.1(v+34)]\},\\
    \beta_n(v) &= 0.125\exp[-(v+44)/80].
\end{align}
\end{subequations}
The standard parameters are the following~\citep{Wang_Buzsaki_1996}:
\begin{equation}
\begin{gathered}
    C = 1.0~\mu\text{F/cm}^2, \; (g_{Na}, g_K, g_L) = (35, 9, 0.1) \text{mS/cm}^2, \\ 
    (E_{Na}, E_K, E_L) = (55, -90, -65)\text{mV}, \; \phi = 5.
\end{gathered}
\end{equation}
For a direct current $I_0 + I_\text{app} = I_\text{SNIC} \approx 0.1601~\mu$A/cm$^2$, the Wang-Buzs\'aki neuron undergoes a SNIC bifurcation and starts firing with arbitrarily low firing rate for applied currents larger than $I_\text{SNIC}$.
In line with~\citep{devalle2017}, I set $I_0 = I_\text{SNIC}$ and $I_\text{app}=0.5~\mu$A/cm$^2$ to obtain the black voltage trace with a narrow action potential in \cref{fig1}.

By updating some of the parameters, especially slowing down the spike generating currents, I obtained the red voltage trace with a broader action potential in \cref{fig1} with the updated values:
\begin{equation}
    g_L \mapsto 0.1 g_L, \; g_{Na} \mapsto 0.28 g_{Na}, \; g_K \mapsto 0.38 g_K, \; \phi \mapsto \phi/3, \; I_\text{app} \mapsto 1.35 I_\text{app}.
\end{equation}

\subsection{Derivation of \cref{eq:QIF} from a QIF model with biophysical parameters}
\label{appsec:A}
In a network of globally coupled, quadratic integrate-and-fire (QIF) neurons $j=1,\dots N$, the membrane potential $v_j$ follows the subthreshold dynamics according to~\citep{latham_et_al_2000,Izhikevich_2007}:
\begin{equation}
 C \frac{dv_j}{dt} = g_L \frac{(v-v_\text{rest}) (v-v_\text{thres})}{v_\text{thres}-v_\text{rest}} + \tilde I_j +  \tilde I_\text{syn} + g (\tilde I_\text{gap} - v_j)
 \label{eq:app_QIF1}
\end{equation}
with membrane time constant $\tau_m = C/g_L$ as the product of capacitance $C$ and membrane resistance $1/g_L$, and resting ($v_\text{rest}$) and threshold ($v_\text{thres}$) voltages.
The terms $\tilde I_j$, $\tilde I_\text{syn}$ and $\tilde I_\text{gap}$ denote individual external inputs as well as common recurrent synaptic input due to all-to-all chemical and electrical coupling, respectively.
Neuron $j$ elicits a spike once it reaches the apex potential $v_p$ and is subsequently reset to $v_r$. 
Introducing $\tau = \tau_m (v_\text{thres}-v_\text{rest})$ and $\tilde g = g (v_\text{thres}-v_\text{rest})/g_L$, one has
\begin{equation}
\begin{aligned}
 \tau \dot v_j &= v_j^2 - (v_\text{rest} +v_\text{thres} +\tilde g) v_j  + v_\text{rest}v_\text{thres} +  ( \tilde I_j +\tilde I_\text{syn} + g\tilde I_\text{syn} )  (v_\text{thres}-v_\text{rest}) /g_L\\
 &=: v_j^2 - \alpha v_j + I_j(t) 
 \end{aligned}
\end{equation}
where the input current $I_j(t)$ comprises quenched and noisy individual inputs as well as common recurrent synaptic inputs due to all-to-all connectivity:
\begin{equation}
I_j(t) = I_0 + \gamma \big[ c \; \! \eta_j + (1-c) \xi_j(t) \big] + I_\text{syn}(t)
\end{equation}
with some common input $I_0$, independent normalized Cauchy white noise $\xi_j\neq\xi_k$ of strength $(1-c)\gamma\geq0$ and Cauchy-Lorentz distributed time-independent inputs $\eta_j$ of strength $c\gamma\geq 0$, where $c\in[0,1]$.
The effect of quenched heterogeneity and Cauchy noise is identical on the collective dynamics~\citep{pcp_arxiv_2022,clusella_montbrio_2022}, that is, the parameter $c\in[0,1]$ becomes redundant on the macroscopic level. 
In other words, the parameter $\gamma >0$ captures the combined effect of both noise and heterogeneity. 
Note, however, that there are important differences on the individual neuron level for different choices of $c$~\citep{clusella_montbrio_2022}.
Furthermore, for constant $\alpha$ one shifts $v_j \mapsto v_j - \alpha/2$ and updates $I_0 \mapsto I_0 - \alpha^2/4$ to obtain the QIF dynamics \eqref{eq:QIF} of the main text,
\begin{equation}
\tau_m\dot{v}_j = v_j^2 +  I_0 + I_\text{syn}(t) + \gamma \big[ c \; \! \eta_j + (1-c)\xi_j(t)\big]
\label{eq:app_QIF}
\end{equation}
with a slight abuse of notation of the time constant $\tau_m$ for consistency.
As the quadratic term in~\eqref{eq:app_QIF} causes the voltage to diverge in finite time, one can consider apex and reset values at infinity, that is, $v_p = -v_r = \infty$. 
In this case, the QIF neuron is equivalent to the $\theta$-model~\citep{Ermentrout_1996}. 
With the transformation $v_j = \tan(\theta_j/2)$, neuron $j$ spikes when its phase $\theta_j$ crosses through $\theta=\pi$ with positive speed. The phase dynamics corresponding to \cref{eq:app_QIF} read (in Stratonovich interpretation of the white noise)
\begin{equation}
\tau_m\dot{\theta}_j = 1-\cos\theta_j + (1+\cos\theta_j) \big[I_0 +   I_\text{syn} + \gamma\big( c \; \! \eta_j + (1-c)\xi_j(t)\big)\big] \; .
\label{eq:theta}
\end{equation}
While the QIF dynamics \eqref{eq:app_QIF} represents a discontinuous dynamical system that may encompass 
real conceptual and mathematical difficulties due to the instantaneous reset whenever a neurons crosses the threshold at infinity \citep{cessac_vieville_2008,kevrekidis_siettos_kevrekidis_2017}, it can sometimes be advantageous to consider the $\theta$-model dynamics \eqref{eq:theta}, which circumvents the fire-and-reset discontinuity and can thus be treated within the theory of continuous dynamical systems. 

\subsection{Firing rate and voltage equations}
\label{appsec:B}
The goal is to provide an exact low-dimensional description of the collective dynamics of the QIF neurons \eqref{eq:QIF}. 
In the thermodynamic limit, i.e.\ for infinitely many neurons $N\to\infty$, the collective state is described by the probability density, either $\mathcal{W}(v,t)$ in voltage space, or $\mathcal{P}(\theta,t)$ in the $\theta$-phase description. 
The change from one density to the other can be achieved via $\mathcal{W}(v,t)dv = \mathcal{P}(\theta,t)d\theta$.
While the evolution of densities is governed by infinite-dimensional partial differential equations, here the aim is to distill the dynamics of two characteristic macroscopic observables of the network: the mean firing rate $R$ (the fraction of neurons spiking per infinitesimal time interval divided by that interval) and the mean voltage $V=\langle v_j \rangle$.
Conveniently, these two properties can be expressed using the Fourier expansion of $\mathcal{P}(\theta,t)=(2\pi)^{-1} \{ 1 + \sum_{n\geq 1}  Z_n(t) e^{-in\theta} + c.c. \}$. 
The $Z_n(t)$ are the Kuramoto-Daido order parameters, whose dynamics can be obtained from \eqref{eq:theta} as~\citep{pcp_arxiv_2022}
\begin{equation}
\tau_m\dot Z_n=n \left[ i\omega Z_n+h Z_{n-1}-h^*Z_{n+1}- \gamma(Z_n+\tfrac{1}{2}Z_{n-1}+\tfrac{1}{2}Z_{n+1}) \right] \; ,
\end{equation}
where $\omega = 1+I_0+ I_\text{syn}$ and $h=\frac{i}{2}(I_0+ I_\text{syn}-1)$.

The firing rate is the flux of the probability density through the threshold $\pi$ and the mean voltage is the population average of the membrane potentials $v_j=\tan(\theta_j/2)=\sin\theta_j/(1+\cos\theta_j)$.
More compactly, one can find their expressions in terms of the order parameters
\begin{subequations}
\begin{align}
R(t) &= \frac{2}{\tau_m} \mathcal{P}(\pi,t) =\frac1{\pi\tau_m} \Big\{ 1 + \sum_{n=1}^\infty (-1)^n \left[ Z_n(t) + Z_n^*(t) \right] \Big\} \; ,\label{eq:app_RZ}\\ 
V(t) &= \lim_{\epsilon\to0} \left\langle \frac{\sin\theta}{1+\cos\theta+\epsilon} \right\rangle = i \sum_{n=1}^\infty (-1)^n  \left[ Z_n(t) - Z_n^*(t) \right] \; .\label{eq:app_VZ}
\end{align}\label{eq:app_RVZ}
\end{subequations}
Capitalizing on the similarity between $R$ and $V$, Eqs.~\eqref{eq:app_RVZ} can be combined~\citep{montbrio_pazo_roxin_2015} to obtain~\citep{pcp_arxiv_2022}
\begin{equation}
 \pi\tau_m R - iV = 1 + 2\sum_{n=1}^\infty (-1)^n Z_n = \Phi + \lambda \frac{\mathcal M(-\sigma)}\sigma,
 \label{eq:app_RV_Pls}
\end{equation} 
where $\mathcal M(k)$ is a constant function that depends on the distribution $\mathcal{W}(v,0)$ of the QIF neurons' initial voltages $v_j(0)$,
and the dynamics of the three complex variables $\Phi, \lambda, \sigma\in \mathbb C$ are governed by the ODEs
\begin{align}
\tau_m\dot{\Phi} = i \Phi^2 -  i I_0 - i I_\text{syn}(t) + \gamma , \quad  
\tau_m \dot{\lambda} = 2i\Phi\lambda , \quad
\tau_m \dot{\sigma} = i\lambda \;.
\label{eq:app_Pls}
\end{align}
The firing rate and voltage (RV) dynamics of the QIF network follow from Eq.~\eqref{eq:app_RV_Pls} and are hence six-dimensional. 
For non-vanishing Cauchy white noise and/or for heterogeneous input currents following a Cauchy-Lorentz distribution of finite width, i.e.~for $\gamma >0$, we have proven that $\lambda \to 0$ for $t\to \infty$ and, hence, the collective dynamics becomes two-dimensional~\citep{pcp_arxiv_2022}. 
On the so-called Lorentzian manifold~\citep{montbrio_pazo_roxin_2015}, $\{\lambda=0\}$, which is closely related to the Ott-Antonsen manifold \citep{ott_antonsen_2008},
the collective dynamics is solely governed by $\Phi=\pi \tau_m R-iV$, yielding the RV dynamics~\citep{montbrio_pazo_roxin_2015,clusella_montbrio_2022}
\begin{subequations}
\begin{align}
\tau_m\dot{R} &= \frac{\gamma}{\pi} + 2RV \;, \\
\tau_m\dot{V} &= V^2 - (\pi \tau_m R)^2 + I_\text{syn} + I_0\;,
\end{align}
\label{eq:app_FRE}
\end{subequations}
which also describes all possible attractors of the full dynamics~(\ref{eq:app_RV_Pls}~\&~\ref{eq:app_Pls}).
Correspondingly, on the Ott-Antonsen manifold the collective dynamics is uniquely described by $Q=Z_1$, which is the Kuramoto order parameter, and all higher modes are powers of $Q$: $Z_n = Q^n$, see~\citep{luke_barreto_so_2013,So-Luke-Barreto-14,Laing_2014,laing_2015}.

As the focus of this paper lies on analyzing the steady states of the system, it suffices to consider the RV dynamics~\eqref{eq:app_FRE} on the invariant Lorentzian manifold.
Moreover, when starting on the Lorentzian manifold, i.e.~when initializing the voltages $v_j(0)$ according to a Cauchy-Lorentz distribution (which corresponds to a wrapped Cauchy distribution of initial phases $\theta_j(0)$),
then for all times $t\geq 0$ the voltage distribution $\mathcal W(v,t)$ is given by the Cauchy-Lorentz distribution~\citep{montbrio_pazo_roxin_2015,Pietras_et_al_2019,clusella_montbrio_2022}
\begin{equation}
    \mathcal W(v,t) = \frac{\tau_mR(t)}{[v-V(t)]^2 + [\pi \tau_m R(t)]^2} \; .
\end{equation}

\subsection{Chemical interaction via general pulses}
\label{appsec:C}
Chemical synaptic interactions are modeled as the postsynaptic response $s(t)$ to a presynaptic pulse $p(t)$.
In the following, I briefly present possible choices of the pulse profile $p(\theta) = p(\theta(t))$. 
As in the main text, I assume that the pulse function is smooth and has the following properties:
\begin{enumerate}
\item \label{pulse_cond1} $p$ is localized around $\theta=\pi$ (or when the membrane potential diverges, $v\to \infty$, respectively),
\item \label{pulse_cond2} $p(\theta)\geq 0$ is non-negative and vanishes at least once, 
\item \label{pulse_cond3} the total area is normalized $\int_{0}^{2\pi}p(\theta)d\theta=2\pi$. 
\end{enumerate}
The pulse function is periodic and thus admits an expansion as a Fourier series
\begin{equation}
p(\theta)=\sum_{k=-\infty}^\infty c_k e^{ik\theta},\quad c_0=1.
\label{eq:app_pulsefunc}
\end{equation}
Then, the mean presynaptic pulse activity $P = \langle p\rangle$ is represented via the order parameters as
\begin{equation}
P= \langle p\rangle=\int_0^{2\pi} p(\theta)\mathcal P(\theta,t) d\theta = 1+\sum_{k=1}^\infty (c_k^* Z_k + c_k Z_k^*) \ .
\label{eq:app_pav}
\end{equation}
As a side note, I remark that one could have defined the mean presynaptic activity \eqref{eq:app_pav} also in the QIF voltage description as $\langle p \rangle = \int_\mathbb{R} p(v)\mathcal W(v,t)dv$. 
In the phase description, however, one can capitalize on the fact that $p(\theta)$ and $\mathcal P(\theta,t)$ admit a discrete Fourier series representation, which facilitates the proposed reduction to great extent.
The voltage formulation of the QIF neurons may therefore be interpreted based on the transformation $v=\tan(\theta/2)$ to the $\theta$-model~\citep{pcp_arxiv_2022}, e.g., the pulse function in $v$ can be obtained via $p(\theta) = p(2\arctan(v(t)))$.

\paragraph{$\delta$-spikes.} When neurons only emit a unitary pulse at the instant when they spike, i.e.\ when $\theta=\pi$, then the pulse profile can be expressed as the limit of a Dirac $\delta$-function $p_{\delta,\pi}(\theta)=2\pi\delta(\theta-\pi)$. The Fourier coefficients are $c_k=(-1)^{|k|}$
and the synaptic activity reduces via Eq.~\eqref{eq:app_RZ} to the mean firing rate
\begin{equation}
P_{\delta,\pi} = \langle p_{\delta,\pi}\rangle=1+\sum_{k=1}^\infty (-1)^k(Z_k+Z_k^*)=\pi \tau_m R(t) \ .
\label{eq:app_dd}
\end{equation}

\paragraph{Ariaratnam-Strogatz (AS) pulses.} 
\cite{ariaratnam2001phase}
suggested a family
of pulse profiles $p_{\text{AS},n}(\theta)=a_n (1-\cos\theta)^n$ with $a_n=\frac{2^n(n!)^2}{(2n)!}$,
where $n\in\mathbb N$ is an integer parameter. 
The AS pulses are smooth, but in the limit $n\to\infty$ they converge to $\delta$-spikes. The average synaptic activity is a finite sum of the order parameters
\begin{equation}
P_{\text{AS},n} = \langle p_{\text{AS},n}\rangle=1+(n!)^2 \sum_{k=1}^n (-1)^k \frac{ Z_k + Z^\ast_k }{(n+k)!(n-k)!} \ .
\label{eq:app_as}
\end{equation}
\paragraph{Rectified-Poisson (RP) pulses.} 
\cite{gallego_et_al_2017} suggested the following
pulse form
\[
p_{\text{RP},r}(\theta)=  \frac{(1-r)(1-\cos\theta)}{1+2r\cos\theta+r^2} \ ,
\]
where the ``sharpness'' parameter $ r\in (-1,1)$ determines the width of the pulse:
For $r=-1$, $p_{\text{RP},r}=1$ is flat; for $r=0$, $p_{\text{RP},r}(\theta)=1-\cos\theta$ coincides with the AS pulse with $n=1$; and in the limit $r\to 1$, $p_{\text{RP},r}(\theta)$ becomes a $\delta$-spike.
The Fourier coefficients are $c_k=\frac{1+r}{2r}(-r)^{|k|}$ and the average activity is the infinite series
\begin{equation}
    P_{\text{RP},r} = \langle p_{\text{RP},r} \rangle = 1 + \frac{1+r}{2r}\sum_{k=1}^\infty (-r)^k \big[ Z_k + Z^\ast_k \big] .
    \label{eq:app_rp}
\end{equation}

\paragraph{Kato-Jones pulses.} 
The pulse shapes above are symmetric about $\theta=\pi$ with $p(0)=0$. Allowing for asymmetric pulses and loosen pulse-assumption~\ref{pulse_cond1},
one can generalize the RP pulse in the same way as \cite{kato_jones_2015} generalized the wrapped Cauchy distribution (a.k.a.~Poisson kernel). I assume that the Fourier coefficients $c_k$ of the pulse function \eqref{eq:app_pulsefunc} are
\[
c_k= a e^{i\varphi} \big(re^{i\psi}\big)^k,\quad k=1,2,\dots, \quad c_{-k} = c_k^* \; .
\]
One recovers the RP pulse for 
$\varphi = 0$ and $a = \frac{1+r}{2r}$. By imposing pulse-property~\ref{pulse_cond2} from above, I set $a=(1-r^2)/(2r)/(1-r \cos \varphi)$ and obtain the general pulse profile $p_{r,\varphi,\psi}$
with parameters $r,\varphi$ and $\psi$
\begin{equation}
p_{r,\varphi,\psi}(\theta)=1+\frac{1-r^2}{1-r\cos\varphi}\frac{\cos(\theta - \psi - \varphi)-r\cos(\varphi)}{1-2r\cos(\theta-\psi)+r^2} \ , 
\label{eq:app_kj}
\end{equation}
where the parameter $\varphi \in [-\pi,\pi)$ 
governs the asymmetry of the pulse:
for $\varphi>0$, the pulse is skewed to the right of the peak, and for $\varphi<0$ to the left.
The average synaptic activity is represented as an infinite series 
\begin{equation}
    P_{r,\varphi,\psi} = \langle p_{r,\varphi,\psi} \rangle  = 1 +\frac{1-r^2}{2r(1-r\cos\varphi)}\sum_{k=1}^\infty r^k \Big[ e^{-i(\varphi+k\psi)} Z_k + e^{i(\varphi+k\psi)} Z_k^* \Big]  .
    \label{eq:app_p_kj}
\end{equation}

\subsection{Recurrent coupling in terms of collective variables}
\label{sec:app_pulses}
As shown in Appendix~\ref{appsec:C}, it is possible to represent the relevant observables $R,V$ and the mean fields $P = \langle p \rangle$
governing the dynamics of the neural population via the order parameters $Z_k$.
Such representations can directly be used for numerical simulations of the ensemble of neurons in the thermodynamic limit, with a 
proper truncation of the infinite series.
However, it is advantageous to express the mean presynaptic pulse activity $P$ through the new complex variables $\Phi,\lambda,\sigma$, and eventually through $R$ and $V$ on the Lorentzian manifold.

For any $P$ expressed as an infinite series of the form \eqref{eq:app_p_kj}---in fact, the $\delta$-spike and the RP pulse \eqref{eq:app_rp} are special cases of the Kato-Jones pulse \eqref{eq:app_kj}---,
one can employ the same approach as in the derivation of \eqref{eq:app_RV_Pls}. 
As I will present in Appendix~\ref{sec:colvar_derivation},
one finds the full expression for the mean presynaptic activity $P_{r,\varphi,\psi}=P_{r,\varphi,\psi}(\Phi,\lambda,\sigma)$ with Kato-Jones pulses $p_{r,\varphi,\psi}$ given by \cref{eq:app_kj} in terms of the complex variables $\Phi,\lambda,\sigma$:
\begin{equation}
\begin{aligned}
P_{r,\varphi,\psi} &= \text{Re} \left\{
\frac{(1-r^2)(1+\Phi)e^{-i\varphi} + (r-\cos\varphi) u}{r(1-r\cos\varphi) u}+\right. \\
&\hspace{1.3cm} \left. \frac{2 \lambda (1-r^2)e^{-i(\varphi+\psi)}}{u \left[ \lambda(1+re^{-i\psi}) - \sigma u \right](1-r\cos\varphi)}
\mathcal M\Big(-\sigma+ \frac{\lambda}{u}(1+re^{-i\psi})\Big)
\right\}
\end{aligned}
\label{eq:appD_eq1}
\end{equation}
where I introduced the auxiliary variable $u = 1-re^{-i\psi}+ \Phi (1+re^{-i\psi})$.
For $\delta$-spikes, $(r,\varphi,\psi)\to(1,0,\pi)$, one finds that $u\to 1$ and $\langle p_{1,0,\pi} \rangle = \langle p_{\delta,\pi} \rangle = \text{Re} \big[\Phi + \lambda \mathcal M(-\sigma)/\sigma\big] = \pi\tau_m R$ as expected from \eqref{eq:app_RV_Pls}.

\subsection*{On the Lorentzian manifold}
Since the interest lies in asymptotic dynamical regimes, I focus on the dynamics on the Lorentzian manifold and take $\lambda\to0$, where $\Phi=\pi\tau_m R - iV$ holds. 
Hence, the mean presynaptic activity simplifies as
\begin{align}
P_{r,\varphi,\psi}&(R,V) = \text{Re} [\alpha/\beta] = a/b, \quad \text{where} \label{eq:app_p_kj_RV}\\
\alpha &= (1-r^2)(1+\pi\tau_m R - iV)e^{-i\varphi} \nonumber\\
&\hspace{0.4cm} + (r-\cos\varphi) \big[ 1-re^{-i\psi}+ (\pi\tau_m R - iV)(1+re^{-i\psi})\big]\nonumber\\
\beta &= r(1-r\cos\varphi)  \big[ 1-re^{-i\psi}+ (\pi\tau_m R - iV) (1+re^{-i\psi})\big] \nonumber\\
a &= (1 -2r\cos\varphi +r^2) \big[1+(\pi\tau_mR)^2+V^2\big]+(1-r^2)2\pi\tau_m R \nonumber\\
&\hspace{0.4cm} + \big[1-(\pi\tau_mR)^2 - V^2\big] \big[ \cos(\psi+\varphi) - 2r \cos(\psi) + r^2 \cos(\psi-\varphi) \big] \nonumber\\
&\hspace{0.4cm} + 2V \big[ \sin(\psi+\varphi) - 2r \sin(\psi) + r^2 \sin(\psi-\varphi) \big] \nonumber\\
b &= (1-r\cos\varphi)\left\lbrace(1+r^2)\big[1+(\pi\tau_mR)^2+V^2\big]+(1-r^2)2\pi\tau_m R\right\rbrace \nonumber\\
&\hspace{0.4cm} - 2r(1-r\cos\varphi)\left\lbrace 2V\sin\psi + \big[1-(\pi\tau_m R)^2 - V^2\big] \cos\psi\right\rbrace \nonumber
\end{align}
and is fully expressed by the firing rate $R$ and mean voltage $V$.
In sum, one obtains a closed set of firing rate equations \eqref{eq:app_FRE} for $R$ and $V$ with the synaptic input $I_\text{syn}$ as a function of $R$ and $V$ through \eqref{eq:app_p_kj_RV}, which is the formula \eqref{eq:p_mean} of the main text.
I will now present special cases of pulse profiles $p_{r,\varphi,\psi}$ such that \eqref{eq:app_p_kj_RV} simplifies further.

\subsubsection{Symmetric pulses $(\varphi=0)$}
For pulses $p_{r,\varphi,\psi}$ that are symmetric ($\varphi=0$) about the pulse peak at $\psi \in [0,2\pi)$ with width $r \in (0,1]$, one has
%
\begin{align}
P_{r,0,\psi}&(R,V) = a_s/b_s, \quad \text{where} \\
a_s &= (1+r)2\pi\tau_m R+(1-r) \left[ 1 + (\pi\tau_mR)^2 + V^2 + c_s\right]\nonumber\\
b_s &= (1-r^2)2\pi\tau_m R + (1+r^2)\big[1+(\pi\tau_mR)^2+V^2\big] -2rc_s \nonumber\\
c_s &= 2V\sin\psi + \big[1-(\pi\tau_m R)^2 - V^2\big] \cos\psi \nonumber
\end{align}

For symmetric pulses $p_{r,0,\pi}$ about the peak at $\psi=\pi$, one retrieves the RP pulse with width $r\in(0,1]$. Here, the pulse peak $\psi = 2\arctan(v_\text{thr})$ coincides with the QIF threshold $v_\text{thr}=v_p \to \infty$, and the pulse is strongest when a neuron spikes. One finds
\begin{equation}
P_{\text{RP},r}(R,V) = \frac{2\pi\tau_m R[1+r+(1-r)\pi\tau_m R] + 2(1-r)V^2}{[1+r+(1-r)\pi \tau_mR]^2 + (1-r)^2V^2} \quad \stackrel{r\to1}{\longrightarrow} \quad \pi\tau_m R.
\end{equation}

For shifted RP pulses $p_{r,0,\psi}$, $\psi\neq \pi$, one can describe the mean presynaptic activity $P_{r,0,\psi}(R,V)=a_s/b_s$ in terms of the virtual threshold $v_\text{thr} = \tan(\psi/2)$ as:
\begin{align}
a_s&= 4\pi\tau_m R (1+v_\text{thr}^2) + 2 (1-r) \big[ (\pi \tau_m R v_\text{thr})^2 +(1+V v_\text{thr})^2 -\pi \tau_m R (1+v_\text{thr}^2)\big]\nonumber\\
b_s &= 
4\big[(\pi\tau_mR)^2 + (V-v_\text{thr})^2\big] + (1-r)^2 (1+v_\text{thr}^2)\big[ (1-\pi\tau_mR)^2 + V^2\big] \nonumber\\
&\hspace{0.4cm} -4(1-r)\big[ (V-v_\text{thr})^2-\pi\tau_m R(1-\pi\tau_mR+v_\text{thr}^2) \big]\nonumber
\end{align}

\subsubsection{Dirac $\delta$-pulses with peak phase $\psi \in [0,2\pi)$}
For Dirac $\delta$-pulses in the limit $r\to1$ and $\varphi=0$, a unitary pulse is emitted at phase $\theta=\psi$, 
that is whenever a neuron's membrane potential $v$ increases through the virtual threshold $v_\text{thr} = \tan(\psi/2)$.
Then,
\begin{align}
P_{1,0,\psi}(R,V) &= \frac{2\pi\tau_m R}
{1+(\pi\tau_mR)^2+V^2 - \left\{ 2V\sin\psi + [1-(\pi\tau_m R)^2 - V^2] \cos\psi\right\}} \\
&\stackrel{\psi\to \pi}{\longrightarrow} \; \pi \tau_m R  \ ,\nonumber 
\end{align}
or alternatively, cf.~Eq.~\eqref{eq:deltav_mean},
 \begin{equation*}
P_{1,0,\psi}(R,V)  = P_{\delta,v_\text{thr}}(R,V) = \frac{\pi\tau_m R (1+v_\text{thr}^2)}{(\pi\tau_mR)^2 + (V-v_\text{thr})^2} \quad \stackrel{v_\text{thr}\to \infty}{\longrightarrow} \quad \pi \tau_m R \, .
\end{equation*}
As discussed in \cref{subsec:D1}, this expression can be interpreted as an advanced ($\psi<\pi$) or delayed ($\psi>\pi$) pulse-coupling. It is, however, not a delay in the traditional sense $P_{\delta,v_\text{thr}} \neq \pi \tau_m R(t-\tau_l)$ with fixed latency $\tau_l$, but rather the latency arises as a combination of the virtual voltage threshold $v_\text{thr}=\tan(\psi/2)$ and the membrane time constant $\tau_m$. 
Still, one may consider the resulting RV dynamics \eqref{eq:FRE} with shifted $\delta$-pulse-coupling~\eqref{eq:deltav_mean} as a low-dimensional approximation of the infinite-dimensional collective dynamics with real delays, $\tau_l\neq 0$.


\subsubsection{Asymmetric pulses with $\psi=\pi$}
For asymmetric pulses, $\varphi\neq0$, of finite width $r\in(0,1)$, the pulse has no longer its peak at $\psi$, but it will be shifted to the right ($\varphi>0$) or to the left ($\varphi<0$). 
In fact, the pulse peak is no longer at $\theta=\psi$, or equivalently at $v=v_\text{thr}$, 
but is shifted towards $v_{max} = \frac{(1+r)v_\text{thr} \cos(\varphi/2)+(1-r)\sin(\varphi/2)}{(1+r) \cos(\varphi/2)-(1-r) v_\text{thr} \sin(\varphi/2)}$.
Likewise, the pulse vanishes for $v_{min} = \frac{(1+r)v_\text{thr} \sin(\varphi/2)-(1-r)\cos(\varphi/2)}{(1+r) \sin(\varphi/2)+(1-r) v_\text{thr} \cos(\varphi/2)}$. 
The corresponding peak and trough phases $\theta_{max/min}$ can be found via $\theta_{max/min}=2\arctan(v_{max/min})$.

For the virtual threshold $v_\text{thr}$ coinciding with the QIF threshold, $v_\text{thr} = v_p\to\infty$ or $\psi=\pi$, one has $v_{max}=(1+r)/(1-r)/\tan(-\varphi/2)$ and $v_{min}=(1+r)\tan(\varphi/2)/(1-r)$, so that for $\varphi>0$ the pulse is strongest after the neuron has spiked (because $(1+r)/(1-r)>0$ for $r\in(0,1)$).
In fact, for positively slightly skewed, narrow pulses, i.e.\ for $0<\varphi\ll \pi$ and $0\ll r <1$, the pulse sharply increases shortly before a neuron spikes but reaches its peak value only after the spike, and then decays moderately as the neuron returns to its resting potential, see Fig.~\ref{fig:KJ_pulses_all}(a, red and green curves).
Such a behavior is characteristic for second-order synapses with distinct rise- and decay-time constants, see, e.g., \citep{clusella_ruffini_2022}. 
Including a peak phase $\psi$ different from $\pi$ further allows to shift the onset of the pulse and, thus, to model also a latency effect of synaptic transmission.

Setting $\psi=\pi$, one obtains on the Lorentzian manifold
\begin{align}
P_{r,\varphi,\pi}&(R,V) = a_p/ [b_p (1-r\cos \varphi)], \quad \text{where} \label{eq:KJasymm_mean}\\ 
a_p &= b_p - \left[ (1+r)^2 -(1-r)^2 (\pi^2 \tau_m^2 R^2+V^2) \right] \cos\varphi -2(1-r^2)V\sin\varphi \nonumber\\
b_p &= \left[ (1+r)+(1-r)\pi \tau_m R\right]^2 + (1-r)^2V^2 \nonumber
\end{align}
For $\psi\neq \pi$, the mean presynaptic activity $P_{r,\varphi,\psi}$ has to be determined from \eqref{eq:app_p_kj_RV}.

\subsection{Derivation of mean-field and coupling expressions}
\label{sec:colvar_derivation}

In order to derive the expression for the recurrent coupling in terms of the collective variables $\Phi,\lambda,\sigma$, see \cref{appsec:C}, I
follow the same procedure as in \citep{pcp_arxiv_2022}.
In brief, I use Eq.~(15) of \cite{cestnik_pikovsky_2022chaos},
\[ 
Z_n(t) = \sum_{m=0}^n \binom{n}{m} \beta_m(t) Q(t)^{n-m}\;,
\] 
and the complex-valued variables $Q, y,$ and $s$ that are obtained from $\Phi,\lambda,\sigma$ through the transformations
\begin{equation}
Q = \frac{1-\Phi}{1+\Phi},\quad y = \frac{2\lambda}{(1+\Phi)^2}, \quad s = \sigma - \frac{\lambda}{1+\Phi}\;.
\label{eq:app_trans}
\end{equation}
Together with the identity $\sum_{n=0}^\infty \binom{n+k-1}{k-1} x^n = (1-x)^{-k}$, one finds
\begin{align*}
\sum_{n=0}^\infty \xi^n Z_n &= \sum_{n=0}^\infty \xi^n \sum_{m=0}^n \binom{n}{m} \beta_m Q^{n-m} = \sum_{m=0}^\infty b_m \sum_{n=m}^\infty \xi^n \binom{n}{m} Q^{n-m}\\
&= \sum_{m=0}^\infty b_m \xi^m \sum_{n=0}^\infty \xi^n \binom{n+m}{m} Q^{n} 
= \sum_{m=0}^\infty b_m \xi^m \sum_{n=0}^\infty \binom{n+m}{m} \left(\xi Q \right)^{n}\\
&=  
\frac1{1-\xi Q} + \sum_{m=1}^\infty \alpha_m \left(y \xi \right)^m\big(1 - \xi Q\big)^{-(m+1)} \\
&=(1-\xi Q)^{-1}\left[1+\sum_{m=1}^\infty \alpha_m \left(\frac{y\xi }{1-\xi Q}\right)^m\right] ,
\end{align*}
where first one has to use $b_0\equiv 1$ and then substitute $\beta_m(t) = \alpha_m(t) y(t)^m$.
Now I use the definitions of generating functions $\mathcal{A}$ and $\mathcal{M}$ [Eqs.~(19) and (21) of \cite{cestnik_pikovsky_2022chaos}]:
\[
\mathcal{A}(k)=\sum_{m=1}^\infty \alpha_mk^m=\frac{k}{k-s}\mathcal{M}(k-s)
\]
This yields with $k=y\xi /(1-\xi Q$)
\begin{align*}
\sum_{n=0}^\infty \xi^n Z_n &=(1-\xi Q)^{-1}\left[1+\sum_{m=1}^\infty \alpha_mk^m\right] \\
&= (1-\xi Q)^{-1}\left[1+\frac{y\xi }{y\xi -s(1-\xi Q)}
\mathcal{M}\left(-s+\frac{y\xi }{1-\xi Q}\right)\right] .
\end{align*}
For any complex constant $C\in\mathbb C$, one then obtains (recall that $Z_0=1$ for normalization)
\begin{gather*}
\sum_{n=1}^\infty \big[C\xi^n Z_n + C^*\big(\xi^*\big)^n Z_n^*\big] =\\
2 \text{Re}\left\{\frac{C}{1-\xi Q}\left[1+\frac{y\xi }{y\xi -s(1-\xi Q)}
\mathcal{M}\left(-s+\frac{y\xi }{1-\xi Q}\right)\right]\right\} - 2 \text{Re}(C) \; .
\end{gather*}
For example, one finds an expression for the firing rate $R$ by setting $C=1$ and $\xi=-1$ and using \eqref{eq:app_RZ}, 
\begin{gather*}
\pi\tau_m R = 1 + \sum_{n=1}^\infty \left[ (-1)^n Z_n + c.c.\right] =\\
2\text{Re} \left\{(1+Q)^{-1}\left[1+\frac{y}{y+s(1+Q)}\mathcal{M}\left(-\frac{y+s(1+Q)}{1+Q}\right)\right]\right\}-1 .
\end{gather*}
Inserting transformation \eqref{eq:app_trans} yields expression \eqref{eq:app_RV_Pls}:
$\pi\tau_m R = \text{Re} \left\{ \Phi + \lambda \mathcal M(-\sigma)/\sigma\right\}$.

One can further apply the calculation presented above to calculate the mean presynaptic activity $P=\langle p \rangle$ for the pulses $p_{r,\varphi,\psi}$ given by \eqref{eq:app_kj}.
Setting $C=e^{-i\varphi}$ and $\xi=re^{-i\psi}$, one finds that \eqref{eq:app_p_kj} reduces to
\begin{equation}
\begin{gathered}
    P_{r,\varphi,\psi}  = \frac{r-\cos\varphi}{r(1-r\cos\varphi)} +\frac{1-r^2}{r(1-r\cos\varphi)} \times\\
    \times \text{Re}\left\{\frac{e^{-i\varphi}}{1-re^{-i\psi} Q}\left[1+\frac{yre^{-i\psi} }{yre^{-i\psi} -s(1-re^{-i\psi} Q)}
\mathcal{M}\left(-s+\frac{yre^{-i\psi} }{1-re^{-i\psi} Q}\right)\right]\right\}.
    \label{eq:app_p_kj_Qys}
\end{gathered}
\end{equation}
With the transformation \eqref{eq:app_trans}, Eq.~\eqref{eq:app_p_kj_Qys} can be expressed in the variables $\Phi,\lambda,\sigma$ and one finally obtains Eq.~\eqref{eq:appD_eq1},
\begin{align*}
P_{r,\varphi,\psi}(\Phi,\lambda,\sigma) &= 
\text{Re} \left\{
\frac{(1-r^2)(1+\Phi)e^{-i\alpha} + (r-\cos\alpha) u}{r(1-r\cos\alpha) u} \ +\right.\\
&\hspace{0.6cm}\left.
+\frac{2 \lambda (1-r^2)e^{-i(\alpha+\psi)}}{u \left[ \lambda(1+re^{-i\psi}) - \sigma u \right](1-r\cos\varphi)}
\mathcal M\left( -\sigma + \frac{\lambda(1+re^{-i\psi})}{u}\right)
\right\}
    \label{eq:app_p_kj_Pls}
\end{align*}
with the auxiliary variable $u = 1-re^{-i\psi}+ \Phi (1+re^{-i\psi})$.

\subsection{Finite resetting \& nonsymmetric spikes}
\label{appsec:E}
In \citep{montbrio_pazo_2020}, Montbri\'o and Paz\'o showed that the assumption of symmetric threshold and reset values $v_p = -v_r \to \infty$ can be relaxed in the QIF-RV dynamics on the Lorentzian manifold, leading to novel low-dimensional firing rate equations with arbitrary spike asymmetry.
This may become relevant in the case of the QIF model with biophysically realistic parameter values, see \cref{appsec:A}. 
For instance, \citep{latham_et_al_2000} considered the following parameters
\begin{equation}
    v_r =-65 mV, \quad v_t =-50 mV, \quad v_p = 20mV, \quad v_r = -80mV,
    \label{eq:bio_param}
\end{equation}
for which the resetting is nonsymmetric.
The spike-asymmetry can be quantified with a real, positive parameter $a>0$ as the ratio
\begin{equation}
a \equiv \frac{v_p}{-v_r}.
\end{equation}
When $a=1$, the resetting rule of the QIF model is symmetric;
this spike-symmetry is implicitly assumed in the transformation between the QIF and the $\theta$-neuron models.
For $a>1$, the spike is net-positive, whereas for $a<1$ it is net-negative, see~\citep{montbrio_pazo_2020}.
The QIF dynamics~\eqref{eq:app_QIF1} with parameters \eqref{eq:bio_param} yields a spike-asymmetry of $a \approx 3.4$.

To determine the collective dynamics with nonsymmetric spikes beyond the Lorentzian manifold and in terms of the three complex variables $\Phi,\lambda,\sigma$, one can follow~\cite{montbrio_pazo_2020} and consider the following derivation as an approximation of the actual network dynamics for finite apex and repolarization values $v_p = -a v_r < \infty$.
The firing rate $R$ will still be computed as the flux at infinity.
But for the mean voltage $V$, one has to compute
\begin{equation}
 V(t) = \lim_{v_r\to -\infty} \int_{v_r}^{v_p=-a v_r} v \mathcal W(v,t)dv = \left[ \text{p.v.} \int_{-\infty}^\infty + \lim_{v_r\to -\infty} \int_{v_r}^{-a v_r} \right] v \mathcal W(v,t)dv
\end{equation}
where I split the integral into two parts: one integral with symmetric integration limits (Cauchy principal value), and another one with the remaining integration interval~\citep{montbrio_pazo_2020}.
I have already computed the first integral 
\begin{equation}
V_s = \text{p.v.} \int_{-\infty}^\infty v\mathcal W(v,t)dv
\end{equation} 
with symmetric resetting in Eq.~\eqref{eq:app_VZ}
where I used the identities $\mathcal W(v,t)dv = \mathcal P(\theta,t)d\theta$ and $v=\tan(\theta/2)$.
To compute the second integral, one finds analogously
\begin{align*}
\lim_{v_r\to -\infty} \int_{v_r}^{-a v_r} v \mathcal W(v,t)dv &= \lim_{v_r\to -\infty} \int_{v_r}^{-a v_r} \frac{(\sin\theta) \mathcal P(\theta,t)}{1+\cos\theta} d\theta \\
&=  \frac{\log(a)}{\pi} \left[ 1 + \sum_{n=1}^\infty (-1)^n [Z_n + Z_n^*] \right] = \tau_m \log(a) R(t)\;.
\end{align*}
Hence, the mean voltage $V$ with asymmetric spikes $a>0$ becomes
\begin{equation}
V(t) = V_s(t) + \tau_m \log(a) R(t) \;.
\label{eq:app_V_to_Vs}
\end{equation}

The relation~(\ref{eq:app_RV_Pls}) between the complex variables $\Phi,\lambda,\sigma\in \mathbb C$ and the firing rate $R$ and mean voltage $V$ then becomes
\begin{equation}
\Phi + \lambda \frac{\mathcal M(-\sigma)}\sigma =  1 + 2\sum_{n=1}^\infty (-1)^n Z_n =  \pi\tau_m R - i V_s = \tau_m [\pi + i \log(a)] R - iV,
 \label{eq:app_RV_Pls_asymm}
\end{equation} 
with $\mathcal M(k)$ as defined above and
the dynamics of $\Phi, \lambda, \sigma$ continue to be governed by system~\eqref{eq:app_Pls}.
On the Lorentzian manifold, $\lambda\to0$, we then obtain the firing rate equations \citep[see also][]{montbrio_pazo_2020}
\begin{subequations}
\begin{align}
\tau_m\dot{R} &= \frac{\gamma}{\pi} + 2RV_s\;, \\
\tau_m\dot{V}_s &= V_s^2 - (\pi \tau R)^2  + I_\text{syn} + I_0\;.\label{eq:app_VFRE_asymm}
\end{align}
\label{eq:app_FRE_asymm}
\end{subequations}
The two ordinary differential equations for the mean firing rate R and the auxiliary variable 
$V_s= V - \tau_m\log(a) R $
describe the dynamics of the ensemble exactly in the limit of infinite $N$, $v_p$, and $-v_r$.
The dynamics~\eqref{eq:app_FRE_asymm} for asymmetric spikes takes on an identical form as the RV dynamics~\eqref{eq:FRE} in the main text. 
For instantaneous global coupling via $\delta$-spikes (i.e.~Dirac $\delta$-pulses emitted at the presynaptic spike times), $I_\text{syn}$ is proportional to $\tau_m R$ and the spike asymmetry only shifts the mean membrane potential as in \cref{eq:app_V_to_Vs}, but has no qualitative effect on the collective dynamics.
For electrical coupling via gap junctions of strength $g>0$, the asymmetric spike effectively introduces a (linear) virtual chemical coupling as $I_\text{syn}$ now incorporates the additional term $ g \tau_m\log(a) R$ and thus influences the onset of collective oscillations~\citep{montbrio_pazo_2020}.
For instantaneous pulse-coupling via pulses $p_{r,\varphi,\psi}$ of strength $J$, the asymmetric spike will influence the collective dynamics in a non-trivial way as the recurrent synaptic input $I_\text{syn}$ in \cref{eq:app_FRE_asymm} depends on the mean pulse activity 
\begin{equation}
    I_\text{syn} = J P_{r,\varphi,\psi} (R,V) = J P_{r,\varphi,\psi} (R,V_s + \tau_m \log(a) R) \; .
    \label{app:eq_VFRE_asymm_Isynpulse}
\end{equation}
The explicit voltage-dependence of the pulse-coupling thus introduces a virtual chemical coupling mediated by the firing rate $R$, which enters the RV dynamics \eqref{eq:app_FRE_asymm} in a highly nonlinear way and in interplay with $V_s$ via \cref{eq:p_mean}.

It is an open question how this nonlinear virtual coupling affects the collective dynamics.
Another open question concerns the simulation of QIF neurons with finite $v_p$ and $v_r$ interacting via smooth pulses $p_{r,\varphi,\psi}(2 \arctan v)$, as these pulses technically require that the voltage variable $v\in(-\infty,\infty)$ can take on arbitrary values. 
It remains to be shown how the collective dynamics of such microscopic network simulations correspond to the exact macroscopic theory, \cref{eq:app_FRE_asymm,app:eq_VFRE_asymm_Isynpulse}.

\end{document}